\definecolor{royalblue}{HTML}{0000CD} 
\newtheorem{proposition}{\normalfont\scshape Proposition}[]
\newtheorem{Lemma}{\bf Lemma}{}
\newcommand\Tstrut{\rule{0pt}{2.6ex}}         
\begin{document}

\title{On the Convergence of Credit Risk in Current Consumer
Automobile Loans\protect\footnote{We thank conference participants at the
2023 Joint Statistical Meetings (Toronto),
	Fifth Biennial Auto Lending Conference (Federal Reserve Bank of Philadelphia's
	SURF, CFI), the 2023 New England Statistics Symposium (Boston University),
	the 2023 Boulder Summer Conference on Consumer Financial Decision Making
	(University of Colorado Boulder Leeds School of Business) and
seminar participants at Bentley University, the University of Connecticut
(School of Business, Department of Finance), the University of Connecticut
(Department of Economics), Universit\'{e} Concordia University (John Molson
School of Business, Department of Finance),
and Joseph Golec, Brian Melzer,
Jonathan A. Parker, and Jonathan Zinman.  These acknowledgments should not be
mistaken for endorsements; any remaining errors are the sole responsibility of
the authors.  Jackson P. Lautier was employed with Prudential Financial,
Inc.\ (2010-2019) but believes in good faith there are no conflicts of
interest. The other authors have no potential conflicts to disclose.
Jackson P. Lautier's work was supported by a National Science Foundation
Graduate Research Fellowship under Grant No. DHE 1747453.}}

\author{
  Jackson P. Lautier\footnote{Department of Mathematical Sciences,
  Bentley University}
  \thanks{Corresponding to Jackson P. Lautier, Bentley University,
	Room 321, Morison Hall, 175 Forest Street, Waltham, MA 02452; e-mail:
	\href{mailto:jlautier@bentley.edu}{jlautier@bentley.edu}.}
  \and
  Vladimir Pozdnyakov\footnote{Department of Statistics, University of
  Connecticut}
  \and
  Jun Yan\footnotemark[4]
}


%
%

\maketitle

\begin{abstract}
Loan seasoning and inefficient consumer interest rate refinance behavior are
well-known for mortgages.  Consumer automobile loans, which are collateralized
loans on a rapidly depreciating asset, have attracted less attention, however.
We derive a novel large-sample statistical hypothesis test suitable for loans
sampled from asset-backed securities to populate a transition matrix between
risk bands.  We find all current risk bands eventually converge to a
super-prime credit, despite remaining underwater. Economically, our results
imply borrowers forwent \$1,153-\$2,327 in potential credit-based savings
through delayed prepayment. We present an expected present value analysis to
derive lender risk-adjusted profitability.  Our results appear robust to
COVID-19.\newline
{\centering {\it JEL}: C58, D11, D12, G32, G51, G53}
\end{abstract}

\clearpage


\doublespacing

\section{Introduction}

In chronicling cumulative loss curves
for securitization pools of individual consumer automobile
loans, there is a familiar pattern every junior credit analyst can sketch
from memory: an initial rise in the early months of the securitization followed
by a sustained flattening in the curve once the pool eventually settles into
its long-term steady state.  In higher risk or \textit{subprime} pools of
borrowers, the eventual cumulative loss percentage might be many multiples
higher than
lower risk or \textit{prime} pools of borrowers, but the overall shape follows
the familiar natural $\log$-like pattern.\footnote{
Junior credit analysts are trained to look for any sudden upward deviations in
the historical pattern, or {\it peel back},
which may indicate a rapid deterioration in the performance of the loans. 
}
We illustrate three such securitization loss curves in
Figure~\ref{fig:loss_curves}.
It is peculiar that the loss curves all eventually flatten to a similar degree.
This suggests an eventual equivalence in the instantaneous default rate
conditional on survival, despite the notable cumulative differences between
the loss curves.

This is the concept of {\it loan seasoning}, which is well-documented for
residential mortgages \citep[e.g.,][]{adelino_2019}.  Because consumer
auto loans are collateralized loans, it is natural to suspect they would
behave like mortgages.
Unlike residential homes, however, the automobile is a rapidly depreciating
asset \citep{storchmann_2004}.  This curiosity warrants additional study: how
does conditional credit risk behave for collateralized consumer
loans with rapidly
depreciating collateral values, such as consumer automobile loans?
Such a question has importance, given total issuance north of \$200 billion
in consumer auto asset-backed securities (ABS) \citep{sifma_2022}
and U.S.\ consumer automobile
debt totaling over \$1,400 billion \citep{fed_stats}.

To study the conditional credit risk of a current consumer automobile loan,
we derive a new estimation technique from the asymptotic properties of
large sample statistics.  Our novel approach allows for estimates of the
exact moment a loan's future risk profile changes.  To our knowledge, such
estimates have not yet been presented for consumer auto loans. Further,
our
methods are appropriately calibrated for discrete loan data sampled from ABS.
Hence, the techniques presented herein may have more widespread future
applications given \citet{reg_ab2}, an abundant public source of ABS
data.\footnote{
Indeed, an ancillary intention of this work is to promulgate the utility of
\citet{reg_ab2} among academic researchers.
}

Because we are able to estimate a complete transition matrix between consumer
risk bands using our novel methods, we may further assess the economic
implications to borrowers vis-\`{a}-vis a changing risk profile. We find that
current borrowers who transition into superior risk bands are slow to seek out
a {\it credit-based} refinance, all else equal.\footnote{
Traditionally, a borrower's credit risk profile is assessed upon the initial
loan application,
and the borrowing cost comes through in the form of the annual percentage rate
(APR) on the loan contract. This is the common practice of
{\it risk-based pricing} \citep[e.g.,][]{edelberg_2006, phillips_2013}.
Because the APR is set at the onset of the loan contract, it is possible that
consumers who remain current at the contract's origination APR may eventually
overpay in comparison to a credit-base market rate APR that reflects an updated
risk assessment, all else equal.
}
This differs from the
well-documented observation that borrowers refinance inefficiently in
mortgages given changes to {\it interest rates}
\citep[e.g.,][]{keys_2016, agarwal_2017, andersen_2020}.  Hence, our findings
reflect further potential consumer behavioral inefficiencies, potentially
exacerbating any lost savings due to interest rate-based refinance
inefficacy.
Of note, our conditional credit risk results present despite the rapidly
depreciating collateral value of used autos.  This observation runs counter
to traditional loan-to-value (LTV) default behavior expectations
\citep[e.g.,][]{campbell_2015}.

\begin{figure}[t!]
    \centering
    \includegraphics[width=\textwidth]{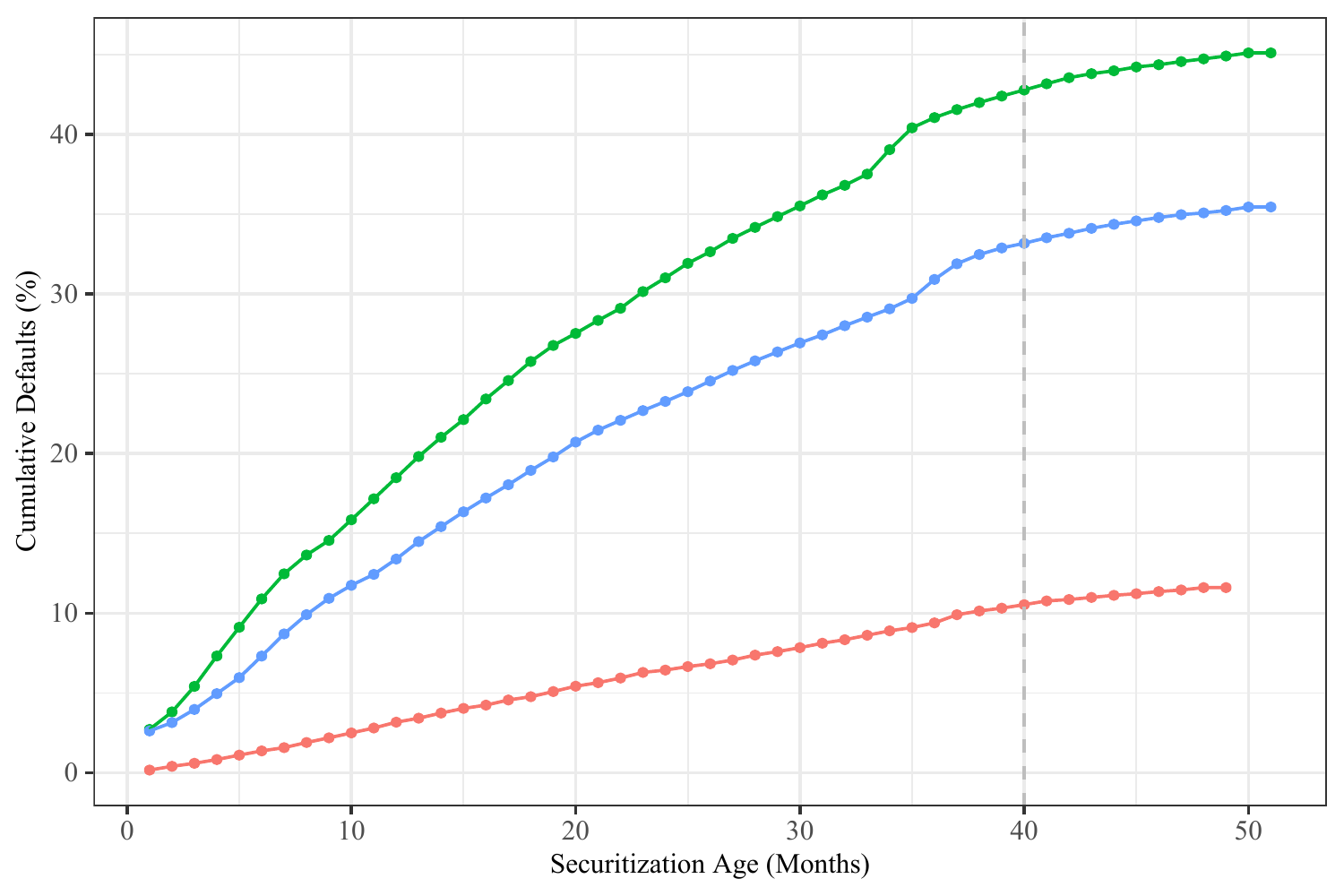}
    \caption{
\footnotesize{    
\textbf{Classical Consumer Automobile Securitization Loss Curves.}\newline
A plot of three securitization loss curves: the cumulative count (\%) of
defaults against securitization age (months).
The higher two loss curves correspond to riskier (i.e., {\it subprime}) pools
of loans in terms of traditional credit metrics, and the lower curve juxtaposes
these subprime pools to a {\it prime} risk pool.
It is striking that all curves eventually flatten (e.g., 
approximately after the dashed vertical line),
despite the large differences in underlying borrower credit
quality and rapid deterioration of collateral value.
}
}
    \label{fig:loss_curves}
\end{figure}

We thus present a paper with two intertwined contributions.  The first is
a new financial econometric technique to statistically demonstrate when risk
converges between risk bands for current loans.  The second is a
subsequent analysis on consumer automobile loans using our novel
financial econometric tools to indicate apparent market inefficiencies
vis-\`{a}-vis credit-based refinance, assess COVID-19's impact by risk band,
estimate lender expected risk-adjusted profitability, and study prepayment
behavior by risk band.  We are thus within the space of consumer
finance, and it all derives from novel methods to empirically validate a
current borrower's declining credit risk over time.

To expound, we first derive
a financial econometric hypothesis testing technique to
estimate the exact loan age at which the instantaneous credit risk conditional
on survival (i.e., conditional on a loan being current) between two different
risk bands becomes equivalent (i.e., converges).\footnote{
It should be stated that two borrowers from different risk bands that
completely repay their loans will trivially converge in credit risk to a
zero probability of default.
The major analysis, therefore, is devising a method to formally demonstrate
if and when such convergence occurs prior to loan termination (and, if so,
its financial implications).
}
Our methods rely on large sample
asymptotic statistics and a survival (or time-to-event) analysis tool known as
the hazard rate.  The hazard rate measures the probability of an event
conditional on survival, and it is thus the ideal quantity of interest for a
current loan analysis.  Due to some incomplete data challenges in working with
loan data sampled from securitization pools, we take some effort to arrive at
an estimator that is theoretical suitable for our application.  In particular,
we utilize a competing risks framework to estimate a cause-specific hazard
(CSH) rate, which allows us to model both conditional default and conditional
prepayment probabilities.\footnote{
All of our analysis adjusts for prepayment
behavior. In other words, while we consider both prepayments and repayments
as ``non-defaults", the distribution of loan lifetimes, and therefore all
subsequent analysis, is adjusted for the timing of observed prepayments.
}

The estimator we utilize also has convenient large sample properties, which we
state formally in Appendix~\ref{subsec:csh_props} and prove in the
Online Appendix~\ref{subsec:proof2}
(a contribution to the statistical literature in
its own right).  Specifically, it yields a large sample hypothesis test
to determine if the CSH rate for default between two different risk
bands is different at a statistically significant level.  If not, then we
cannot claim the CSH rates are different between the two risk bands.  In other
words, a failure to reject the null hypothesis suggests conditional default
risk between different risk bands has converged.\footnote{
It may be of assistance to note our process is the inverse of a usual
statistical analysis, in which researchers actively search for statistical
significance.  That is, our failure to reject the null is the indicator for a
potential convergence rather than a rejection of the null.
}
An advantage of our approach is that it is completely data-driven. In
other words, the data alone informs our distributional estimates, which
may then serve as a benchmark from which future economic research may be
calibrated.
A further advantage is that we may utilize data sampled from ABS pools, which
opens a rich data source \citep{reg_ab2} beyond traditional direct
consumer loan data (see Section~\ref{sec:data} for data details).

Using our techniques, we find evidence of convergence beginning between
disparate risk bands of 72-73 month auto loans after just one year. Even for
risk bands with large differences in their initial credit risk assessment,
converge in conditional credit risk occurs well before scheduled termination
(e.g., subprime loans behave like super-prime credits after 42-48 months of
payments).  For complete risk band transition matrix details, see
Section~\ref{subsec:emp_res} and Table~\ref{tab:risk_conv_mat}.  We find these
results are robust to sensitivity analysis considering the economic impact of
COVID-19 (see Section~\ref{subsec:COVID}; itself a topic of stand alone
interest), collateral type, and the business
model of the loan originator's parent company (for the latter two, see
Section~\ref{subsec:new_cars}).  Notably, because collateralized loans on
used autos have rapidly depreciating collateral values, these results cannot
be explained by traditional LTV default behavior expectations (see
Figure~\ref{fig:ltv_by_age}).
The entirety of the methodological treatment
and subsequent data analysis to estimate the point two different risk
bands converge in a go-forward assessment of credit risk, i.e.,
the {\it credit risk convergence} analysis,
may be found in Section~\ref{sec:CRC}.

The second intended contribution is a multi-part applied study of the financial
implications of our credit risk convergence and novel methodological results.
In Section~\ref{subsec:lend_prof}, we apply the CSH rates for default we
estimate in Section~\ref{subsec:emp_res} within an actuarial analysis to solve
for an expected risk-adjusted rate of lender profitability conditional on loan
survival.  We find that lender profits are
back-loaded, which is consistent with the insurance-like pricing for
pools of risky loans.  Because of the competing risks methodology, the CSH
rates for default are adjusted for conditional prepayments.
We consider the consumer perspective in Section~\ref{subsec:cons_perp}.
To first estimate the potential savings available to consumers,
we assume the average borrower in one
risk band refinanced at the average rate in a superior risk band, once
eligible based on our credit risk convergence point estimates
(ceteris paribus).  Section~\ref{subsec:cons_perp} then briefly assess
potential motivations of observed borrower behavior.

We find that the riskiest borrowers (deep subprime, subprime) can potentially
save between \$11-61 dollars in monthly payments
or \$193-1{,}616 in total by refinancing.  Our estimates suggest deep subprime
and prime borrowers should refinance after about 42-50 months, when they
become prime borrowers.  We find evidence that these
borrowers generally wait too long to refinance.  In a surprise, we find that
less risky loans (near-prime, prime) leave even more money on the table, with
total savings ranging from \$160-2{,}327 (or \$13-56 in monthly payments).
Our estimates suggest that near-prime and prime borrowers should refinance
quickly, after about only one year, but they also generally wait too long.
Hence, in a
result counter to expectations about borrower sophistication, it is the
near-prime and prime loans that behave less efficiently.\footnote{
One benefit of greater affluence is the mental freedom that 
accompanies an ability to overpay with limited consequences.  We thank
Susan Woodward for this observation.
}
These savings are attributable to a potential credit-based refinance, which
differs from the traditional interest rate-based refinance analysis. Given
that consumer automobile loans remain deep underwater close to termination
(see Figure~\ref{fig:ltv_by_age}), such conditional credit performance is
not just an artifact of becoming {\it in-the-money}
\citep[e.g.,][]{deng_1996}.
These results suggest potential economic interpretations.  As such,
Section~\ref{subsec:cons_perp} closes by opining market frictions may
exist that prevent both borrowers and lenders alike from reducing these
suspected consumer auto refinance market inefficiencies.
In hopes of encouraging related research, we also
proffer that lenders may consider offering
new loan products that reward borrowers for good performance or potential
regulator interventions.

Our study aligns with previous studies in the consumer automobile
lending space. \citet{heitfield_2004} present an initial competing risks study
of borrower behavior within subprime auto loans.
We find similar prepayment behavior acceleration with
loan age, and we also find subprime borrowers are sensitive to aggregate
shocks.  We closely consider conditional default risk based on survival,
however, which provides an alternative perspective to the analysis of
\citet{heitfield_2004}.
\citet{agarwal_2008} use a competing risks model to consider automobile choice
and ultimate borrower loan performance.  They find borrowers that select
luxury automobiles have a higher probability of prepayments, and loans on most
economy automobiles have a lower probability of default.  We again focus our
analysis on default rates conditional on survival, however.  Our statistical
methods are also more precisely attuned to discrete time.  Further, our
methods demonstrate how to utilize ABS data from Reg AB II
\citep{reg_ab2}, whereas
\citet{agarwal_2008} uses conventional direct loan-level data.  See also
\citet{agarwal_2007}, which is a close relative to \citet{agarwal_2008}.

More broadly within consumer automobile research,
there is evidence that consumers are subject to
various forms of troubling economic behavior.  For example, racial
discrimination has been found in studies that span decades
\citep[e.g.,][]{ayres_1995, edelberg_2007, butler_2022}.  For an overview of
the used car industry and the challenges presented to poor consumers in
purchasing and keeping transportation, see \citet{karger_2003}.
\citet{adams_2009} look at the effect of borrower liquidity on short-term
purchase behavior within the subprime auto market.  Namely, they observe
sharp increases in demand during tax rebate season and high sensitivity to 
minimum down payment requirements.  \citet{grunewald_2020} find that
arrangements between auto dealers and lenders lead to incentives that increase
loan prices.  They also find consumers are less responsive to finance charges
than vehicle charges and that consumers benefit when dealers do not have 
discretion to price loans.  While consumer auto loans and subprime borrowers 
have attracted the attention of previous researchers,
we again do not find consideration
of the borrower risk profile over the lifespan of the loan.
Within this backdrop, therefore, the results of
Section~\ref{subsec:cons_perp} find an additional challenge to consumers
with auto loans in that such borrowers struggle to recoup additional
potential savings available from a credit-based refinance.

There are related results within mortgages.
\citet{deng_2000} employ a competing risks model within the mortgage
market.  They find similar conditional default behavior.
Within the mortgage market, see also
\citet{calhoun_2002}.  \citet{ambrose_2003} apply a competing risks model to
commercial loans underlying commercial mortgage-backed securities.  
Recent literature
reviews provide a more complete examination of default behavior as a function
of mortgage age \citep[e.g.,][]{jones_2019}.
None of these studies are specific to consumer study automobile loans,
however, which are conversely loans subject to rapidly depreciating
collateral values.

\section{Data}
\label{sec:data}

On September 24, 2014, the Securities and Exchange Commission (SEC) adopted 
significant revisions to Regulation AB and other rules governing the offering,
disclosure, and reporting for ABS \citep{reg_ab2}.
One component of these large scale revisions, which took effect November 23, 
2016, has required public issuers of ABS to make freely available pertinent 
loan-level information and payment performance on a monthly basis 
\citep{cfr_229}.  We have utilized the
Electronic Data Gathering, Analysis, and Retrieval (EDGAR) system operated by
the SEC
to compile complete loan-level performance data for the consumer 
automobile loan ABS bonds CarMax Auto Owner Trust 2017-2 \citep{cmax_2017}
(CARMX), Ally Auto Receivables Trust 2017-3 \citep{aart_2017} (AART),
Santander Drive Auto Receivables Trust 2017-2 \citep{sdart_2017} (SDART),
and Drive Auto Receivables Trust 2017-1 \citep{drive_2017} (DRIVE).
By count, the total number of
loans for CARMX, AART, SDART, and DRIVE were 55,000, 67,797, 80,636, and
72,515, respectively.

The bonds were selected because of the credit profile of the underlying loans,
the lack of a direct connection to a specific auto manufacturer, and the
observation window of each bond's performance spanning approximately the same
macroeconomic environment. We elaborate on each point in turn.
The credit profile of a DRIVE borrower
is generally deep subprime to subprime, SDART is subprime to near-prime, CARMX
is near-prime to prime, and AART is prime to super-prime.\footnote{
The standard definitions of the terms deep subprime to super-prime stem from
the borrower's credit score.  Specifically, credit scores below 580 are
considered ``deep subprime", credit scores between 580-619 are ``subprime",
620-659 is ``near-prime", 660-719 is ``prime", and credit scores of 720 and
above are ``super-prime" \citep{cfpb_2019}.}  Thus, the collection of all four
bonds taken together span the full credit spectrum of individual borrowers.
Figure~\ref{fig:2017_summary} in Section~\ref{subsec:summary}
provides additional details.  Next, it is
common that an auto manufacturer will originate loans using its financial
subsidiary (e.g., Ford Credit Auto Owner Trust).  The bonds selected do not
have a direct connection to a specific auto manufacturer, however, and so we
may allay concerns our convergence point estimates
may be influenced by oversampling loans secured
by a specific brand of automobile.\footnote{
We acknowledge the business objectives of CarMax, a used auto sales company,
differ from the traditional banks of Santander and Ally.
We sensitivity test this point in the
robustness checks of Section~\ref{subsec:new_cars}.
}
Lastly, the bonds were selected
to span approximately the same months to ensure all underlying loans were 
subject to the same macroeconomic environment.  Specifically, CARMX, AART,
SDART, and DRIVE began actively paying in March, April, May, and April of 2017, respectively, and each trust was active for 50, 44, 52, and 52 months,
respectively.


\subsection{Loan Selection and Defining Risk Bands}
\label{subsec:loan_filter}

To ensure the underlying loans in our analysis are as comparable as possible,
we employ a number of filtering mechanisms.  First, we remove any loan 
contracts that include a co-borrower.  Second, we require each loan to have
been underwritten to the level of ``stated not verified"
(\texttt{obligorIncomeVerificationLevelCode}), which is a prescribed
description of the amount of verification done to a borrower's stated income 
level on an initial loan application \citep{cfr_229}.
Third, we remove all loans originated 
with any form of subvention (i.e., additional financial incentives, such as 
added trade-in compensation or price reductions on the final sale price).  
We then require all loans to correspond to the sale of a used
vehicle.\footnote{
This was mainly to keep the loans from CARMX, of which used
cars predominate.  We sensitivity test this requirement in the robustness
checks of Section~\ref{subsec:new_cars}.
}
We further drop any loan with a current status of ``repossessed" as of the 
first available reporting month of the corresponding ABS.  Further, to minimize
the chance of inadvertently including a loan that has been previously 
refinanced or modified, we only consider loans younger than 18 months as of
the first available ABS reporting month.  For loan
term, we only include loans with an original term of 72 or 73
months.\footnote{Pragmatically, the most common loan term in the data was 72/73
months, and so our loan term choice allows us to maximize the sample size.}

As a final data integrity check, we remove any loans that did not pay enough
total principle to pay-off the outstanding balance as of the first month the
trust was active and paying but had a missing value (\texttt{NA}) for the
outstanding balance in the final month the trust was active and paying.  In
other words, the loan outcome was not clear from the data; the loan did not
pay enough principal to pay off the outstanding balance nor default but
stopped reporting monthly payment data.  In total, this final data integrity
check impacts only 2{,}630 or 4.3\% of the filtered loan population.   
We are left with 58{,}118 individual consumer auto loan contracts in total,
summary details of which may be found in Section~\ref{subsec:summary}. Complete
replication code and other data details may be found in the online
supplementary material.

Next, we assign each loan into a credit risk category or \textit{risk band} 
depending on the original interest rate
(\texttt{originalInterestRatePercentage}) assigned to the contracted loan. The
interest rate is the ideal measure of perceived borrower risk within a
risk-based pricing framework \citep{edelberg_2006, phillips_2013} because a
borrower's risk profile is a multidimensional function of factors like
credit score, loan amount, down payment percentage (\% down),
vehicle or collateral value, income,
payment-to-income (PTI), etc., in addition to many of the factors of which we have
already filtered.  In other words, given we have already controlled for
prevailing market rates by selecting loans originated within a close
temporal proximity, the interest rate serves as the market's best estimate of a
loan's risk profile.

We now formalize this discussion slightly.  Working from \citet{phillips_2013},
a borrower's interest rate in risk band $a$, $r_a$, is
\begin{equation*}
r_a = r_c + m + l_a,
\end{equation*}
where $r_c$ is the cost of capital, $m$ is the added profit margin, and $l_a$
is a factor that varies by risk band.  The components $r_c$ and $m$ will be
shared by all risk bands, and so there exists some functional relationship
\begin{equation*}
l_a \equiv f(\text{PTI}, \text{\% down}, \text{Loan Amt}, \text{Vehicle Val},
\ldots).
\end{equation*}
Rather than attempt to recover this unknown $f$, therefore, we are in effect
treating the lender's credit scoring model as an accurate reflection of the
borrower's risk.\footnote{Indeed, these models are often quite
sophisticated \citep{einav_2012}.}
Specifically, we assign borrower's with an APR of 0-5\% to the
super-prime risk band, 5-10\% to the prime risk band, 10-15\% to the near-prime
risk band, 15-20\% to the subprime risk band, and 20\%+ to the deep subprime
risk band.  In a review of Figure~\ref{fig:2017_summary} in
Section~\ref{subsec:summary}, we can see that the risk bands assigned by
interest rate compare favorably to the traditional credit score borrower risk
band definition \citep{cfpb_2019}.

\subsection{Summary of Selected Loans}
\label{subsec:summary}

After the data cleaning and filtering of Section~\ref{subsec:loan_filter},
we have payment performance for 58{,}118 consumer auto 
loans that span a wide range of borrower credit quality based on the 
traditional credit score metric.  Figure~\ref{fig:2017_summary} presents a
summary of each bond by obligor credit score
and interest rate as of loan origination.  Judging by credit score,
we can see that generally DRIVE is a
deep subprime to subprime pool of borrowers, SDART is a subprime to near-prime
pool, CARMX is a near-prime to prime pool, and AART is a prime to super-prime
pool of borrowers \citep{cfpb_2019}.  As expected, in a risk-based pricing
framework, the density plot of each borrower's interest rate has an inverse
relationship to the density plot of each borrower's credit score: lower credit
scores correspond to higher interest rates (compare the first two rows of
Figure~\ref{fig:2017_summary}).  As such, we can see the annual percentage
rates (APRs) are higher for the DRIVE and SDART bonds, generally sitting
within a range around 20\% and then declining to under 15\% for CARMX and 
finally under 10\% for AART.  The bottom two rows of
Figure~\ref{fig:2017_summary} demonstrate that defining risk bands by interest
rate corresponds closely to the traditional credit score risk band
definitions \citep{cfpb_2019}, as the expected inverse relationship holds.

\begin{figure}[t!]
    \centering
    \includegraphics[width=\textwidth]{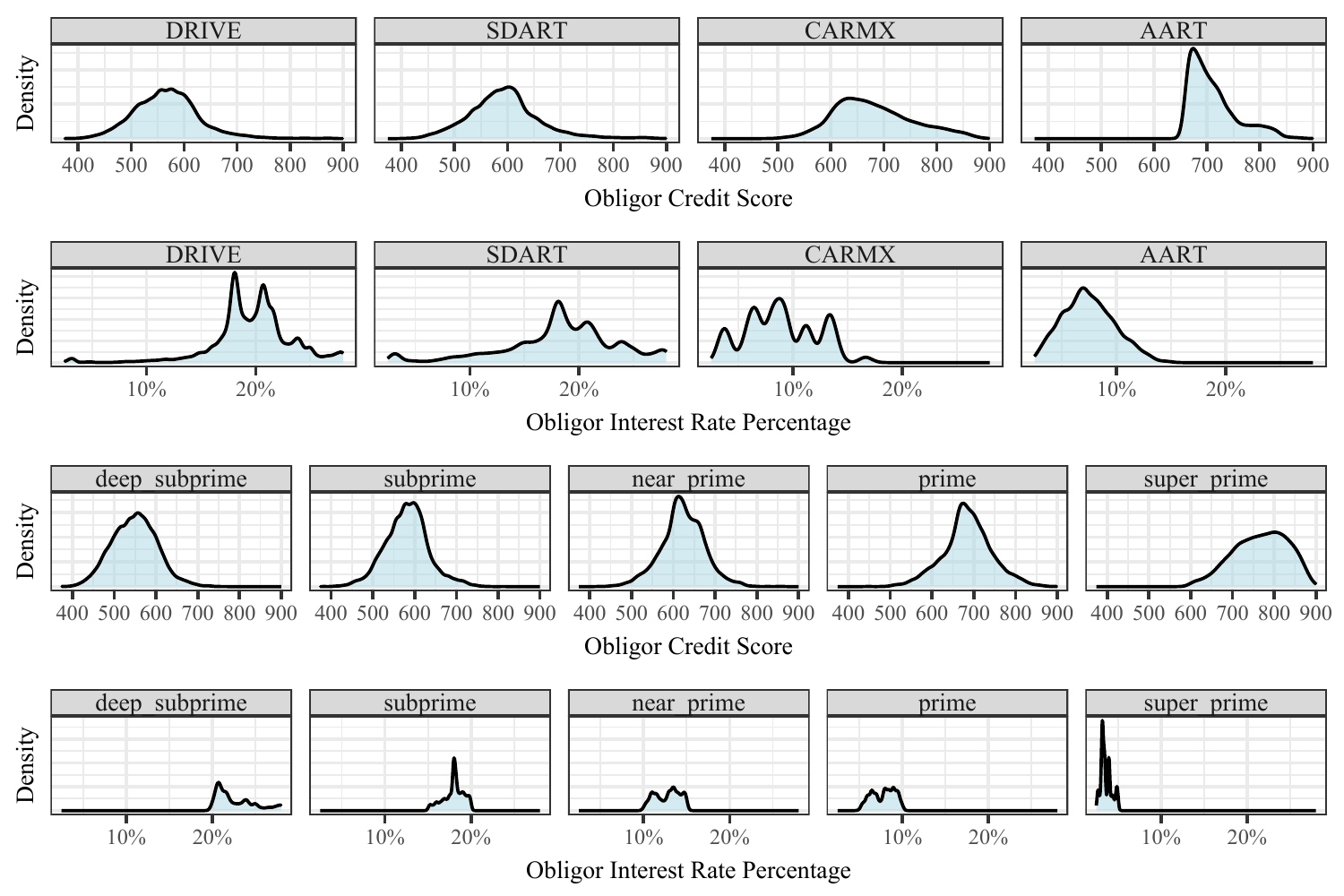}
    \caption{
\footnotesize{     
    \textbf{Borrower Credit Profile and APR by Bond, Risk Band.}\newline
    Borrower credit profiles (1st row) and charged APR
    (2nd row) of the 58{,}118 filtered
    consumer automobile loans used in the analysis
    of Sections~\ref{sec:CRC} and~\ref{sec:finc_imp}
    by ABS bonds CarMax Auto Owner Trust
    2017-2 \citep{cmax_2017} (CARMX, 6{,}835), Ally Auto Receivables Trust
    2017-3 \citep{aart_2017} (AART, 2{,}171), Santander Drive Auto Receivables
    Trust 2017-2 \citep{sdart_2017} (SDART, 20{,}192), and Drive Auto
    Receivables Trust 2017-1 \citep{drive_2017} (DRIVE, 28{,}920).
    Distribution of credit scores (3rd row) and interest rates (4th row),
    by APR-based risk band classification:
    super-prime (0-5\%), prime (5-10\%), near-prime (10-15\%),
    subprime (15-20\%), and deep subprime (20\%+)
    for the same set of 58{,}118 loans.
}
}
    \label{fig:2017_summary}
\end{figure}

The loans are well dispersed geographically among all 50 states and Washington,
D.C., with the top five concentrations of Texas (13\%), Florida (12\%), 
California (9\%), Georgia (7\%), and North Carolina (4\%).  Similarly, the 
loans are well diversified among auto manufacturers, with the top five
concentrations of Nissan (13\%), Chevrolet (10\%), Ford (7\%), Toyota (7\%),
and Hyundai (7\%).  Thus, our sample is not overly
representative to one state-level economic locale or auto manufacturer.
For additional details on the makeup of the loans, see the
associated prospectuses
\citep{aart_2017,cmax_2017,drive_2017,sdart_2017}.

Table \ref{tab:bond_summary} provides a summary of borrower counts by bond and
performance.  The total pool of 58{,}118 loans is weighted towards deep 
subprime and subprime borrowers, which are each 37\% of the total and together
74\%.
Similarly, DRIVE and SDART supply around 85\% of the total loans
in our sample.  The smallest risk band is super-prime, which totals 2{,}179
loans for 4\% of the total of 58{,}118.\footnote{
Our asymptotic results scale by sample size, so the confidence
interval width adjusts appropriately.
}

In terms of loan performance, we can observe some clear trends in Table
\ref{tab:bond_summary}.  First, more than half of all deep subprime risk band
loans defaulted,\footnote{
We use a strict definition of default in that three consecutive missed payments
is a default.  This was defined within our code (see 
Online Appendix~\ref{subsec:loan_algo}) to ensure a consistent default definition
between servicers.
}
and this percentage declines by risk band until super-prime,
in which only 4\% of loans defaulted during the observation window. We also see
that performance is fairly consistent by risk band, even among different bonds.
For example, super-prime default percentages are within a tight range (3-6\%)
across each bond.  The same may be said for deep subprime defaults.  We see
some wider ranges in the default percentages of the subprime (33-40\%), prime
(8-19\%), and near-prime (17-24\%) risk bands by bond,
but they remain close enough to suggest there is not a worrisome
difference between the credit scoring models employed by each different issuer.
Overall, the percentage of defaulted loans declines as the credit quality of
the risk band increases.  This is further evidence that our APR-based risk band
definition has yielded appropriate classification results.

\begin{table}[t!]
\caption{
\footnotesize{
\textbf{Borrower Counts by Risk Band, Bond, and Loan Outcome.}\newline
This table reports the summary statistics and loan outcomes of the 58{,}118 filtered
consumer automobile loans summarized in Figure~\ref{fig:2017_summary}. Table specific
abbreviations are DRIVE (DRV), SDART (SDT), CARMX (CMX) and AART (AAT).
Percentages may not total to 100\% due to rounding.
}
}
{
\begin{adjustbox}{max width=\textwidth}
\begin{tabular}{cccccccc}
        && deep subprime & subprime & near-prime & prime & super-prime & Total\\
        \toprule
        &Total & 21,630 (37\%) & 21,332 (37\%) & 6,677 (11\%) 
        & 6,300 (11\%) & 2,179 (4\%) & 58,118 (100\%)\Tstrut\\
        \midrule
        &DRIVE & 14,079 (65\%) & 12,884 (60\%) & 1,443 (22\%) & 220 (3\%) 
        & 294 (13\%) & 28,920 (50\%)\\
        &SDART & 7,551 (35\%) & 8,327 (39\%) & 2,782 (42\%) & 861 (14\%) 
        & 671 (31\%) & 20,192 (35\%)\\
        &CARMX & 0 (0\%) & 120 (1\%) & 2,128 (32\%) & 3,752 (60\%) 
        & 835 (38\%) & 6,835 (12\%)\\
        &AART & 0 (0\%) & 1 (0\%) & 324 (5\%) & 1,467 (23\%) & 379 (17\%) 
        & 2,171 (4\%)\\
        \hline
        &Total & 21,630 (100\%) & 21,332 (100\%) & 6,677 (100\%) 
        & 6,300 (100\%) & 2,179 (100\%) & 58,118 (100\%)\Tstrut\\
        \midrule
        &Defaulted & 11,210 (52\%) & 7,900 (37\%) & 1,422 (21\%) & 
        624 (10\%) & 92 (4\%) & 21,248 (37\%)\\
        &Censored & 3,547 (16\%)  & 4,599 (22\%) & 1,997 (30\%) & 
        2,556 (41\%) & 948 (44\%) & 13,647 (23\%)\\
        &Repaid & 6,873 (32\%) & 8,833 (41\%) & 3,258 (49\%) & 
        3,120 (50\%) & 1,139 (52\%) & 23,223 (40\%)\\
        \hline
        &Total & 21,630 (100\%) & 21,332 (100\%) & 6,677 (100\%) 
        & 6,300 (100\%) & 2,179 (100\%) & 58,118 (100\%)\Tstrut\\
        \midrule
        \multirow{3}{*}{\rotatebox[origin=c]{90}{\tiny{DRV}}} 
        & Defaulted & 7,518 (53\%) & 5,115 (40\%) & 351 (24\%) & 42 (19\%) 
        & 14 (5\%) & 13,040 (45\%)\\
        & Censored & 2,214 (16\%) & 2,641 (20\%) & 324 (22\%) & 60 (27\%)
        & 119 (40\%) & 5,358 (19\%)\\
        & Repaid & 4,347 (31\%) & 5,128 (40\%) & 768 (53\%) & 118 (54\%)
        & 161 (55\%) & 10,522 (36\%)\\
        \hline
        & Total & 14,079 (100\%) & 12,884 (100\%) & 1,443 (100\%) 
        & 220 (100\%) & 294 (100\%) & 28,920 (100\%)\Tstrut\\
        \midrule
        \multirow{3}{*}{\rotatebox[origin=c]{90}{\tiny{SDT}}} 
        & Defaulted & 3,692 (49\%) & 2,740 (33\%) & 590 (21\%) & 105 (12\%) 
        & 29 (4\%) & 7,156 (35\%)\\
        & Censored & 1,333 (18\%) & 1,915 (23\%) & 715 (26\%) & 255 (30\%)
        & 299 (45\%) & 4,517 (22\%)\\
        & Repaid & 2,526 (33\%) & 3,672 (44\%) & 1,477 (53\%) & 501 (58\%)
        & 343 (51\%) & 8,519 (42\%)\\
        \hline
        & Total & 7,551 (100\%) & 8,327 (100\%) & 2,782 (100\%) & 861 (100\%) 
        & 671 (100\%) & 20,192 (100\%)\Tstrut\\
        \midrule
        \multirow{3}{*}{\rotatebox[origin=c]{90}{\tiny{CMX}}} 
        & Defaulted & 0  & 45 (38\%) & 427 (20\%) & 296 (8\%) 
        & 25 (3\%) & 793 (12\%)\\
        & Censored & 0 & 43 (36\%) & 854 (40\%) & 1,736 (46\%)
        & 392 (47\%) & 3,025 (44\%)\\
        & Repaid & 0 & 32 (27\%) & 847 (40\%) & 1,720 (46\%)
        & 418 (50\%) & 3,017 (44\%)\\
        \hline
        & Total & 0 & 120 (100\%) & 2,128 (100\%) & 3,752 (100\%) 
        & 835 (100\%) & 6,835 (100\%)\Tstrut\\
        \midrule
        \multirow{3}{*}{\rotatebox[origin=c]{90}{\tiny{AAT}}} 
        & Defaulted & 0  & 0 (0\%)  & 54 (17\%) & 181 (12\%) 
        & 24 (6\%) & 259 (12\%)\\
        & Censored & 0  & 0 (0\%)  & 104 (32\%)  & 505 (34\%)
        & 138 (36\%) & 747 (34\%)\\
        & Repaid & 0  & 1 (100\%) & 166 (51\%)  & 781 (53\%)
        & 217 (57\%) & 1,165 (54\%)\Tstrut\\
        \hline
        & Total & 0 & 1 (100\%) & 324 (100\%) & 1,467 (100\%) & 379 (100\%) 
        & 2,171 (100\%)\Tstrut\\
        \bottomrule
    \end{tabular}
\end{adjustbox}
}
\label{tab:bond_summary}
\end{table}

\section{Credit Risk Convergence}
\label{sec:CRC}

This section comprises the methodological contribution of this work:
a new financial econometric hypothesis testing technique to estimate the exact
age two different risk bands converge in conditional default risk (i.e., the
point of {\it credit risk convergence}).
We begin with a review of the relevant
statistical results in Section~\ref{subsec:stat_results}.  We will introduce
the field of survival analysis along with its subfield of competing risks
within the context of loan default modeling.
We then present the financial econometric tools we derive in the form of an
estimator and the associated large sample statistical hypothesis test relied on throughout. The formal statements
are available for reference in Appendix~\ref{subsec:csh_props}, and we provide
complete proofs in the Online Appendix~\ref{subsec:proof2}.
We then move to
using these new financial econometric tools
to perform an empirical study of the ABS data of
Section~\ref{sec:data}.  Specifically, the empirical estimates of credit risk
convergence points
may be found in Section~\ref{subsec:emp_res}.  Because our data
spans the economic events of the Coronavirus pandemic, we include related
COVID robustness analysis in Section~\ref{subsec:COVID}.  This section
closes with Section~\ref{subsec:new_cars}, which is
additional sensitivity analysis related the loan filtering of
Section~\ref{subsec:loan_filter}.

\subsection{Relevant Statistical Results}
\label{subsec:stat_results}

From an economic perspective, not all defaults are equivalent.  For example,
there is an obvious profitability difference between a loan that defaults
shortly after it is originated versus a loan that defaults after a much longer
period of time: a loan that makes more payments before defaulting will be more
profitable, ceteris paribus.\footnote{
The same may be said for prepayments.  We note that all of our results adjust
for prepayments, and these prepayment probabilities are estimable using these
techniques.  See Section~\ref{subsec:cons_perp} for additional details.
}
Therefore, we seek a time-to-event distribution estimate, where
the general event of interest is the end of a loan's payments.
Indeed, we require
this information to adequately address our research question centered around
analyzing a loan's conditional probability of default given its survival.
We are thus in the realm of \textit{survival analysis}, which is a branch
of statistics dedicated to estimating a random time-to-event
distribution.\footnote{
As a nuanced but important theoretical point of emphasis, our data is sampled
from pools of consumer automobile loans found in publicly traded ABS
(see Section \ref{sec:data} for details).  Thus, we must consider an
estimator appropriately calibrated to work in both discrete-time
and with incomplete data subject to random left-truncation and
random right-censoring.  For extended details on these incomplete data
challenges with ABS applications, see the discrete-time work of
\citet{lautier_2021} for the case of left-truncation and
\citet{lautier_2023} for the discrete-time case of both
left-truncation and right-censoring.  Neither \citet{lautier_2021} nor
\citet{lautier_2023} allow for competing risks, however.}

In addition to estimating a time-to-event random variable, we also desire to 
distinguish between the type of event.  Again, from an economic perspective,
this is natural: a loan that is repaid (or prepaid) in a given month is
more profitable than a loan that defaults in the same month, ceteris paribus.
Succinctly,
we wish to differentiate between loans ending in default and loans ending in 
prepayment.  To do so, we can define the problem in terms of a
\textit{competing risks} framework, which is a specialized branch of survival
analysis.
We elect to define competing risks in terms of a multistate process,\footnote{
See \citet[][Example III.1.5]{andersen_1993} or \cite{beyersmann_2009}
for an introduction.
}
which allows us to estimate the conditional CSH rates defined in
\eqref{eq:csh} directly.
Formally, we will be using a multistate process adjusted for left-truncation
and right-censoring in discrete-time but over a known, finite time horizon for
two competing events.\footnote{
In effect, we desire to combine the continuous time estimator of
\citet[][Example IV.1.7]{andersen_1993} adjusted for left-truncation and
right-censoring with the complete data discrete case of
\citet[][pg. 94]{andersen_1993} to obtain a competing risks distribution
estimator suitable for discrete-time and adjusted for left-truncation and
right-censoring.
}
Specifically, we will
generalize the discrete-time, left-truncation and right-censoring work of
\citet{lautier_2023} to the case of two competing events: default and 
repayment.

We now present the mathematical details of the estimator in the context of an
automobile loan ABS.  We will follow the
notation of \citet{lautier_2023} and, for completeness, include some details
regarding accounting for incomplete data.
Define the random time until a loan contract ends by the random variable $X$.
The classical quantity of interest in survival analysis is the 
\textit{hazard rate}, which in discrete-time represents the probability of a 
loan contract terminating in month $x$, given a loan has survived until month
$x$.  We denote the hazard rate by the traditional, $\lambda$, and so
formally,
\begin{equation}
\lambda(x) = \Pr(X = x \mid X \geq x) = \frac{ \Pr(X = x) }{ \Pr(X \geq x) }.
\label{eq:haz_rate}
\end{equation}
Because we desire to model the probability of loan payments terminating given a
loan remains current, it is clear that \eqref{eq:haz_rate} is the ideal
quantity of interest.  Additionally, let $F$ represent the cumulative
distribution function (cdf) of $X$.  If we can reliably estimate
\eqref{eq:haz_rate}, we can recover the complete distribution of $X$ by the
uniqueness of the cdf because
\begin{equation}
1 - F(x-) = \Pr(X \geq x) = \prod_{x_{\min} \leq k < x} \{1 - \lambda(k)\},
\label{eq:haz_surv}
\end{equation}
where $x_{\min}$ is the lower bound of the distribution of $X$. We take the
the convention $\prod_{k=x_{\min} + 1}^{x_{\min}} \{1 - \lambda(k)\} = 1$.

We now account for incomplete data.  To address random left-truncation,
let $Y$ represent the
left-truncation random variable, which is a shifted random variable derived
from the random time a loan is originated and the securitized trust begins
making monthly payments.  That is, we observe $X$ if and only if $X \geq Y$.  
We further assume $X$ and $Y$ are independent, an important assumption we now
briefly justify within a securitization context.  The random variable $Y$
represents
the time an ABS first starts making payments.  Typically, the decision to
issue a securitization is more related to investment market conditions and the
financing needs of the parent company than the performance of the underlying 
assets, in this case automobile loans.  In other words, the forming and 
subsequent issuance of an ABS bond has little to do with the time-to-event
distribution of each individual loan, which is represented by $X$.\footnote{
Indeed, this is the main economic motivation of the securitization process.
}
Hence,
the assumption that $X$ and $Y$ are independent is actually quite reasonable
within the context of the securitization process.
To account for right-censoring,
define the censoring random variable as $C = Y + \tau$, where $\tau$ 
is a constant that depends on the last month the securitization is active and 
making monthly payments.  Note that independence between $X$ and $C$ follows
trivially from the assumed independence of $X$ and $Y$.  We thus observe the
exact loan termination time, $x$, if $x \leq C \mid X \geq Y$, and we only know
that $x > C$ if $x > C \mid X \geq Y$.

For those familiar 
with incomplete data from observational studies, we can think of the period
of time the ABS is active and paying as the observation window. Hence, random
left-truncation occurs because we only observe loans that survive long enough to
enter into the trust, and right-censoring occurs because we only observe the exact
termination time of loans that end prior to end of the securitization.
For completeness, we will assume discrete-time because a borrower's monthly
obligation is considered satisfied as long as the payment is received before
the due date.  Therefore, we may assume the recoverable distribution of $X$ is
integer-valued with a minimal time denoted by $\Delta + 1$ for nonrandom
$\Delta \in \{\mathbb{N} \cup 0\}$, where $\mathbb{N}$ denotes the
natural numbers,
and a finite maximum end point, which we denote by $\xi \geq \Delta + \tau$,
for nonrandom $\xi \in \mathbb{N}$.
We emphasize the word \textit{recoverable}, further discussion of which may be
found in \citet{lautier_2021} and \citet{lautier_2023}.

We now generalize \cite{lautier_2023} to the case of two competing risks as 
follows.\footnote{
For statistically inclined readers,
Appendix~\ref{subsec:csh_props} provides formal statements, and the 
Online Appendix~\ref{subsec:proof2} provides complete proofs of these novel
results.
}
First, consider two competing risks as a multistate process,
such as in Section 3 of \citet{beyersmann_2009}.  Formally, let 
$\{Z_x\}_{\Delta + 1 \leq x \leq \xi}$ be a set of random variables with
probability distributions that depend on $x$, $\Delta + 1 \leq x \leq \xi$.
More specifically, given a loan terminates at time $x$, we assume the loan must
be in one of two states, 
$Z_x \in \{1, 2\}$:\footnote{
It may be of help to see the related \citet[][Figure 1]{beyersmann_2009}.
}
\begin{enumerate}
    \item This is the \textit{event of interest}.  Loans move into this state
    if a default occurs.  The probability of moving into state 1 at time $x$ is
    the \textit{cause-specific} hazard rate for state 1, denoted 
    $\lambda^{01}(x)$.
    \item This is the \textit{competing event}.  Loans move into this state if 
    a prepayment occurs.  The probability of moving into state 2 at time $x$ is
    the cause-specific hazard rate for state 2, denoted $\lambda^{02}(x)$. 
\end{enumerate}
The discrete-time CSH\footnote{
We abbreviated cause-specific hazard as CSH earlier in the manuscript.
For ease of exposition, we have used the full term within the definition of the
states of $Z_x$ to unify the competing risks theory.
}
rate is then defined as
\begin{equation}
    \lambda^{0i}(x) = \Pr(X = x, Z_x = i \mid X \geq x) 
    = \frac{ \Pr(X = x, Z_x = i) }{\Pr(X \geq x)},
    \label{eq:csh}
\end{equation}
for $i = 1, 2$.
Conveniently, therefore, from the law of total probability, we have
\begin{align*}
    \lambda(x) = \frac{ \Pr(X = x) }{ \Pr(X \geq x) }
    &= \frac{ \Pr(X = x, Z_x = 1) }{\Pr(X \geq x)} + 
    \frac{ \Pr(X = x, Z_x = 2) }{\Pr(X \geq x)}\\
    &= \lambda^{01}(x) + \lambda^{02}(x).
\end{align*}
Within a competing risks framework, $\lambda(x)$ 
may be referred to as the \textit{all-cause hazard}.\footnote{
It may be illuminating to review Table \ref{tab:X_probs} in the simulation
study of the  Online Appendix~\ref{sec:sim_study} for a numeric 
example of our competing risk model.
To make the
economic connection between loan default risk over time and the cause-specific
hazard for default (cause 01), we will elucidate the probability that the 
cause-specific hazard rate represents.  Suppose the current age of a loan is 
$x$ months, where $\Delta + 1 \leq x \leq \xi$.  Then the quantity 
$\lambda^{01}_{\tau}(x)$ denotes the probability that a loan will end in 
default in month $x$, \textit{given} it has survived at least $x$ months.
Therefore, if the instantaneous (i.e., monthly)
default risk of a current borrower changes over time, a plot of the
month-by-month hazard rate for a given risk band will provide a current risk
estimate for the borrowers within that risk band who have continued to make 
ongoing payments (i.e., ``survived").  If the hazard rate remains constant,
then the monthly default risk does not change as a loan seasons and continues
to remain actively paying.  On the other hand, if the hazard rate declines
(increases), this would suggest that the current month default risk declines 
(increases) as a loan seasons.
}

Given this framework, it is not difficult to account for securitization data
subject to right-censoring and left-truncation along the lines of
\citet{lautier_2023}. Formally, assume a trust consists of 
$n > 1$ consumer automobile loans.  For $1 \leq j \leq n$, let $Y_j$ denote the
truncation time, $X_j$ denote the loan ending time, and $C_j = Y_j + \tau_j$ 
denote the loan censoring time.  Because of the competing events, we also have 
the event-type random variable $Z_{X_j} = i$, where we observe $Z_{X_j}$ given
$X_j$ for $i = 1,2$.\footnote{
We emphasize that the observable data from a trust,
$\{X_j, Y_j, C_j, Z_{X_j}\}_{1 \leq j \leq n}$ differs from the random
variables,
$\{X, Y, C, Z_X\}$.  For example, the random variables $X$ and $(Y,C)$ are
independent, whereas $X_j$ and $(Y_j, C_j)$ clearly are not.  The reference
\citet{lautier_2021} expounds on this point thoroughly.
}
In what follows, we
will use a subscript of $\tau$ where appropriate to remind us that
right-censoring is present in the data.

If we assume independence between $Y$ and the random vector $(X, Z_{X})$
(not at all unreasonable given the securitzation backdrop and our earlier
discussion), then we may
derive estimators for \eqref{eq:csh} along the same lines as
\citet{lautier_2023}.  We demonstrate as follows.
Let $\alpha = \Pr(Y \leq X)$ and for $i = 1, 2$, define
\begin{align*}
    f^{0i}_{*, \tau}(x) &= \Pr(X_j = x, X_j \leq C_j, Z_{X_j} = i)\\
    &= \Pr(X = x, X \leq C, Z_x = i \mid X \geq Y)\\
    &= \Pr(X = x, x \leq C, Z_x = i, x \geq Y) / \Pr(X \geq Y)\\
    &= (1 / \alpha) \{ \Pr(X = x, Z_x = i) \Pr(Y \leq x \leq C) \},
\end{align*}
and
\begin{align*}
    U_{\tau}(x) &= \Pr(Y_j \leq x \leq \min(X_j, C_j))\\
    &= \Pr(Y \leq x \leq \min(X,C) \mid X \geq Y)\\
    &= (1 / \alpha) \{ \Pr(Y \leq x \leq C) \Pr(X \geq x) \}.
\end{align*}
Thus,
\begin{equation}
    \lambda^{0i}_{\tau}(x) = \frac{ \Pr(X = x, Z_x = i) }{\Pr(X \geq x)}
    = \frac{ f^{0i}_{*, \tau}(x) }{ U_{\tau}(x) }.
    \label{eq:csh_fc}
\end{equation}
In terms of our observable data, for a given loan $j$, $1 \leq j \leq n$, we
observe $Y_j$, $\min(X_j, C_j)$, and $\mathbf{1}_{X_i \leq C_i}$, where 
$\mathbf{1}_{Q} = 1$ if the statement $Q$ is true and 0 otherwise.  Further, 
if we observe an event for loan $j$, we will also observe the information 
$Z_{X_j} = i$, $i = 1, 2$.  Therefore, using the standard estimators
vis-\`{a}-vis the observed frequencies
\begin{equation*}
    \hat{f}^{0i}_{*, \tau, n}(x) =
    \frac{1}{n} \sum_{j=1}^{n} 
    \mathbf{1}_{X_j \leq C_j} 
    \mathbf{1}_{Z_{X_j}=i} 
    \mathbf{1}_{\min(X_j,C_j)=x},
\end{equation*}
and
\begin{equation*}
    \hat{U}_{\tau,n}(x) = 
    \frac{1}{n} \sum_{j=1}^{n} 
    \mathbf{1}_{Y_j \leq x \leq \min(X_j, C_j)},
\end{equation*}
we obtain the estimate for \eqref{eq:csh_fc}
\begin{equation}
    \hat{\lambda}^{0i}_{\tau,n}(x)
    = \frac{ \hat{f}^{0i}_{*, \tau, n}(x) }{ \hat{U}_{\tau,n}(x) }
    = \frac{
    \sum_{j=1}^{n} 
    \mathbf{1}_{X_j \leq C_j} 
    \mathbf{1}_{Z_{X_j}=i} 
    \mathbf{1}_{\min(X_j,C_j)=x}
    }
    {
    \sum_{j=1}^{n} 
    \mathbf{1}_{Y_j \leq x \leq \min(X_j, C_j)}
    }.
    \label{eq:csh_est}
\end{equation}
Pleasingly, \eqref{eq:csh_est} is equivalent to the related classical work of
\citet{huang_1995}, despite our assumption of discrete-time at the problem's
onset.

It is instructive to discuss the form of \eqref{eq:csh_est}.  The nature of the
estimator is completely informed by the data.  In other words, it is an
empirical tool designed to recover the probability distribution of $(X, Z_X)$
based solely on when defaults are observed within the sampled data.  Hence,
any estimates using \eqref{eq:csh_est} are agnostic to any economic model,
which offers a statistical robustness to potentially undue assumptions.
Phrased differently, estimates derived using \eqref{eq:csh_est} may be used as
a benchmark from which future economic models may be calibrated.  Such a
benchmark may serve as a valuable starting point for future research into the
credit risk of individual consumer loans that have been sampled from
ABS.\footnote{
Especially valuable, perhaps, given the recent glut of ABS data now available
via Reg AB II \citep{reg_ab2}.
For a detailed data discussion, see Section~\ref{sec:data}.
}

We are now prepared to introduce our novel financial econometric hypothesis
test.  For a single sample, we refer to \eqref{eq:csh_est} as an estimate. If
we instead consider a population of all possible samples, then we may interpret
\eqref{eq:csh_est} as an {\it estimator}.  Under this interpretation,
\eqref{eq:csh_est} is now a random variable,
and any single estimate is just one possible realization.
As such, the random variable estimator version of \eqref{eq:csh_est} has
a number of attractive asymptotic properties.  First, the complete vector of
estimators over the recoverable space of $X$,
$\hat{\bm{\Lambda}}^{0i}_{\tau,n} = 
(\hat{\lambda}^{0i}_{\tau,n}(\Delta + 1),
\ldots, 
\hat{\lambda}^{0i}_{\tau,n}(\xi))^\top$,
is asymptotically unbiased for the true CSH rates.  Further, 
$\hat{\bm{\Lambda}}^{0i}_{\tau,n}$ is asymptotically multivariate normal
with a completely specifiable diagonal covariance structure (i.e., two
estimators within $\hat{\bm{\Lambda}}^{0i}_{\tau,n}$
are asymptotically independent).  The formal
statement of these properties may be found in Proposition~\ref{thm:asymH} in
Appendix~\ref{subsec:csh_props}.  Additionally, we may use
Proposition~\ref{thm:asymH} to produce asymptotic confidence intervals that are
appropriately bounded within $(0,1)$.  These asymptotic confidence intervals
are available through Lemma~\ref{cor:haz_ci}, which is stated formally in
Appendix~\ref{subsec:csh_props}.

Finally, the asymptotic confidence intervals of Lemma~\ref{cor:haz_ci} and
asymptotic independence of Proposition~\ref{thm:asymH} may be combined
to form a straightforward large sample statistical hypothesis test.  Formally,
for two risk bands $a$, $a'$, where $a \neq a'$ (i.e., $a, a'$ would represent
one of the risk bands deep subprime, subprime, near-prime, prime, or
super-prime), we may test
\begin{equation}
H_0: \lambda_{\tau, (a)}^{0i}(x) = \lambda_{\tau, (a')}^{0i}(x),
\quad \text{v.s.} \quad
H_1: \lambda_{\tau, (a)}^{0i}(x) \neq \lambda_{\tau, (a')}^{0i}(x),
\label{eq:H0}
\end{equation}
for each age $x$ by determining if the $(1-\theta)$\% asymptotic confidence
intervals of the estimators $\hat{\lambda}_{\tau, n, (a)}^{0i}(x)$ and
$\hat{\lambda}_{\tau, n, (a')}^{0i}(x)$ overlap for
$\Delta + 1 \leq x \leq \xi$ and $i = 1, 2$.  The decision rules and
interpretations are as follows.  Fix $x \in \{\Delta + 1, \ldots, \xi\}$ and
$i = 1$.  If the asymptotic confidence intervals of
$\hat{\lambda}_{\tau, n, (a)}^{01}(x)$ and
$\hat{\lambda}_{\tau, n, (a')}^{01}(x)$ overlap, we fail to reject
$H_0$, and we cannot claim
$\lambda_{\tau, (a)}^{01}(x) \neq \lambda_{\tau, (a')}^{01}(x)$.  That is,
conditional default risk given survival to time $x$ has potentially converged.
On the
other hand, if the asymptotic confidence intervals do not overlap, we reject
$H_0$, and we may claim with $(1-\theta)$\% confidence that
$\lambda_{\tau, (a)}^{01}(x) \neq \lambda_{\tau, (a')}^{01}(x)$.  That is,
conditional default risk given survival to time $x$ has not yet converged.
Figure~\ref{fig:conv_demo} is a visualization of this procedure.

\subsection{Empirical Estimates}
\label{subsec:emp_res}

We now apply the financial econometric tools
of Section~\ref{subsec:stat_results} to
the consumer loan data of Section~\ref{sec:data}.  Specifically, we both plot
estimates of the CSH rates for default by loan age and risk
band, $\hat{\lambda}_{\tau,n}^{01}$, and perform the hypothesis test described
by \eqref{eq:H0} to the filtered
loan population summarized in Figure~\ref{fig:2017_summary} and
Table~\ref{tab:bond_summary}.  For convenience of exposition, we will initially
focus our discussion on two risk bands: subprime and prime borrowers.  A
plot of $\hat{\lambda}_{\tau,n}$ by loan age may be found in
Figure~\ref{fig:conv_demo} for the 21{,}332 subprime loans (solid line) and
6{,}300 prime loans (dashed line).

\begin{figure}[t!]
    \centering
    \includegraphics[width=\textwidth]{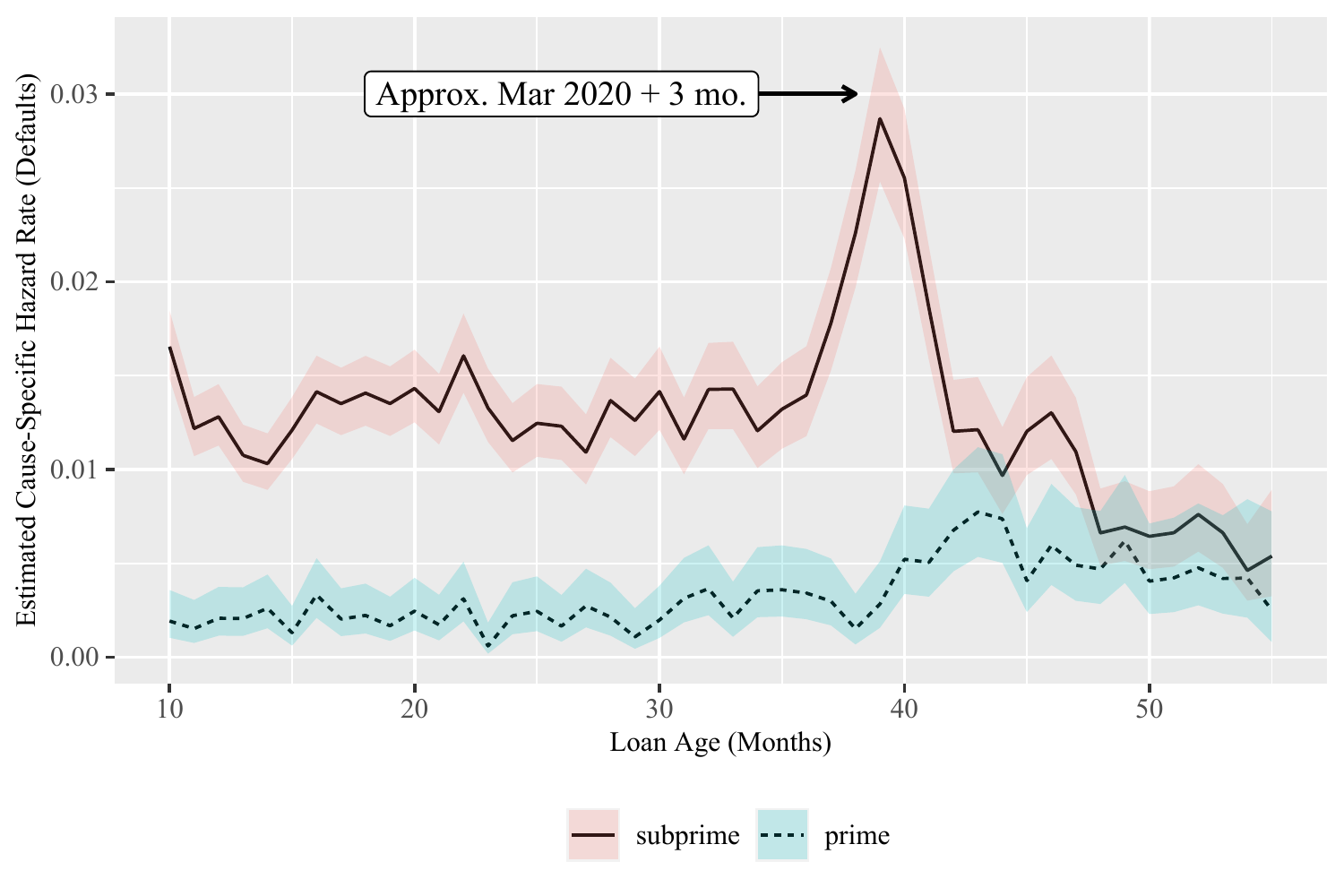}
    \caption{
\footnotesize{   
\textbf{Credit Risk Convergence: Subprime and Prime Loans.}\newline    
A plot of $\hat{\lambda}_{\tau,n}^{01}$ (defaults) defined in
\eqref{eq:csh_est} by loan age for the subprime
and prime risk bands within the sample of 58,188 loans
(Table~\ref{tab:bond_summary}), plus 95\% confidence intervals using
Lemma~\ref{cor:haz_ci}.  We may use the hypothesis test described in
\eqref{eq:H0} by searching for the minimum age that the confidence intervals
overlap.  In this case, we see the first
evidence of credit risk convergence by approximately loan age 42 months. The
upward spike in $\hat{\lambda}_{\tau,n}^{01}$ for the subprime risk
band by loan age 40 is related to the economic impact of COVID-19
(see Section~\ref{subsec:COVID}).
}    
}
    \label{fig:conv_demo}
\end{figure}

As an initial observation, we can see that the estimated default CSH rates for
subprime loans are initially higher than the default CSH rates for prime loans.
This
is expected given our expectations about credit risk, risk-based pricing, and
the difference in APRs between the two risk bands.
This pattern does not maintain for the full lifetime of the loan, however.
As the subprime loans continue to stay current (i.e., given survival), the
CSH rate declines.  This is an alternative visualization of the loan
seasoning observed in Figure~\ref{fig:loss_curves}.
Interestingly, the CSH rates for prime loans in this
sample appear to increase slightly, though they remain generally stable even as
loan age increases. We remark here that, due to left-truncation and
right-censoring, we are unable to fully recover the complete loan term for all
risk bands.\footnote{
We have reliable estimates from approximately $5 \leq X \leq 60$ for all risk
bands, though we report $10 \leq X \leq 55$ for conservatism.  In the instance
of no observed defaults at a particular loan age within the recoverable window,
we interpolate with a constant hazard rate.
}

This brings us to the major methodological result of this paper,
which is the lower right corner of Figure~\ref{fig:conv_demo}.
In addition to plotting the point
estimates, we also provide the asymptotic confidence intervals (shaded regions
surrounding each line).  Eventually, as the two lines slowly approach each
other, the confidence intervals begin to overlap.  The first evidence of this
is around loan age 42, and it is consistent by approximately loan age 50 for
these 72-73 month consumer auto loans.  With the statistical test outlined in
\eqref{eq:H0}, therefore, for any age in which we observe overlapping
confidence intervals,
we cannot claim the true CSH rates for default are different between
the subprime and prime risk bands within this sample.  It is this point at
which two CSH rates for default between two different risk bands
become statistically indistinguishable that we estimate as the
point of \textit{credit risk convergence}.\footnote{
We also remark that measuring default risk conditional on survival gleans
additional insight in comparison to a binary default analysis, such as that
performed in Table~\ref{tab:bond_summary}.  Indeed, 40\% of all subprime loans
in the sample of 58{,}118 defaulted at some point, versus only 10\% of prime
loans.  Given just this analysis, it is not surprising the subprime borrowers
received a higher APR than the prime borrowers.  What we show in
Figure~\ref{fig:conv_demo} is that the default rates conditional on survival
are not constant, however, and it implies that subprime borrowers that do not
refinance are eventually overpaying based on an updated assessment of their
risk profile, all else equal.
We come back to this point much more extensively in
Section~\ref{subsec:cons_perp}.
}

Table~\ref{tab:risk_conv_mat} (top) provides a transition matrix of the
estimated month of credit risk convergence among the five risk bands
considered for the sample of 58{,}118 filtered loans.
For conservatism, we defined the point of credit risk convergence as the
minimum of (1) two consecutive months of confidence interval overlap after
a loan age of 10 months or (2) the minimum shared age that the hazard
estimates are consistently zero.
Based on these results, we would say that a deep subprime loan
eventually converges in risk to a subprime loan after three years, and it
converges to a prime risk after 50 months and a super-prime risk after 52
months.  Similarly, subprime loans converge in risk to prime loans after 42
months, and they become super-prime risks after four years.  Near-prime
loans become prime risks quite quickly, just after one year, and then
become super-prime risks after 34 months.  For completeness, we plot the full
five-by-five matrix of CSH rate and confidence interval comparisons along the
lines of Figure~\ref{fig:conv_demo} in Figure~\ref{fig:haz_grid_default} found
in Appendix~\ref{subsec:emp_add_det}.  Given consumer auto loans are
collateralized with rapidly depreciating assets in the form of used cars
(see Figure~\ref{fig:ltv_by_age}), these results cannot be explained by
traditional LTV optionality arguments found in mortgages
\citep[e.g.,][]{campbell_2015}.

\begin{table}[t!]
\caption{
\footnotesize{
\textbf{Credit Risk Convergence: Transition Matrix.}\newline
This table reports
a summary matrix of the estimated month of credit risk convergence for 72-73
month consumer automobile loans.  The top matrix corresponds to the sample of
58{,}118 loans issued in 2017 (Table~\ref{tab:bond_summary}).  The bottom
matrix corresponds to the sample of 65{,}802 loans issued in 2019 (see
Section~\ref{subsec:COVID}).  For conservatism, the month of credit risk
convergence is defined as the earlier of
(1) the first of two consecutive months after ten
months that the asymptotic confidence intervals for
$\hat{\lambda}_{\tau,n}^{01}$ overlap or
(2) once $\hat{\lambda}_{\tau,n}^{01}$ is consistently zero for
both risk bands.
Visually, it is helpful to compare
Figure~\ref{fig:conv_demo} with the top matrix (subprime, prime) and
Figure~\ref{fig:COVID_demo} with the bottom matrix (subprime, prime).  Full
comparisons may be made with
Figure~\ref{fig:haz_grid_default} in Appendix~\ref{subsec:emp_add_det} and
Figure~\ref{fig:haz_grid_default2019} in Appendix~\ref{subsec:covid_detail}.
}
}
{
\begin{center}
\begin{adjustbox}{max width=\textwidth}
\begin{tabular}{cccccc}
    & \multicolumn{5}{c}{2017 Issuance}\\
    \toprule
    & deep subprime & subprime & near-prime & prime & super-prime\\
    \midrule
    deep subprime & 10 & 36 & 50 & 50 & 52\Tstrut\\
    subprime & & 10 & 23 & 42 & 48\\
    near-prime & & & 10 & 13 & 34\\
    prime & & & & 10 & 10\\
    super-prime & & & & & 10\\
    \bottomrule
    &\\
    & \multicolumn{5}{c}{2019 Issuance}\\
   \toprule
    & deep subprime & subprime & near-prime & prime & super-prime\\
    \midrule
    deep subprime & 10 & 31 & 51 & 58 & 58\Tstrut\\
    subprime & & 10 & 23 & 25 & 42\\
    near-prime & & & 10 & 15 & 15\\
    prime & & & & 10 & 10\\
    super-prime & & & & & 10\\
    \bottomrule
\end{tabular}
\end{adjustbox}
\end{center}
}
\label{tab:risk_conv_mat}
\end{table}

We also see a large spike in the default CSH rate for the subprime risk band by
approximately loan age 40.  Similarly, it appears the prime risk band also
reports
a small increase in its default CSH rate shortly after the same age.  With some
approximate date arithmetic from the first payment month of the ABS bonds
(March-April-May 2017), we find that a loan age of 40
months corresponds to approximately Spring 2020 (when adjusted for
left-truncation).  If we recall the economic impact of the Coronavirus, which
effectively stopped
most economic activity in Spring 2020, it is not difficult to understand why so
many loans defaulted around loan age 40.  This also provides informal 
validation that the data sorting and estimation of the default CSH
rate has been effective.  It is interesting to compare the difference in impact
to subprime and prime borrowers.  That is, the economic shutdown brought on
by the Coronavirus pandemic appears to have had a smaller impact on the prime
risk band than the subprime risk band.  In Section~\ref{subsec:COVID}, we
provide more discussion.

\subsection{Impact of COVID-19}
\label{subsec:COVID}

As alluded to in Section~\ref{subsec:emp_res}, we have attributed the 
large increase around loan age 40 for the default CSH rate 
estimate observable in Figure~\ref{fig:conv_demo} to the Spring 2020
economic shutdown resulting from the initial rapid spread of the Coronavirus 
disease.  Because the point of credit risk convergence occurs after month
40 for some pairs of risk bands in Table~\ref{tab:risk_conv_mat} (e.g.,
deep subprime and prime credit risk convergence occurs by loan age 50), there
is a concern that the point estimate of default risk converging for
disparate risk bands is due to the filtering effect of the shock of the
economic shutdown rather than due to some inherent property of loan risk
behavior.  In other words, only the strongest credits could survive such a
shock, and credit risk convergence may occur later or not at all otherwise.  
While we feel the economic shutdown has played some role, we believe it is not
adequate on its own to explain the credit risk convergence we observed 
in our sample.  We argue as follows.

First, if we return again to Table~\ref{tab:risk_conv_mat}, we can see 
that pairs of risk bands converge earlier than loan age 40 (e.g., deep 
subprime and subprime, near-prime and prime, near-prime and super-prime,
and prime and super-prime).  Thus,
we have examples of risk bands that converge in conditional 
monthly default risk prior to the onset of the Spring 2020 economic shutdown.
Second, if credit risk 
convergence is completely driven by the Spring 2020 economic shutdown, we would
expect to see it occur much earlier in a sample of bonds issued closer to
Spring 2020 when subject to the same loan selection process and risk band
definitions of Section~\ref{subsec:loan_filter}.  Hence, 
we obtained loan level data from the same four consumer auto loan ABS
issuers but from bonds issued closer to Spring 2020:
SDART 2019-3 \citep{sdart_2019}, 
DRIVE 2019-4 \citep{drive_2019},
CARMX 2019-4 \citep{cmax_2019},
and AART 2019-3 \citep{aart_2019}.\footnote{
The filtered 2019 sample mirrors the distribution of the 2017 filtered sample
summarized in Table~\ref{tab:bond_summary}.  For example, there are
31{,}221 DRIVE 2019-4 loans, 19{,}962 SDART 2019-3 loans, 11{,}724 CARMX
2019-4 loans, and 2{,}895 AART 2019-3 loans, for a total of 65{,}802.
By risk band, there are 24{,}107 (37\%) deep subprime loans, 20{,}874 (32\%)
subprime loans, 9{,}930 (15\%) near-prime loans, 8{,}625 (13\%) prime loans,
and 2{,}266 (3\%) super-prime loans.
}
These bonds began paying in late
Summer 2019, whereas the bonds introduced in Section~\ref{sec:data} began 
paying in Spring 2017.

Figure~\ref{fig:COVID_demo} is a repeat of Figure~\ref{fig:conv_demo}; it
presents the estimated CSH rates for default plus asymptotic 95\% confidence 
intervals for the 2019 sample.  As expected, we see the large spike in the
CSH rate for defaults in subprime loans around 10 months, which, when 
adjusted for left-truncation, corresponds to the Spring 2020 economic shutdown.
We also display the estimated credit risk convergence matrix for the 2019
issuance in the bottom portion of Table~\ref{tab:risk_conv_mat}. In reviewing
the matrix, we see evidence of earlier convergence.  Hence, the shock of the
economic shutdown of Spring 2020 has likely played some role.
It is not the whole
story, however.  For example, the subprime risk band in the 2019 issuance does
not converge with the super-prime risk until loan age 42.
In the 2017 issuance,
the subprime risk band converges with the super-prime risk band at loan age 48.
This suggests that loan age or loan seasoning
also plays a role.  Similarly, while 
convergence between risk bands occurs earlier for the 2019 sample, it takes 
more months after the shutdown shock for most disparate risk bands to converge
than after the same shock in the 2017 sample.  For example, the subprime
and prime risk bands converge by loan 25 in the 2019 sample, which is 15 months
after the economic shutdown shock.  For the 2017 sample, however, the
subprime risk band converges with the prime risk band at loan age 42, which is
only 2 months after the economic shutdown.  This further suggests that the
converge results of Table~\ref{tab:risk_conv_mat} are not solely attributable
to the economic event of COVID.

\begin{figure}[t!]
    \centering
    \includegraphics[width=\textwidth]{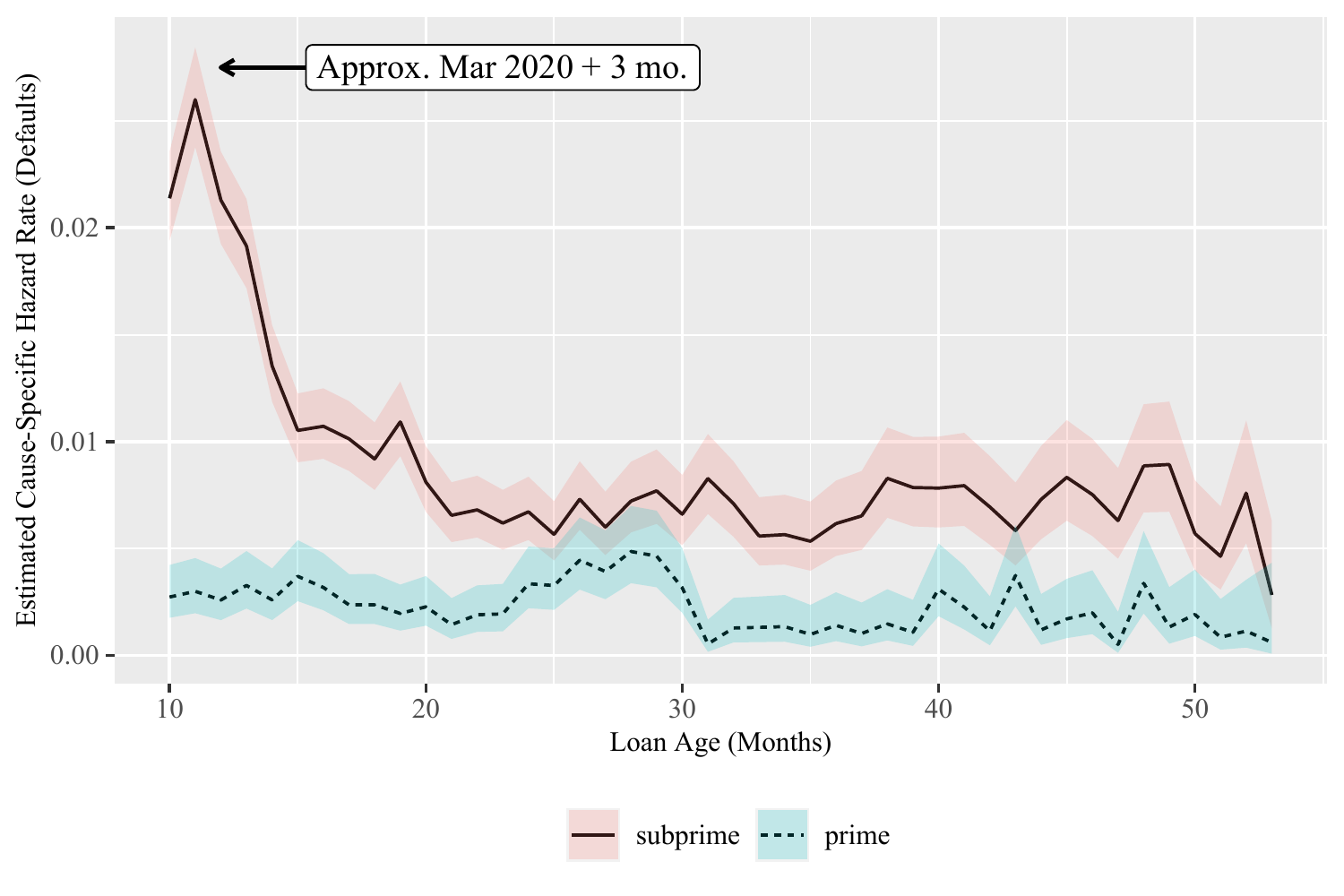}
    \caption{
\footnotesize{    
\textbf{Credit Risk Convergence: COVID Sensitivity.}\newline    
A plot of $\hat{\lambda}_{\tau,n}^{01}$ (defaults) defined in
\eqref{eq:csh_est} by loan age for the subprime
and prime risk bands within the sample of 65{,}802 loans issued in 2019, plus
95\% confidence intervals using Lemma~\ref{cor:haz_ci}.
We may use the hypothesis test described in
\eqref{eq:H0} by searching for the minimum age that the confidence intervals
overlap between two disparate risk bands.
Because the 2019 bonds were issued closer to Spring 2020, the large upward
spike in $\hat{\lambda}_{\tau,n}^{01}$ occurs much earlier for the subprime
risk band, closer to loan age 10 (compare with Figure~\ref{fig:conv_demo}).  We
see some evidence of earlier credit risk convergence around loan age 25 in
comparison to Figure~\ref{fig:conv_demo}.
}    
}
    \label{fig:COVID_demo}
\end{figure}

We also remark that in the last twenty years it is difficult to find a
span of 72 consecutive months in which there was not a large scale economic 
shock (e.g., September 11, 2001; 2007-2009 global financial crisis; 2009-2014
European sovereign debt crisis, COVID-19, etc.).  Hence,
credit risk convergence may be perpetually present, even if it may be
partially explained by the filtering effects of an economic crisis.

\subsection{Additional Sensitivity Analysis}
\label{subsec:new_cars}

With some rudimentary data sorting, the techniques of
Section~\ref{subsec:stat_results} may be used for sensitivity testing.  To
illustrate, we now consider an additional sensitivity test. We instead sort
the data for new cars at the point of sale.  This will give us exposure to a
potentially different borrower profile and depreciating collateral value
pattern.  It will also greatly reduce our exposure to the CARMX bond.
Reduced exposure to CARMX is of interest
because the parent company, CarMax, has an entirely different
business model and therefore financing incentive than either Santander or Ally,
the origination banks of the DRIVE, SDART, and AART ABS bonds.
Because of this, it is possible that CARMX loans behave differently than
loans originated by banks.

We again return to the original collective pool of over 275{,}000 consumer auto
loans of the 2017 issuance of the four bonds introduced in
Section~\ref{sec:data}: CARMX, AART, DRIVE, and SDART.  We then perform the
identical risk band APR-based sorting and loan filtering of
Section~\ref{subsec:loan_filter}, except rather than used cars we restrict our
sample to new cars.  This leaves a total sample of 16{,}412 loans, with bond
exposures of DRIVE (7{,}692), SDART (7{,}369), ALLY (1{,}342) and CMAX (9).
As expected, restricting the sample to new cars has eliminated almost all loans
from CMAX, whose parent company, CarMax, specializes in used auto sales. Thus,
the current sample of 16{,}412 loans consists of loans originated by
traditional banks, Santander and Ally.  In terms of risk band, we maintain
dispersed exposure with deep subprime (3{,}892), subprime (8{,}242),
near-prime (2{,}132), prime (1{,}407), and super-prime (739).  Finally, all
loans consist of a new vehicle at the point of sale, and so we are now
considering an entirely different collateral depreciation pattern and even
potentially borrower profile.  We present an update of both
Figure~\ref{fig:conv_demo} and Figure~\ref{fig:COVID_demo} in
Figure~\ref{fig:new_demo}.

\begin{figure}[t!]
    \centering
    \includegraphics[width=\textwidth]{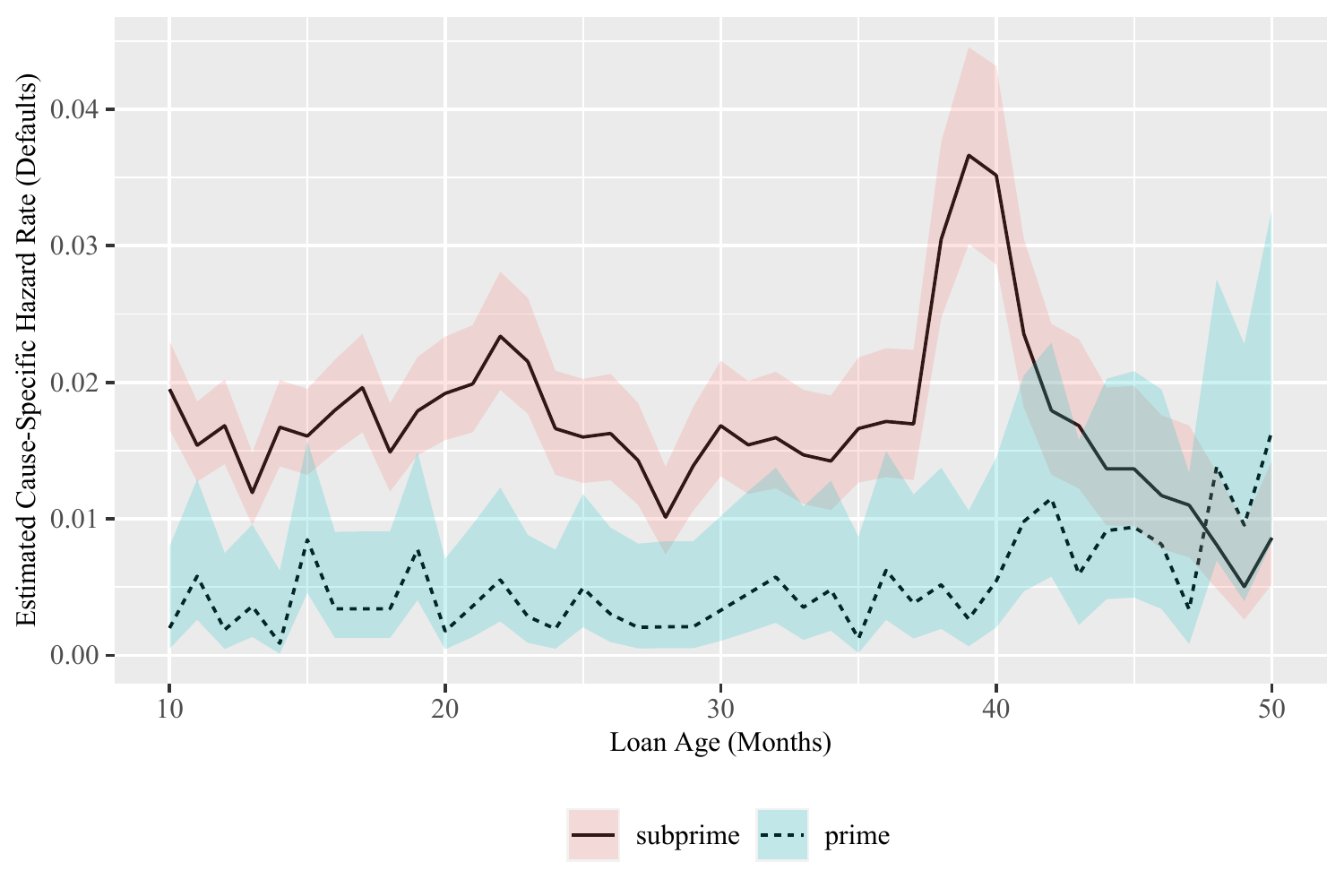}
    \caption{
\footnotesize{    
\textbf{Credit Risk Convergence: Collateral Sensitivity.}\newline    
A plot of $\hat{\lambda}_{\tau,n}^{01}$ (defaults) defined in
\eqref{eq:csh_est} by loan age for the subprime
and prime risk bands within the sample of 16{,}412 loans issued in 2017
with new cars at the point of sale, plus
95\% confidence intervals using Lemma~\ref{cor:haz_ci}.
We may use the hypothesis test described in
\eqref{eq:H0} by searching for the minimum age that the confidence intervals
overlap between two disparate risk bands.
Because of the smaller sample, the asymptotic confidence interval for the CSH
rate of prime loans is wider.  The overall pattern is very similar to
Figure~\ref{fig:conv_demo}, however, and so
the point estimates of credit risk convergence appear to be
robust to collateral type at
the point of sale (i.e., new or used).  The sample of 16{,}412 new car loans
also has minimal exposure to CARMX.  Thus, the CSH estimates further appear
robust to different business incentives of the loan originator.
}    
}
    \label{fig:new_demo}
\end{figure}

Immediately, we see that the overall pattern of Figure~\ref{fig:new_demo}
closely mirrors that of Figure~\ref{fig:conv_demo}.  The subprime loans have
a default CSH rate estimate that is consistently higher
than prime loans in the early
months of a loan's age.  We also see the large increase in the CSH rate for
subprime loans around loan age 40, which correspond to the timing of the
economic shutdown due to COVID-19 in Spring of 2020.  As with the used
cars-at-the-point-of-sale loans, there appears to be minimal impact from
COVID-19 for prime loans.  The two CSH rates for the subprime and
prime risk bands eventually converge, however, which we
see at the lower right corner of Figure~\ref{fig:new_demo}.  The asymptotic
confidence intervals begin to consistently overlap beginning shortly after
loan age 40, which corresponds to row two, column four of the top matrix of
Table~\ref{tab:risk_conv_mat}.
Thus, our credit risk convergence point estimates appear
to be robust in consumer auto loans to the collateral type
at the point of sale (i.e., new or used).  Because the sample of 16{,}412 new
car loans has such minimal exposure to CARMX, we also see that
our credit risk convergence point estimates appear
to be robust to potentially different business
incentives of the parent company to the loan originator (i.e., used car sales
versus traditional banking).

As a final note on collateral type, a close
inspection of Figure~\ref{fig:new_demo} in comparison to
Figure~\ref{fig:conv_demo} reveals wider asymptotic confidence intervals for
the CSH rate for default in prime loans.  This is driven by the smaller sample
size, and it is exacerbated for super-prime loans written on new cars (i.e.,
there are very few defaults for super-prime loans written on new cars in our
sample of 739).  Hence, we have avoided reporting the credit risk convergence
point estimate
matrix of Table~\ref{tab:risk_conv_mat} for the sample of 16{,}412 new car
loans to avoid potentially erroneously conclusions due to faulty asymptotic
statistics stemming from a small default sample.  Instead, we report the point
CSH rate estimates for default for all five risk bands in
Figure~\ref{fig:new_conv_all}.  In this case, a simple line plot speaks
volumes.  In the young ages of a loan, we see that the CSH rates for default
is the highest for deep subprime loans, and it progresses sequentially downward
by risk band until super-prime loans, of which there are very few defaults.
This pattern is expected.  As the loans age, however, we see all CSH rates for
default for each risk band converge together in the bottom right of
Figure~\ref{fig:new_conv_all} near loan age 50. Given consumer auto loans
on new car sales are also collateralized with rapidly depreciating assets,
these results similarly cannot be explained by traditional LTV optionality
arguments found in mortgages \citep[e.g.,][]{campbell_2015}.

\begin{figure}[t!]
    \centering
    \includegraphics[width=\textwidth]{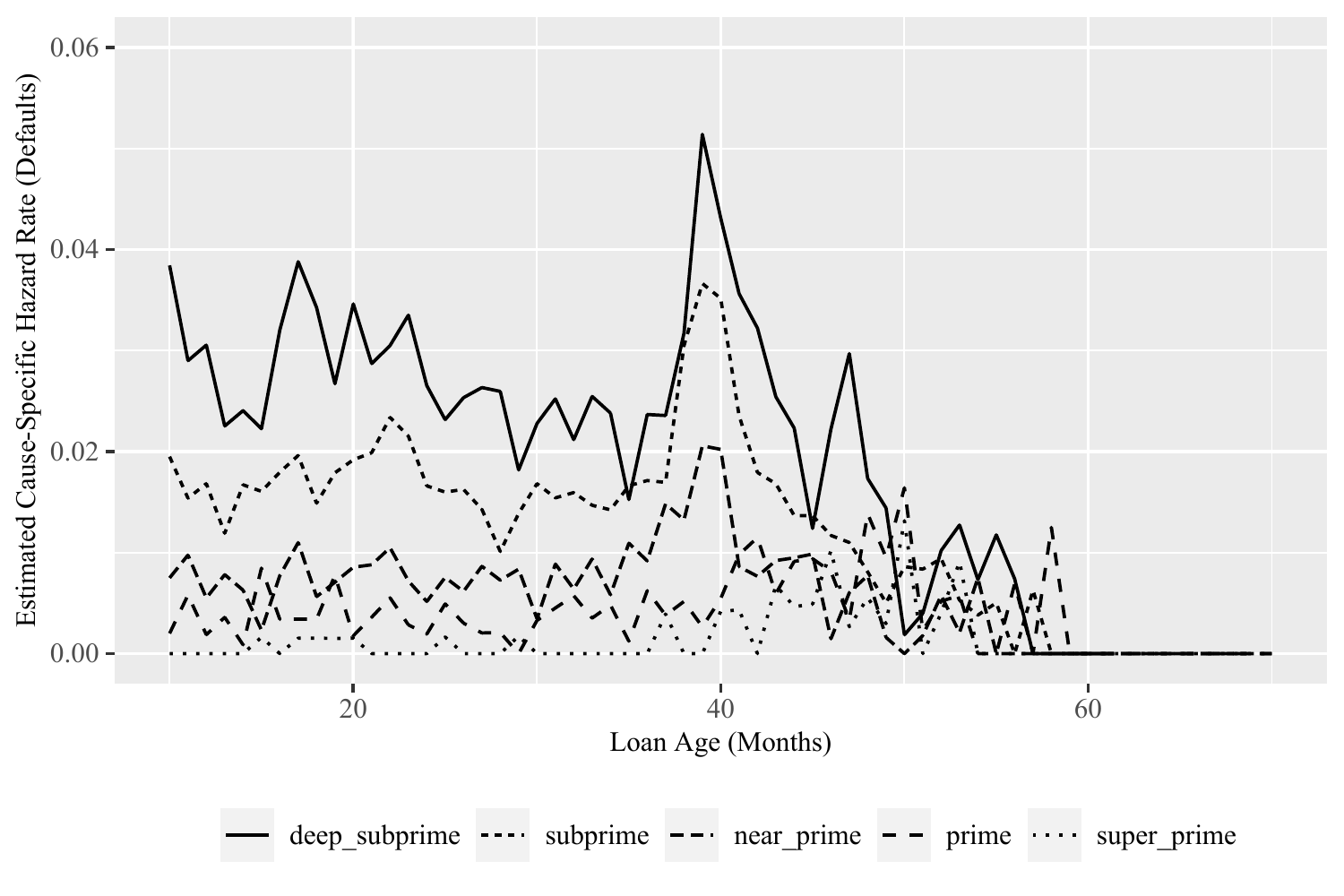}
    \caption{
\footnotesize{   
\textbf{Credit Risk Convergence: All Risk Bands, Point Estimates.}\newline    
A plot of $\hat{\lambda}_{\tau,n}^{01}$ (defaults) defined in
\eqref{eq:csh_est} by loan age for all risk bands within the sample of
16{,}412 loans issued in 2017 with new cars at the point of sale.
As expected, the CSH rates are the highest for deep subprime loans and then
trend downwards until super-prime loans at the onset of loan lifetimes.
As the loans mature and stay current, however, we see that all CSH rates
eventually converge towards zero at the bottom right.
This is an alternative visualization of loan seasoning, to be
compared with Figure~\ref{fig:loss_curves}.
}    
}
    \label{fig:new_conv_all}
\end{figure}

\section{Financial Implications}
\label{sec:finc_imp}

We now apply the novel methods of Section~\ref{sec:CRC} to offer new
financial perspectives on consumer auto loans.
The present section proceeds in two parts.  In
Section~\ref{subsec:lend_prof}, we demonstrate how the CSH estimates
may be used to visualize the back-loading of a lender's expected profits.
Related details for estimating a recovery upon default assumption and
extensions may be found in
Appendix~\ref{subsec:recov} and Online Appendix~\ref{subsec:cer}.  In
Section~\ref{subsec:cons_perp}, we then focus our analysis on the individual
consumer.  By presenting a counterfactual of a perfectly efficient borrower
in terms of credit-based refinancing behavior,
we find that borrowers in all non-super-prime risk bands delay
prepayment inefficiently, all else equal.
In a surprise based on expectations or borrower sophistication,
we find that borrowers in lower risk bands, near-prime and
prime, operate less efficiently than borrowers in higher risk bands, deep
subprime and prime.  Details may be found in Table~\ref{tab:pot_savings}.
We also evaluate borrower conditional prepayment behavior using
the sibling estimator
\eqref{eq:csh_est} for prepayment.  In a visual analysis, we find that
borrower's prepayment decisions appear to be driven by
economic stimulus payments and unusual used auto markets
rather than by financial sophistication.  The section closes with brief
thoughts on why the market for mature consumer auto loans may be
operating at our estimated sub-optimal level of efficiency with respect
to credit-based refinancing.

\subsection{Lender Profitability Analysis}
\label{subsec:lend_prof}

Conventional profitability wisdom of risk-based pricing from the perspective
of a lender is that the
high-returns of high-risk loans that don't default help offset the losses from
the high-risk loans that do default.  In other words, there is an implied
insurance arrangement in which the cost of the losses are dispersed among the
individual borrowers.  Furthermore, it can be argued that through its
precision, risk-based pricing has been attributed to lowering the cost of 
credit for a majority of borrowers and expanding credit availability to higher 
risk borrowers \citep{staten_2015}.\footnote{
See also \citet{livshits_2015} for a more thorough introduction to risk-based
pricing.
}
These are positive economic outcomes, and we do not attempt to argue against
the overall practice of risk-based pricing.  All loans considered
within our analysis have been sampled from pools of securitized bonds, however.
That is, the risk of default has already been transferred off the lender's
balance sheet after the point of sale into the securitized trust.
What we will argue, supported by our novel methods, however,
is that the consumer auto loan market is likely capable
of operating more efficiently with respect to a dynamic view of conditional
default risk.  As one component of this argument,
it is illuminating to perform a lender expected profitability analysis,
especially in light of the default and prepayment probabilistic estimates we
obtained in Section~\ref{subsec:emp_res}.

A common term to describe the profit of a high-risk, high-interest-rate loan
that remains current is \textit{back-loaded}.\footnote{
We thank Jonathan A. Parker for this concise descriptive term.
}
Quite simply, a high-risk, high-interest-rate
loan gradually becomes more profitable as it continues
paying, and it is these increased profits later in the loan's life that offset
the losses taken on other similar loans that have defaulted.  To provide some
formality to this idea, we will utilize an actuarial approach to calculate an
implied, expected risk-adjusted return for each month a loan stays current.
Specifically,
we will examine a rolling monthly expected annualized rate of return assuming
an investor purchases a risky fixed-income asset at a price of the outstanding 
balance of the consumer loan at age $x$ for risk band $a$, $B_{a \mid x}$, with
a one-month term.  This hypothetical risky asset pays either (1) the
outstanding balance at loan age $x+1$ for risk band $a$, $B_{a \mid x+1}$, plus
the next month's payment due, $P_a$, with probability
$1 - \lambda_{\tau,(a)}^{01}(x)$ or (2) the recovery amount at time $x+1$ in
the event of default, $R_{x+1}$, with probability
$\lambda_{\tau,(a)}^{01}(x)$.  Because we utilize a competing risks framework,
the CSH rates are adjusted for prepayment probabilities.
The subscript $a$ denotes one of the five
standard risk bands: deep subprime, subprime, near-prime, prime, and super
prime.  We illustrate this hypothetical asset in Figure~\ref{fig:epv_diagram}.

\begin{figure}[t!]
    \centering
    \includegraphics[width=\textwidth]{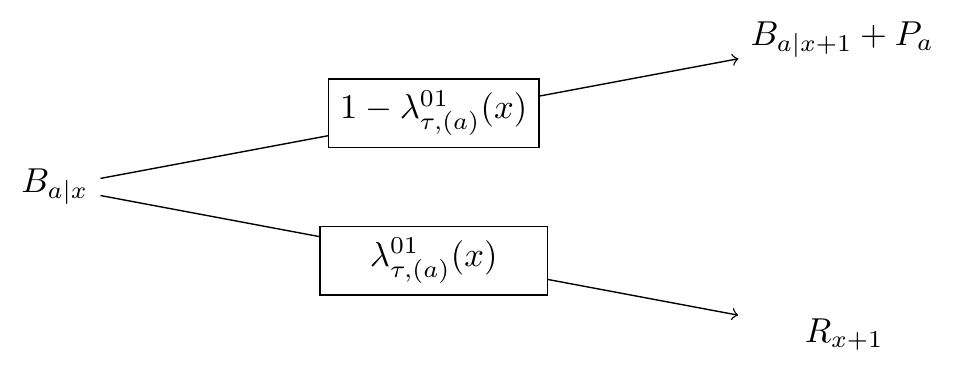}
        \caption{
\footnotesize{
    \textbf{Hypothetical Risky Fixed-Income Asset and Path Probabilities.}\newline    
This hypothetical risky asset pays either (1) the
outstanding balance at loan age $x+1$ for risk band $a$, $B_{a \mid x+1}$, plus
the next month's payment due, $P_a$, with probability
$1 - \lambda_{\tau,(a)}^{01}(x)$ or (2) the recovery amount at time $x+1$ in
the event of default, $R_{x+1}$, with probability
$\lambda_{\tau,(a)}^{01}(x)$. The subscript $a$ denotes one of the five
standard risk bands: deep subprime, subprime, near-prime, prime, or super-prime.
Because we use a competing risks framework, the CSH rates for default are
adjusted for prepayments.
}
}
    \label{fig:epv_diagram}
\end{figure}

To calculate the annualized risk-adjusted return by month, we first define the
expected present value (EPV) of a $B_{a \mid x}$ risky one-month loan depicted
in Figure~\ref{fig:epv_diagram} as
\begin{equation}
\text{EPV}_{a \mid x}^1 =
\lambda_{\tau, (a)}^{01}(x)
\bigg{[} \frac{ R_{x+1} }{1 + \tilde{r}_{a \mid x}} \bigg{]} + 
(1 - \lambda_{\tau, (a)}^{01}(x))
\bigg{[} \frac{ B_{a \mid x+1} + P_a }{1 + \tilde{r}_{a \mid x}} \bigg{]},
\label{eq:EPV_one}
\end{equation}
where $\tilde{r}_{a \mid x}$ is some unknown one-month effective rate of
interest.  To calculate the annualized risk-adjusted return, we can interpret
the outstanding balance of an age $x$ loan in risk band $a$, $B_{a \mid x}$, as
the market-implied price of a risky zero coupon bond following the payment
pattern of Figure~\ref{fig:epv_diagram}.  Therefore, we can use
\eqref{eq:EPV_one} to solve for $\tilde{r}_{a \mid x}$ such that
$\text{EPV}^1_{a \mid x} = B_{a \mid x}$.  This rate, $\tilde{r}_{a \mid x}$,
is then the expected monthly effective risk-adjusted return, which can then be
annualized.\footnote{
We remark that implicit in this analysis is the assumption that the remaining 
payments beyond month $x + 1$ are a tradable asset with no friction
(i.e., the risky asset may be traded
at time $x+1$ for $B_{a \mid x + 1}$).  We can instead perform an expected
risk-adjusted
return calculation over the entire remaining lifetime of the loan (i.e.,
assuming uncertainty for each future payment following the estimates in
Section~\ref{subsec:emp_res}).  The details of how to
perform this full calculation may be found in the Online
Appendix~\ref{subsec:cer}.
}
The calculation in \eqref{eq:EPV_one} also requires an estimate for the
recovery upon default, $R_{x+1}$, for each age $x$.  We perform this estimate
separately for each filtered sample of loans: the 58{,}118 loans from the
four ABS (CARMX, ALLY, SDART, DRIVE) issued in 2017 and summarized in Section~\ref{subsec:summary} and the 65{,}802 loans from the same four
ABS bonds issued in 2019 and summarized in Section~\ref{subsec:COVID}.
The complete details, including a
depicted recovery curve, may be found in Appendix~\ref{subsec:recov} and
Figure~\ref{fig:recovery_est}, respectively.  The probabilities,
$\lambda^{01}_{\tau, (a)}(x)$, for each age, $x$, and risk band, $a$, may be
estimated using the methods of Section~\ref{sec:CRC}.

For ease of interpretation, we consider a single loan of \$100 for
72 months with a payment and amortization schedule determined by the average
APR of each risk band: deep subprime (22.65\%), subprime (17.97\%), near-prime
(12.74\%), prime (7.82\%), and super-prime (3.59\%).\footnote{
These averages are for the 2017 sample of 58{,}118 loans.  The averages for
the 2019 sample of 65{,}802 loans are similar:
deep subprime (22.66\%), subprime (17.67\%), near-prime
(12.55\%), prime (8.34\%), and super-prime (4.49\%).
}
The estimated results for both the 2017 and 2019 issuance may be found in
Figure~\ref{fig:rolling_exp_ret}.
For the 2017 issuance (top), we see that the deep subprime, subprime,
near-prime, and prime risk bands generally group together around 7.5\%
during the earlier part of the loan's lifetime.  This demonstrates that the
risk-adjusted pricing is generally accurate by risk band, as the higher APRs
help offset the higher default risk.  It also reveals that the overall consumer
auto lending is quite efficient across risk bands at origination.
The super-prime
risk-band consistently hovers around a 2.5\% annualized expected risk-adjusted
return.\footnote{
One possible explanation for the super-prime risk-band hovering below the
7.5\% expected
risk-adjusted return of the other risk bands is that there may be other
non-financial benefits to the lender for writing super-prime loans, such as
reduced capital charges for a bank's capital requirements.
}
We then see the negative impact of COVID-19 around loan age 40, which is
consistent with the discussion in Sections~\ref{subsec:emp_res} and
\ref{subsec:COVID}.  It is notable that the impact on the expected
risk-adjusted return for the super-prime risk band due to COVID-19 is minimal.
As the loans mature, however, and credit risk convergence
begins, we see the expected risk-adjusted returns for the higher APR loans
begin to accelerate.  This is a visualization of back-loaded profits.  For the
2019 issuance (bottom), there is a similar clustering in the early months of a
loan's lifetime.  The impact of COVID-19 is much sooner, however, and we also
estimate earlier credit risk convergence between risk bands for the 2019
issuance (see Table~\ref{tab:risk_conv_mat}).  Thus, the risk-adjusted returns
by risk band separate earlier.

\begin{figure}[t!]
    \centering
    \includegraphics[width=\textwidth]{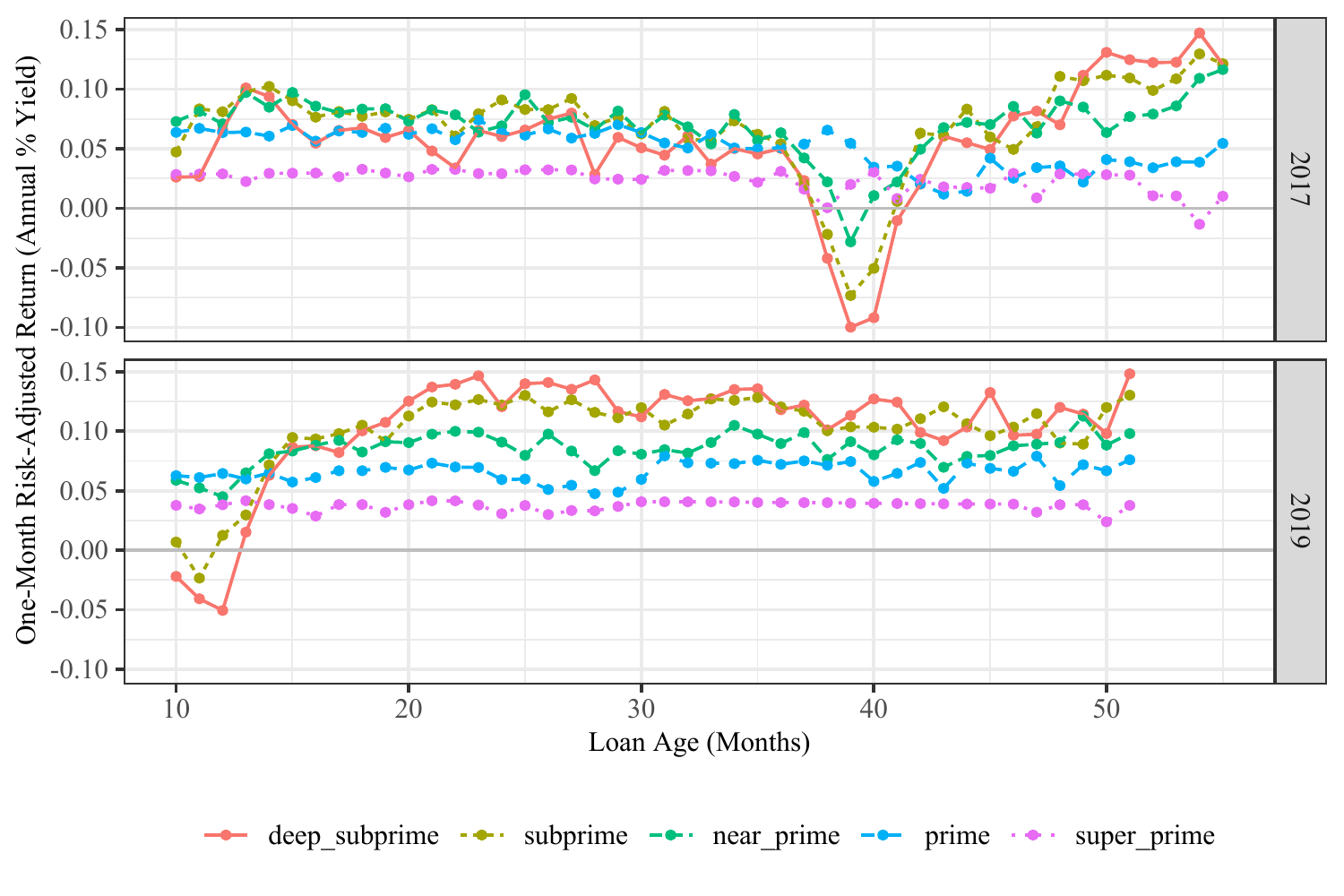}
        \caption{
\footnotesize{
\textbf{Estimated Expected Rolling Risk-Adjusted Return by Age, Issuance.}\newline
A plot of the annualized, expected risk-adjusted one-month return,
$\tilde{r}_{a \mid x}$, by loan age, risk band, and issuance year for the
filtered loan populations summarized in Sections~\ref{subsec:summary} and
\ref{subsec:COVID}.  The calculations utilize \eqref{eq:EPV_one} and the
two-path risky zero coupon bond formulation from Figure~\ref{fig:epv_diagram}.
The probabilities of each path stem from \eqref{eq:csh}, and they may be
estimated with \eqref{eq:csh_est}.  In the 2017 plot (top),
the one-month expected annualized risk-adjusted rate of return is roughly
equal to 7.5\% for the deep
subprime, subprime, near-prime, and prime risk bands until the point of credit
risk convergence (approximately age 40), after which the higher APR risk bands
show accelerating returns.  The clear negative impact of COVID-19 is also
apparent near loan age 40.  The pattern for 2019 is similar, though the impact
of COVID-19 occurs earlier, near loan age 10.
Because we use a competing risks framework, the CSH rates for default are
adjusted for prepayment probabilities.
}
}
    \label{fig:rolling_exp_ret}
\end{figure}

\subsection{Consumer Perspectives}
\label{subsec:cons_perp}

If a borrower's default risk conditional on survival declines as a loan stays
current, but the loan's original APR is a single point-in-time estimate of
risk at origination,
then it is possible a gradual credit-based economic inefficiency from the
perspective of the consumer may develop.
The purpose of the present section is an
attempt to quantify this inefficiency and offer potential explanations for its
appearance, which may be done using the techniques of
Section~\ref{subsec:stat_results}.
The first part will be dedicated to estimating a dollar amount via
a comparison with the counterfactual of a perfectly efficient borrower in terms
of credit-based refinancing behavior (ceteris paribus),
and the second part will offer observations on consumer behavior and larger
market behavior.

As an initial starting point, it may be tempting to trivialize the results
of Table~\ref{tab:risk_conv_mat} as a simple artifact of collateralized
loans.  Without careful thought, it is not unreasonable to suppose that a
72-month auto loan with less than two years remaining will almost certainly
be ``in-the-money'', and so the declining conditional default risk would
naturally follow.  This faulty reasoning ignores the rapidly depreciating
value of the collateral of used automobiles.  As a reference point,
\citet{storchmann_2004} estimates an average annual depreciation of 31\% in
Organization for Economic Co-operation and Development (OECD) countries.
Further, it is not uncommon to see deep subprime loans with APRs north of
20\% (see Figure~\ref{fig:2017_summary}), hindering a borrower's ability to
pay down principal.
Hence, Figure~\ref{fig:ltv_by_age} presents an estimated LTV by loan age
for current loans in our filtered sample of 51{,}118 loans.  It is not
until loan age 60 that super prime loans finally get under an LTV of 100\%,
and the riskier bands possess LTVs largely north of 150-200\% well beyond
the convergence point estimates of Table~\ref{tab:risk_conv_mat}.  Given
these estimates, it is of interest that we find conditional credit risk
behavior that cannot be explained by the standard in-the-moneyness analysis
of mortgages \citep[e.g.,][]{deng_1996},
a perhaps unique economic feature of consumer auto loans.\footnote{
In a robustness check, we halve the 31\% depreciation rate of
\citet{storchmann_2004} to 16\% annually, and the deep subprime and
subprime risk bands keep LTVs north of 100\% beyond loan age 52, the latest
convergence point in Table~\ref{tab:risk_conv_mat}.
}

\begin{figure}[t!]
    \centering
    \includegraphics[width=\textwidth]{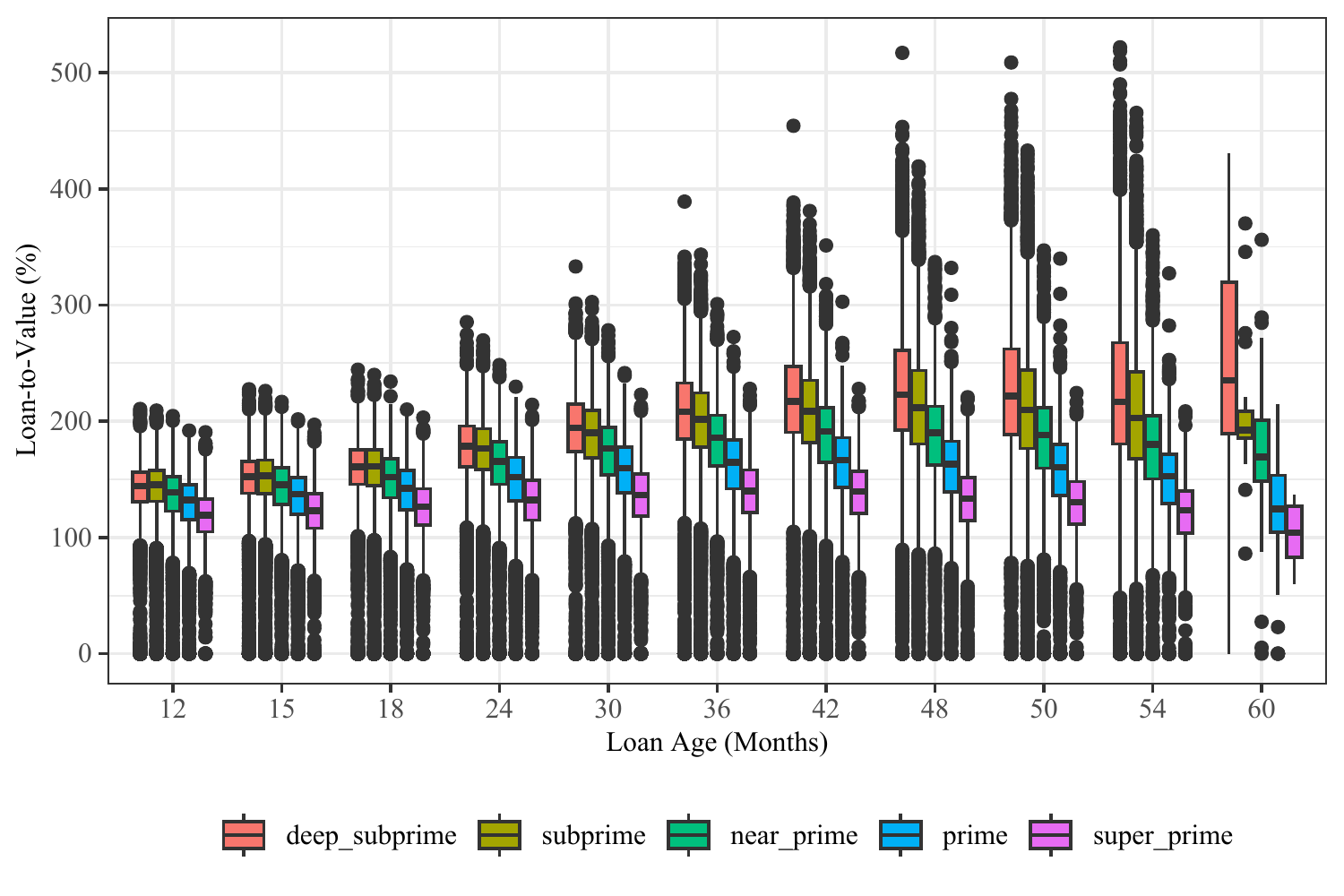}
        \caption{
\footnotesize{
\textbf{Outstanding Loan-to-Value by Loan Age, Risk Band.}\newline
A standard box plot of the outstanding LTV by loan age and risk band
for current loans in the
filtered sample of 51{,}118 loans from the ABS bonds CARMX, ALLY,
SDART, and DRIVE issued in 2017 and summarized in Section~\ref{subsec:summary}.
Average LTVs remain well-above 100\% past the point of credit risk convergence,
which suggests that improved credit performance is not attributable to
borrowers building equity in the collateral.
Auto depreciation estimates are based on \citet{storchmann_2004}.
}
}
\label{fig:ltv_by_age}
\end{figure}

We estimate such potential credit-based refinance savings in
Table~\ref{tab:pot_savings}, all else equal.  Moving
left-to-right along the column headings, we first report a count of the current
loans by loan age.  Next, of the active loans, we present an average
outstanding balance, average payment, and average APR.  The ``Pmts (\#)''
column calculates the remaining payments needed to pay-off the average loan
balance given the average payment.  The next four columns represent the
potential savings in monthly payment if a borrower refinances at the average
APR of the lower risk band, after the estimated point
of credit risk convergence.  If two
risk bands have not yet converged in credit risk, the numbers are not provided
in the table.  The estimated convergence points may be found in
Table~\ref{tab:risk_conv_mat}.  The calculations do not assume any upfront
refinancing charge.

We find that borrowers in all four non-super-prime risk bands,
deep subprime, subprime, near-prime, and prime, appear to
leave money on the table.  On a monthly payment basis,
deep subprime borrowers begin to overpay between \$11-63 per month around loan
age 36, for a total potential savings between \$193-1{,}153.  Based on our
estimates, deep subprime borrowers would benefit the most in terms of total
savings by waiting until approximately loan age 50, when they converge in risk
with prime borrowers.  In terms of monthly payment savings, deep subprime
borrowers should wait to refinance until they converge in credit risk with
super-prime borrowers.  Encouragingly, we see that most deep subprime borrowers
have prepaid or refinanced by loan age 60, which suggests some self-correction,
albeit slower than our calculations would recommend.
The situation for subprime
borrowers is similar; they benefit the most in total savings by refinancing by
loan age 42, when they converge in credit risk with prime borrowers.  Overall,
the potential total savings over the life of the loan for subprime borrowers
ranges between \$299-1{,}616.  In terms of monthly payment, subprime borrowers
benefit the most by waiting until loan age 48, when they converge in credit
risk to super-prime borrowers. In total, the potential monthly payment savings
for subprime borrowers ranges between \$22-61.  As with deep subprime
borrowers, it seems most have refinanced by loan age 60.  While this is slower
than our calculations would suggest, it still indicates borrowers may be
attempting to self-correct.  These results would be in addition to any
consumer refinance inefficiency attributable to changes in interest rates
\citep[e.g.,][]{keys_2016, agarwal_2017, andersen_2020}.

In moving to discuss borrowers in lower risk bands, we find slightly different
results.  As with deep subprime and subprime borrowers, we also find evidence
that near-prime and prime borrowers operate inefficiently with respect to a
credit-based refinance, ceteris paribus.  We
estimate that near-prime borrowers are eligible for a potential monthly payment
savings of \$13-56 with a potential total savings of \$160-2{,}206.
The figures
for prime borrowers are similar; a potential \$18-39 in monthly savings with a
potential total savings of \$261-2{,}327.  On the other hand, we find that both
near-prime and prime borrowers should refinance as soon as possible, after 15
months for near-prime borrowers when they converge in credit risk with prime
borrowers and after 12 months for prime borrowers when they converge in credit
risk with super-prime borrowers.  We find that both near-prime and prime
borrowers do not start refinancing in earnest until approximately loan age 60,
similar to borrowers in the higher risk bands.  This suggests that near-prime
and prime borrowers manage their loans less efficiently than deep subprime and
subprime borrowers, a result that is surprising given typical expectations
about borrower sophistication and credit score.\footnote{
As an alternative interpretation, it may be that the greater affluence of
near-prime and prime borrowers allows a non-optimal efficiency to persist out
of the perceived inconvenience of going through a refinancing versus the
potential savings.  We thank Susan Woodward for this observation. It also of
interest to compare this finding with the profitability analysis of FHA-insured
mortgages in \citet{deng_2006}.
}
We note that the savings
assuming the 2019 transition matrix (Table~\ref{tab:risk_conv_mat}), given
its earlier convergence points, are generally more substantial for the
recoverable estimates.\footnote{
We have omitted these figures for brevity and
conservatism, but the corresponding author may be contacted for details.
}

\begin{table}[t!]
\caption{
\footnotesize{
\textbf{Estimated Savings by Risk Band, Loan Age.}\newline
Potential savings for a borrower who refinances at the average interest
rate of a superior risk band after the point of credit risk convergence in
Table~\ref{tab:risk_conv_mat} (S = subprime, NP = near-prime, P = prime,
SP = super-prime).
}
}
\scriptsize{
{
\begin{adjustbox}{max width=\textwidth}
\begin{tabular}{ccccccccccccccc}
    & & & \multicolumn{3}{c}{Averages} & & \multicolumn{4}{c}{Mo Pmt Savings (\$)}
    & \multicolumn{4}{c}{Total Savings (\$)}\\
    \cmidrule(r{0.5em}){4-6} \cmidrule(r{0.5em}){8-11} \cmidrule(l{0.5em}){12-15}
    & Age & Cnt & Bal (\$) & Pmt (\$) & APR (\%) & Pmts (\#) & S & NP & P & SP & S & NP & P & SP\\
    \toprule
    \multirow{11}{*}{\rotatebox[origin=c]{90}{\small{deep subprime}}} 
&12&17,558&14,245&365&22.58&65&&&&&&&&\\
&15&16,125&13,844&364&22.56&62&&&&&&&&\\
&18&14,375&13,520&363&22.54&60&&&&&&&&\\
&24&11,628&12,836&361&22.50&56&&&&&&&&\\
&30&9,492&11,973&361&22.46&50&&&&&&&&\\
&36&7,746&10,985&359&22.46&44&16&&&&586&&&\\
&42&6,050&9,833&357&22.46&38&16&&&&490&&&\\
&48&4,899&8,799&358&22.43&33&18&&&&438&&&\\
&50&4,622&8,312&358&22.44&30&12&33&52&&267&729&1,153&\\
&54&3,568&7,485&360&22.37&26&11&30&47&61&193&531&845&1,093\\
&60&12&6,923&377&22.00&23&21&39&54&63&251&466&643&759\\
    \midrule
    \multirow{11}{*}{\rotatebox[origin=c]{90}{\small{subprime}}} 
&12&18,261&16,693&395&17.97&64&&&&&&&&\\
&15&17,021&16,126&394&17.96&61&&&&&&&&\\
&18&15,487&15,619&393&17.95&59&&&&&&&&\\
&24&12,997&14,621&389&17.94&54&&32&&&&1,557&&\\
&30&11,021&13,420&388&17.94&48&&30&&&&1,275&&\\
&36&9,309&12,194&386&17.94&42&&25&&&&904&&\\
&42&7,481&10,835&384&17.93&37&&29&54&&&857&1,616&\\
&48&6,192&9,506&383&17.92&31&&22&44&61&&526&1,055&1,473\\
&50&5,901&8,953&383&17.93&29&&23&44&60&&508&963&1,325\\
&54&4,542&7,975&386&17.94&25&&22&40&55&&389&723&988\\
&60&22&7,021&414&17.47&20&&25&40&50&&299&477&596\\
\midrule
    \multirow{11}{*}{\rotatebox[origin=c]{90}{\small{near-prime}}}
&12&5,807&19,111&411&12.79&64&&&&&&&&\\
&15&5,587&18,245&407&12.76&60&&&39&&&&2,206&\\
&18&5,315&17,617&405&12.74&58&&&40&&&&2,158&\\
&24&4,692&16,204&402&12.72&52&&&35&&&&1,657&\\
&30&4,146&14,694&400&12.71&47&&&37&&&&1,546&\\
&36&3,592&13,187&398&12.71&41&&&31&56&&&1,116&2,000\\
&42&3,041&11,446&394&12.67&35&&&28&49&&&847&1,481\\
&48&2,622&9,862&394&12.68&29&&&21&39&&&494&928\\
&50&2,455&9,283&395&12.69&27&&&20&37&&&436&811\\
&54&1,663&8,218&400&12.69&24&&&29&44&&&526&798\\
&60&63&6,435&413&11.98&17&&&13&22&&&160&269\\
\midrule
    \multirow{11}{*}{\rotatebox[origin=c]{90}{\small{prime}}}
&12&5,173&18,582&358&7.83&64&&&&39&&&&2,327\\
&15&5,283&17,611&354&7.81&60&&&&33&&&&1,880\\
&18&5,315&16,706&350&7.78&57&&&&30&&&&1,627\\
&24&4,971&15,097&346&7.76&52&&&&32&&&&1,535\\
&30&4,538&13,503&345&7.74&46&&&&30&&&&1,245\\
&36&4,096&11,866&344&7.73&39&&&&21&&&&755\\
&42&3,697&10,274&342&7.72&34&&&&23&&&&703\\
&48&3,191&8,615&343&7.71&28&&&&21&&&&513\\
&50&2,963&8,101&345&7.71&26&&&&21&&&&460\\
&54&1,898&7,075&351&7.66&22&&&&18&&&&324\\
&60&92&4,756&328&7.38&16&&&&22&&&&261\\
    \bottomrule
\end{tabular}
\end{adjustbox}
}
}
\label{tab:pot_savings}
\end{table}

It is of further interest to examine loan prepayment behavior, which is also
possible with the techniques of Section~\ref{subsec:stat_results}.
Specifically, the
CSH rate estimator defined in \eqref{eq:csh_est}, $\hat{\lambda}^{02}_{\tau,n}$,
is a direct estimator for prepayment behavior, also conditional on survival.
Hence, we can report similar figures to Section~\ref{subsec:emp_res} but
instead focus on borrower prepayment behavior conditioning on the set of
current loans.  From this, we can attempt to
explain consumer behavior and assess if borrowers are acting on the potential
savings reported in Table~\ref{tab:pot_savings}.  For context, we also overlay
two additional economic variables.  The first is the Manheim Used Auto Price
Index\footnote{
For reference, the Bloomberg ticker is \texttt{MUVVU}.  We report seasonally
adjusted figures.
} \citep{manheim_2023}, which is a common industry assessment of the prevailing
value of used automobiles.  Given the unusual observations in the used auto
market during the COVID-19 pandemic \citep{rosenbaum_2020}, it is possible that
higher-than-expected trade-in values motivated consumers to prepay their
loans.\footnote{
This would likely reduce the annual depreciation rate of used automobiles.
In a robustness check, we halve the 31\% depreciation rate of
\citet{storchmann_2004} to 16\% annually, and the deep subprime and
subprime risk bands keep LTVs north of 100\% beyond loan age 52, the latest
convergence point in Table~\ref{tab:risk_conv_mat}.
}
Additionally, the United States federal government provided individuals with
three direct payments known as Economic Impact Payments (EIPs) and expanded the
Childcare Tax Credit (CTC) during the observation period of our sample
\citep{gao_2022}.  It is thus possible that borrowers, upon receiving these
cash payments, made the decision to purchase a different vehicle and thus
prepay. The results are presented in Figure~\ref{fig:repay_all}.

\begin{figure}[t!]
    \centering
    \includegraphics[width=\textwidth]{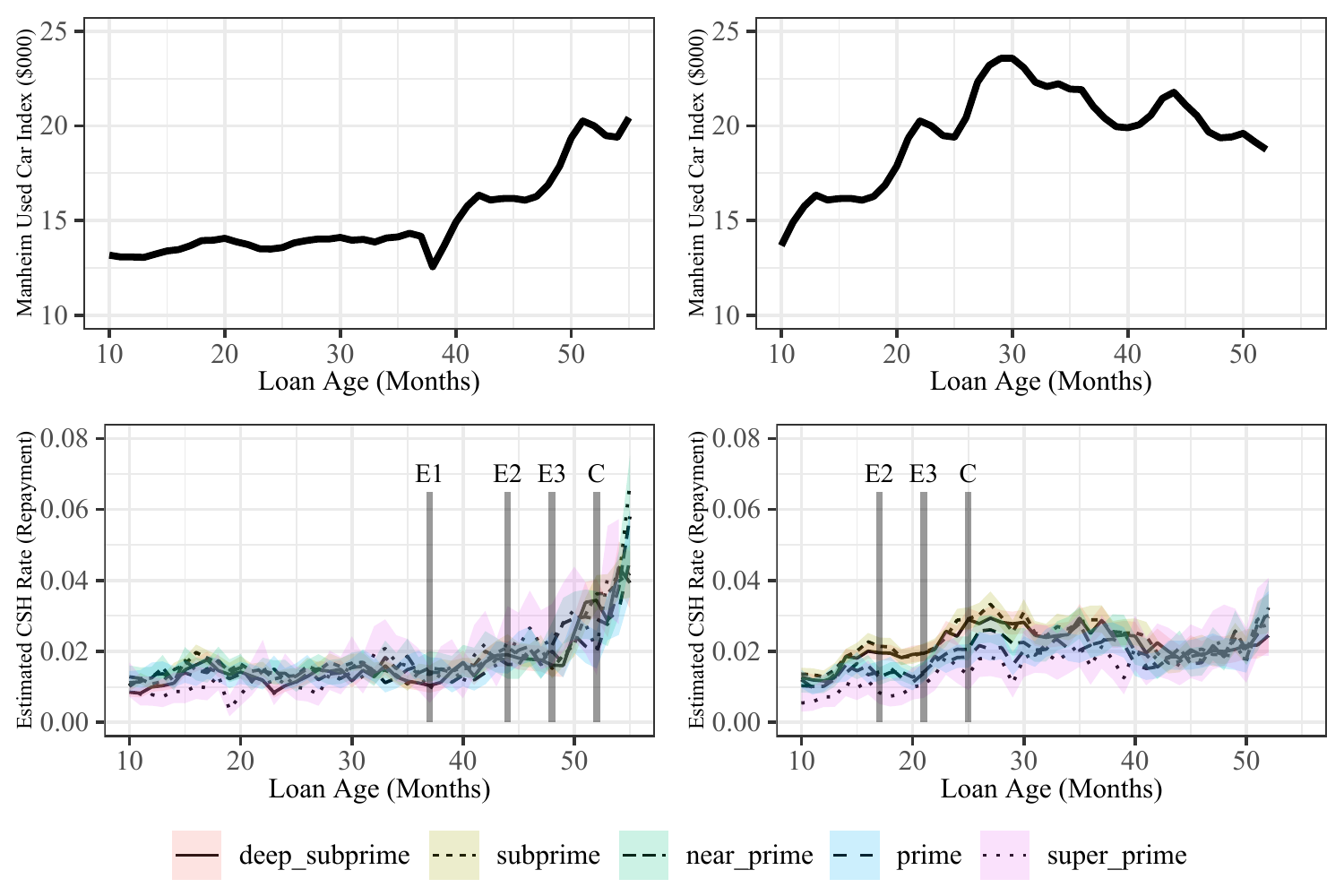}
    \caption{
\footnotesize{  
\textbf{Consumer Prepayment Behavior, Used Autos, Economic Stimulus.}\newline  
(top) A plot of the Manheim Used Auto Index (price) \citep{manheim_2023}
by approximate loan age for the sample of 58{,}118 filtered loans issued
in 2017 (left) and 65{,}802 loans issued in 2019 (right).
(bottom) A plot of $\hat{\lambda}^{02}_{\tau,n}$ (prepayments) defined in
\eqref{eq:csh_est} by loan age for all risk bands within the sample of
58{,}118 filtered loans issued in 2017 (left) and 65{,}802 loans issued in
2019 (right), plus 95\% confidence intervals using Lemma~\ref{cor:haz_ci}.
By the hypothesis test defined in \eqref{eq:H0}, there is very little
difference in prepayment behavior conditional on survival by risk band.
The labels E1, E2, E3, and C indicate the timing of the Economic Impact
Payments and Childcare Tax Credit expansion \citep{gao_2022}.
}    
}
    \label{fig:repay_all}
\end{figure}

There appears to be very
little difference in prepayment behavior by risk band throughout the life of
the loan, which differs significantly from default rates.  This is especially
so for the sample of 58{,}118 loans issued in 2017.  Visually, we see some
differences in the sample of 65{,}802 loans issued in 2019, but many of the
asymptotic confidence intervals still overlap by risk band.  Further, there
does appear to be a meaningful connection between prevailing used auto prices
and borrower prepayment behavior.  That is, as the value of used autos rose,
borrowers of current loans appear to increase prepayment frequency.  Indeed,
prepayments occur at a higher rate sooner in the 2019 issuance, when the value
of used autos increased earlier in the loan's lifetimes in comparison to the
2017 issuance.  Furthermore, the timing of economic stimulus payments plotted
against prepayment behavior is also telling.  The prepayment rates for both the
2017 and 2019 issuance also increase shortly after individuals would have
received the first direct EIP from the U.S. federal government.  Because of
the potential savings we observe in Table~\ref{tab:pot_savings}, it is possible
that the EIPs may have also provided individuals with further implicit economic
gains, if they used the EIPs to refinance at a lower interest rate.  The
results of Figure~\ref{fig:repay_all} in connection with
Table~\ref{tab:pot_savings} taken together suggest that individual borrowers
may not consider their updated risk profile in deciding to prepay.  Instead,
the borrowers may be more motivated by economic indicators that are more
tangible, such as direct cash payments or higher trade-in values.

Given Table~\ref{tab:pot_savings}, it begs the question: why
does the market for mature consumer auto loans appear to operate inefficiently
with respect to credit-based refinancing?
A natural starting point is a lack of borrower sophistication in performing
an updated personal risk assessment as a loan remains current.  Generally,
the typical consumer has a poor reputation in making financial decisions
\citep[e.g.][]{gross_2002, stango_2011, lusardi_2013, campbell_2016,
heidhues_2016, dobbie_2021},
and the type of calculations we perform herein assume some advanced expertise,
such as a working understanding of actuarial mathematics.  An inability to
self-assess creditworthiness within financial markets against a current APR
seems to plague borrowers within all risk bands, as we find the surprising
result that it is actually the near-prime and prime borrowers that leave the
most money on the table by delaying prepayment, ceteris paribus.

It may not be fair to blame this perceived borrower inefficiency solely on the
borrowers, however.  A borrower's main tool to assess creditworthiness is their
credit score.  While consumers have obtained better access to credit scores,
they may update too slowly within the context of a 72-73
month consumer auto loan to motivate a borrower to seek out a lower rate.
Additionally, such borrowers may face friction in
attempting to refinance mature auto loans, either through limited options,
refinancing fees, or perceived hassle.  Indeed, encouraging borrowers to
self-correct has proven to be less effective in practice
\citep[e.g.,][]{keys_2016, agarwal_2017}.
From this point of view, we see an opportunity for lenders to target these
mature loans from borrowers in higher risk bands.\footnote{
There are examples of specialty finance companies in the student loan space
that attempt to refinance borrowers into lower interest rates (e.g.,
SoFi).  The size (and potential profitability) of such loans may be larger than
auto loans, however.  In addition, given all students loans are originally
subject to the same underwriting standards and the wide disparity of ultimate
educational outcomes, the level of risk mispricing is likely more egregious
and thus easier to exploit than for auto loans.
}
Because a borrower that stays current eventually outperforms their initial risk
profile and loan APRs are constant throughout the life of the loan, it is not
a leap in logic to suggest there exists a lower rate that would both lower 
this borrower's financing cost and be profitable to a second lender.
On the other hand, lenders themselves may face similar market frictions, such
as an inability to identify these borrowers or unattractive returns after
accounting for the full scope of origination costs.\footnote{
For example, refinancing a secured automobile loan will require updating the
value of the underlying collateral.  Because it is an automobile, the possible
depreciation may outpace the equity position, especially for high-APR loans on
used automobiles.  This may potentially change the loan-to-value ratio at the
point of refinance (see Figure~\ref{fig:ltv_by_age}).
Housing generally appreciates, conversely, which is an interesting
contrast.  For the estimates within this manuscript, such as in
Table~\ref{tab:pot_savings}, the odd behavior of the used auto markets over
our observation period
(i.e., Figure~\ref{fig:repay_all}) suggests borrowers may have even more
substantial gains if refinancing at unusually strong collateral values.
We thank
Chellappan Ramasamy for drawing our attention to the nuances of the asset
component required in any such refinance calculation.
}
We are optimistic that
continued increases in financial technology may lower these possible hurdles
for both borrowers and lenders.

To spur future research, we suggest two potential solutions.
The first is that we
see a market ripe for financial innovation.  Specifically, we propose that 
lenders offer a loan structure with a reducing payment based on good 
performance, an \textit{adjustable payment loan}.%
\footnote{President Barack Obama remarked during the signing of the
Dodd-Frank Wall
Street Reform and Consumer Protection Act that, ``We all win when consumers are
protected against abuse.  And we all win when folks are rewarded based on how
well they perform, not how well they evade accountability''\citep{obama_2010}.
The terms ``abuse'' and ``evade accountability'' feel strong, but attempting
to reward borrowers based on good performance feels aligned in spirit with an
ideal of merit-based economic gains.}%
\textsuperscript{,}%
\footnote{See \citet{zhang_2023} for an automatically refinancing mortgage
model based on changes in interest rates.}
It is likely lenders already
possess the data needed to provide pricing structures capable of adjusting for
a borrower's updated risk profile.  We postulate that a lower future payment
may act as an incentive for a borrower to remain active and paying, which could
work to offset potential profit losses from lowering rates to these
high-interest rate loans that perform well.  We caution lenders from making
opposite adjustments, however, as increasing payments in response to poor
performance (i.e., sudden delinquencies) may further discourage a likely
overwhelmed borrower or lead to adverse selection (though late payment
penalties are common).\footnote{
It is possible that the process of
securitization, whereby default risk is transferred off a lender's balance
sheet, creates a disincentive for lenders to maintain a continued interest in
loan performance.  At the same time, for loans not securitized, it is difficult
to ask a for-profit lender to actively seek out lower profits.  We suspect the
branch with the most potential fruit is a second opportunistic lender, or
perhaps some speciality finance companies connected with \textit{responsible
investing} (i.e., environmental, social, and governance (ESG), socially
responsible investing (SRI), or impact investing).
}
Second, there is always the regulatory angle, which has
been successful in other consumer lending spaces
\citep[e.g.,][]{stango_2011,agarwal_2014}.\footnote{
One very positive example is Reg AB II \citep{reg_ab2}, which has made all the
data used herein freely available to the public.
}
For example, there is potentially
minimal additional cost to lenders to require ongoing loans to be underwritten
again after a set period of good performance, say 36 months, especially given
the lender will already have most of the borrower's information.  Ideally,
this update would not count as a formal inquiry against the borrower's
credit report.\footnote{
Relatedly, sending reminder notices about refinancing has had some success
\citep[e.g.,][]{byrne_2023}.
}
Further, given the results of Figure~\ref{fig:repay_all}, an
initial cash payment incentive to borrowers may provide sufficient
motivation to get borrowers to refinance.\footnote{
The overall economic impact of such a program may be mixed, given the results
for the ``cash for clunkers'' program \citep{mian_2012}.  Alternatively,
competing lenders themselves may offer cash to borrowers in exchange
for refinancing.
}
On the other hand, regulatory intervention to increase the cost of lending may
lead to these extra costs being pushed back to the borrowers.  We leave these
proposals open to further research.

\section{Conclusion}
\label{sec:conclusion}

This article tells a familiar financial story in a new way, with added details.
We arrive at the familiar aspect, collateralized loan seasoning and
consumers behaving in a way that is financially inefficient, but we do so in
a relatively unstudied asset class with rapidly depreciating collateral values
-- consumer automobile loans -- and find
an inefficiency due to a {\it credit-based} rather than interest rate-based
refinancing,
all else equal.  Furthermore, we employ novel financial econometric techniques
to statistically pinpoint convergence ages between risk bands and provide
economic estimates of forgone potential credit-based refinance savings.
Specifically, we estimate the point of
credit risk convergence using a financial econometric hypothesis test we
derive via large-sample asymptotic statistics from the field of survival
analysis.
We analyze over 140{,}000 consumer automobile loans from
three different samples taken from ABS bonds spanning nearly six years:
Spring 2017 through Winter 2023.  Our techniques allow for the analysis of
loans sampled from ABS, rather than be restricted to direct consumer loan
data, and they are appropriately calibrated for discrete-time data.

We estimate that conditional credit risk converges between all disparate risk
bands in our sample of 72-73 month auto loans prior to scheduled termination.
The rate of convergence depends on the two risk bands being compared,
with the full transition matrix in Table~\ref{tab:risk_conv_mat}.
These results are
robust to various sensitivity tests for the Coronavirus pandemic, loans secured
with new or used vehicles, and the business model of the loan originator's
parent company.
Further, the financial econometric tool we derive is completely
informed by the data; it is model agnostic and thus robust to undue model
assumptions.  Hence, our probabilistic estimates
in Section~\ref{sec:CRC} may be used as a benchmark to calibrate future
economic models.  Finally, the estimated LTVs north of 100\% throughout the
lifetime of the auto loans we study (see Figure~\ref{fig:ltv_by_age}) reveal
credit risk behavior that is non-obvious for collateralized loans (i.e., a
standard in-the-moneyness analysis, such as \citet{campbell_2015}).

We follow the
theoretical results and empirical credit risk convergence point estimates
with a thorough financial analysis of these loans using our techniques for
added precision.  We combine the empirical point estimates of
credit risk convergence with risk band
APRs to estimate month-by-month expected annualized
risk-adjusted returns for lenders.  It assigns
values to the typical back-loading of profits commonly found in high-risk
loans: deep subprime borrowers have early unstable expected risk-adjusted
returns around 5\% before eventually increasing to nearly 15\%.
We then consider a consumer perspective.  Because a borrower's APR
reflects a single point-in-time assessment of credit risk at origination
(i.e., risk-based pricing),
a high-risk borrower that remains current eventually outperforms the
stale APR.  We find borrowers in all risk bands below super-prime delay
a credit-based prepayment in a way that is economically inefficient, all
else equal.  This is a potentially added form of inefficient consumer
refinance behavior, given the interest-rate inefficiency
found in mortgages
\citep[e.g.,][]{keys_2016, agarwal_2017, andersen_2020}.

We estimate prime
and near-prime borrowers leave up to \$2{,}327 and \$2{,}206 in total
potential savings on the table, respectively.  In a surprise, this outpaces
subprime and deep subprime borrowers, who leave up to \$1{,}616 and
\$1{,}153 in total potential savings on the table, respectively. We then
utilize the survival analysis estimators to model a current borrower's
prepayment behavior.  In a visual analysis, we find borrowers' prepayment
behavior is potentially
motivated by increases in used auto values over the
observation period and economic stimulus payments.  This suggests that current
borrowers may not look to refinance into lower rates based on an improving risk
profile.  We then opine that market frictions may exist for both the borrower
and lender that lead to these inefficiencies to persist.

In closing, our theoretical and empirical contributions complement each other
to establish a novel benchmark within consumer auto loans
for future calibration and exploration, as well
as arm researcher's with new financial econometric tools to do so.
For example, our methods may be used to estimate the point of credit risk
convergence in other
forms of fixed-income debt, such as consumer loans outside of autos, and
even more broadly, such as with corporate and sovereign debt.
To make these claims formally, however, more study is needed.  Within consumer
and household finance more specifically, our
credit risk convergence estimates inform a thorough analysis of consumer
automobile loans.  We construct an actuarial approach built on our
probabilistic estimates to estimate an expected
risk-adjusted return for lenders, which recovers an expected back-loading of
profits. In shifting our perspective to the consumer, we further apply our
novel methods to estimate borrowers
delay credit-based prepayment in a manner that is
economically inefficient.  We suspect these household finance results will
similarly extend to other forms of consumer debt, such as residential
mortgages.

\appendix

\begin{center}
{\Large \bf Appendix}
\end{center}

\renewcommand{\thesection}{\Alph{section}}%
\renewcommand{\thesubsection}{\Alph{subsection}}%
\renewcommand{\thesubsubsection}{\Alph{subsection}.\arabic{subsubsection}.}%

\counterwithin{figure}{section}
\counterwithin{table}{section}
\renewcommand\thefigure{\thesection\arabic{figure}}
\renewcommand\thetable{\thesection\arabic{table}}

\section{Asymptotic Properties}
\label{subsec:csh_props}

The vector of CSH estimators using \eqref{eq:csh_est} for 
$\Delta + 1 \leq x \leq \xi$ has convenient asymptotic properties, which
we now summarize.

\begin{proposition}[$\hat{\bm{\Lambda}}^{0i}_{\tau,n}$ \textit{Asymptotic Properties}] 
\label{thm:asymH}
For $i \in \{1,2\}$, define 
$\hat{\bm{\Lambda}}^{0i}_{\tau,n} = 
(\hat{\lambda}^{0i}_{\tau,n}(\Delta + 1),
\ldots, 
\hat{\lambda}^{0i}_{\tau,n}(\xi))^\top$, where
\begin{equation*}
    \hat{\lambda}^{0i}_{\tau,n}(x)
    = \frac{ \hat{f}^{0i}_{*, \tau, n}(x) }{ \hat{U}_{\tau,n}(x) }
    = \frac{
    \sum_{j=1}^{n} 
    \mathbf{1}_{X_j \leq C_j} 
    \mathbf{1}_{Z_{X_j}=i} 
    \mathbf{1}_{\min(X_j,C_j)=x}
    }
    {
    \sum_{j=1}^{n} 
    \mathbf{1}_{Y_j \leq x \leq \min(X_j, C_j)}
    }.
\end{equation*}
Then,
\begin{enumerate}[(i)]
\item
\begin{equation*}
    \hat{\bm{\Lambda}}^{0i}_{\tau,n} 
    \overset{\mathcal{P}}{\longrightarrow}
    \bm{\Lambda}_{\tau}^{0i},
    \text{ as } n \rightarrow \infty;
\end{equation*}
\item
\begin{equation*}
    \sqrt{n}( \hat{\bm{\Lambda}}^{0i}_{\tau,n} - \bm{\Lambda}_{\tau}^{0i})
    \overset{\mathcal{L}}{\longrightarrow}
    N(\bm{0}, \bm{\Sigma}^{0i}), \text{ as } n \rightarrow \infty,
\end{equation*}
\end{enumerate}
where $\bm{\Lambda}_{\tau}^{0i} = 
(\lambda^{0i}_{\tau}(\Delta+1), 
\ldots, 
\lambda^{0i}_{\tau}(\xi))^\top$ with
$\lambda^{0i}_{\tau}(x) = f^{0i}_{*, \tau}(x) / U_{\tau}(x)$ and
\begin{equation*}
    \bm{\Sigma}^{0i} = \textup{diag} \bigg(
    \frac{ 
    f_{*,\tau}^{0i}(\Delta + 1) 
    \{ U_{\tau}(\Delta+1) - f_{*,\tau}^{0i}(\Delta+1)\}
    }
    {U_{\tau}(\Delta+1)^3}, 
    \ldots,
    \frac{ f_{*,\tau}^{0i}(\xi) \{ U_{\tau}(\xi) - f_{*,\tau}^{0i}(\xi)\}}
    {U_{\tau}(\xi)^3}
    \bigg).
\end{equation*}
That is, the cause-specific hazard rate estimators 
$\hat{\lambda}_{\tau,n}^{0i}(\Delta + 1), 
\ldots, 
\hat{\lambda}_{\tau,n}^{0i}(\xi)$ are 
asymptotically unbiased, asymptotically multivariate normal, and
asymptotically independent.
\end{proposition}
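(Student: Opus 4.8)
The plan is to exploit the fact that, across the $n$ loans, the observed tuples $\{(X_j, Y_j, C_j, Z_{X_j})\}_{1 \le j \le n}$ are i.i.d., so that both the numerator and the denominator of $\hat{\lambda}^{0i}_{\tau,n}(x)$ are sample averages of i.i.d. indicators. For loan $j$ and age $x$, set $A^{(j)}_x = \mathbf{1}_{X_j \le C_j}\mathbf{1}_{Z_{X_j}=i}\mathbf{1}_{\min(X_j,C_j)=x}$ and $B^{(j)}_x = \mathbf{1}_{Y_j \le x \le \min(X_j,C_j)}$, so that $\hat{f}^{0i}_{*,\tau,n}(x) = n^{-1}\sum_j A^{(j)}_x$ and $\hat{U}_{\tau,n}(x) = n^{-1}\sum_j B^{(j)}_x$ have means $f^{0i}_{*,\tau}(x)$ and $U_\tau(x)$. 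Part (i) then follows from the weak law of large numbers applied coordinatewise, together with the continuous mapping theorem for the ratio, which is licensed by $U_\tau(x) > 0$ on the recoverable window; this also gives the asymptotic unbiasedness.

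For part (ii) I would stack the pairs $(A^{(j)}_x, B^{(j)}_x)_{\Delta+1 \le x \le \xi}$ into one bounded random vector, apply the multivariate central limit theorem to its sample mean, and then pass to the estimator through the multivariate delta method with the map $g$ sending $(f(x), U(x))_x \mapsto (f(x)/U(x))_x$. Since each coordinate of $g$ depends only on its own pair, the Jacobian is block-diagonal with blocks $(1/U_\tau(x),\, -f^{0i}_{*,\tau}(x)/U_\tau(x)^2)$, and the variance at a fixed $x$ reduces to the three within-$x$ second moments. The only nonobvious input is $\mathrm{Cov}(A_x, B_x)$: because $A_x = 1$ forces $\min(X_j, C_j) = X_j = x$ with $Y_j \le x$, one has $A_x \le B_x$, hence $A_x B_x = A_x$ and $\mathrm{Cov}(A_x, B_x) = f^{0i}_{*,\tau}(x)\{1 - U_\tau(x)\}$. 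Writing $f = f^{0i}_{*,\tau}(x)$ and $U = U_\tau(x)$ and feeding this together with $\mathrm{Var}(A_x) = f(1-f)$ and $\mathrm{Var}(B_x) = U(1-U)$ into the delta-method form collapses to the stated diagonal entry,
\[
\frac{f(1-f)}{U^2} + \frac{f^2(1-U)}{U^3} - \frac{2f^2(1-U)}{U^3} = \frac{f(U-f)}{U^3}.
\]

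The delicate step, and the main obstacle, is showing that the off-diagonal entries of $\bm{\Sigma}^{0i}$ vanish, i.e.\ the asymptotic independence of $\hat{\lambda}^{0i}_{\tau,n}(x)$ and $\hat{\lambda}^{0i}_{\tau,n}(x')$ for $x \ne x'$. Fix $x < x'$ and abbreviate $f', U'$ at $x'$. Two mutual-exclusion facts dispose of half the bilinear form: a loan has a single observed termination time, so $A_x A_{x'} = 0$; and $A_x = 1$ implies $\min(X_j,C_j) = x < x'$, so $A_x B_{x'} = 0$. Inserting $\mathrm{Cov}(A_x,A_{x'}) = -ff'$ and $\mathrm{Cov}(A_x,B_{x'}) = -fU'$ into the Jacobian-weighted cross-covariance shows the first two of the four terms cancel exactly. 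The remaining terms involve $\mathrm{Cov}(B_x,A_{x'})$ and $\mathrm{Cov}(B_x,B_{x'})$, neither of which is individually zero; the crux is the identity $E[B_x A_{x'}] = (f'/U')\,E[B_x B_{x'}]$. I would obtain this by expressing both expectations through the underlying law: using $C = Y + \tau$ and the independence of $Y$ from $(X, Z_X)$, each factors as the common window probability $\Pr(x'-\tau \le Y \le x)$ times $\Pr(X = x', Z_{x'} = i)$ and $\Pr(X \ge x')$ respectively, whose ratio is precisely $\lambda^{0i}_\tau(x') = f'/U'$. Substituting this proportionality produces a second exact cancellation, so the limiting covariance is diagonal; asymptotic normality and independence then follow from the delta method, completing the proof.
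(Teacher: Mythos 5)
Your proposal is correct, and it isolates exactly the right difficulties: the within-age covariance $\mathrm{Cov}(A_x,B_x)=f(1-U)$ coming from $A_x\le B_x$, and the off-diagonal cancellation driven by the identity $E[B_xA_{x'}]=(f'/U')\,E[B_xB_{x'}]$, which you correctly derive from $C=Y+\tau$ and the independence of $Y$ and $(X,Z_X)$. The route differs from the paper's in the linearization step. The paper avoids the delta method entirely: it writes the \emph{exact} algebraic identity
\begin{equation*}
\hat{\bm{\Lambda}}^{0i}_{\tau,n}-\bm{\Lambda}^{0i}_{\tau}
=\mathbf{A}_{\tau,n}\,\frac{1}{n}\sum_{j=1}^{n}\mathbf{H}^{0i}_{\tau,(j)},
\qquad
H^{0i}_{\tau,k(j)}=A^{(j)}_kU_{\tau}(k)-f^{0i}_{*,\tau}(k)B^{(j)}_k,
\end{equation*}
applies the multivariate CLT to the centered i.i.d.\ vectors $\mathbf{H}^{0i}_{\tau,(j)}$ after showing $\mathrm{Cov}(H_{k},H_{k'})=0$ for $k\ne k'$, and finishes with Slutsky's theorem for $\mathbf{A}_{\tau,n}\to\mathbf{V}_{\tau}$. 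Your delta-method version instead takes the CLT on the stacked $(A_x,B_x)$ vector and differentiates $g(f,U)=f/U$; the Jacobian row $(1/U,\,-f/U^2)$ is, up to the factor $U^{-2}$, exactly the linear combination defining $H_k$, so the four cross-moments you must compute ($E[A_kA_{k'}]$, $E[A_kB_{k'}]$, $E[B_kA_{k'}]$, $E[B_kB_{k'}]$) and the proportionality that kills the off-diagonal terms are identical to the paper's Case 2 computation. What the paper's version buys is that no Taylor remainder needs to be controlled (the identity is exact, and positivity of $U_\tau$ on the recoverable window enters only through Slutsky); what yours buys is a more modular, off-the-shelf argument. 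Either is acceptable, and your variance algebra checks out against the stated $\bm{\Sigma}^{0i}$.
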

\begin{proof}
See the Online Appendix~\ref{subsec:proof2}.
\end{proof}

Often, it is of interest to construct confidence intervals for the 
cause-specific hazard rate estimators such that the confidence intervals have
the desirable property of falling within the interval $(0,1)$.  We may do so as
follows.
\begin{Lemma}[$\lambda_{\tau}^{0i}(x)$ $(1-\theta)$\% \textit{Confidence Interval}]
\label{cor:haz_ci}
The $(1-\theta)$\% asymptotic confidence interval bounded within $(0,1)$ for
$\lambda_{\tau}^{0i}(x)$, $x \in \{\Delta+1, \ldots, \xi\}$, $i = 1, 2$ is
\begin{equation}
    \exp \bigg{ \{ }
    \ln \hat{\lambda}_{\tau, n}^{0i}(x) \pm
    \mathcal{Z}_{(1-\theta/2)}
    \sqrt{
    \frac{\hat{U}_{\tau, n}(x) - \hat{f}^{0i}_{*, \tau, n}(x)}
    {n \hat{U}_{\tau, n}(x) \hat{f}^{0i}_{*, \tau, n}(x) }
    }
    \bigg{ \} },
    \label{eq:haz_ci_cor}
\end{equation}
where $\mathcal{Z}_{(1-\theta/2)}$ represents the $(1-\theta/2)$th percentile 
of the standard normal distribution.
\end{Lemma}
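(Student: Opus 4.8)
The plan is to obtain the interval in \eqref{eq:haz_ci_cor} as a delta-method transformation of the marginal asymptotic normality already furnished by Proposition~\ref{thm:asymH}. Fix $x \in \{\Delta+1,\ldots,\xi\}$ and $i \in \{1,2\}$. Reading off the $x$-th diagonal entry of $\bm{\Sigma}^{0i}$ in part (ii) of Proposition~\ref{thm:asymH} gives the scalar statement $\sqrt{n}(\hat{\lambda}^{0i}_{\tau,n}(x) - \lambda^{0i}_{\tau}(x)) \overset{\mathcal{L}}{\longrightarrow} N(0, \sigma_x^2)$, where $\sigma_x^2 = f^{0i}_{*,\tau}(x)\{U_{\tau}(x) - f^{0i}_{*,\tau}(x)\}/U_{\tau}(x)^3$. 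A naive Wald interval centered at $\hat{\lambda}^{0i}_{\tau,n}(x)$ on this scale is not guaranteed to lie in $(0,1)$, so I instead transform to the log scale before inverting.

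The key step is the delta method applied with $g(u) = \ln u$, which is continuously differentiable with $g'(u) = 1/u \neq 0$ on $(0,1)$, so that $\sqrt{n}(\ln \hat{\lambda}^{0i}_{\tau,n}(x) - \ln \lambda^{0i}_{\tau}(x)) \overset{\mathcal{L}}{\longrightarrow} N(0, \sigma_x^2/\lambda^{0i}_{\tau}(x)^2)$. Substituting $\lambda^{0i}_{\tau}(x) = f^{0i}_{*,\tau}(x)/U_{\tau}(x)$ collapses the variance to $\{U_{\tau}(x) - f^{0i}_{*,\tau}(x)\}/\{U_{\tau}(x) f^{0i}_{*,\tau}(x)\}$, which is exactly the quantity (up to the factor $1/n$) appearing under the square root in \eqref{eq:haz_ci_cor}. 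The standard symmetric Wald interval for $\ln \lambda^{0i}_{\tau}(x)$ thus has endpoints $\ln \hat{\lambda}^{0i}_{\tau,n}(x) \pm \mathcal{Z}_{(1-\theta/2)} \sqrt{\{U_{\tau}(x) - f^{0i}_{*,\tau}(x)\}/\{n\,U_{\tau}(x) f^{0i}_{*,\tau}(x)\}}$.

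Two routine arguments then close the derivation. First, I replace the unknown $U_{\tau}(x)$ and $f^{0i}_{*,\tau}(x)$ inside the variance by their empirical counterparts $\hat{U}_{\tau,n}(x)$ and $\hat{f}^{0i}_{*,\tau,n}(x)$; these are consistent by the same convergence that drives part (i) of Proposition~\ref{thm:asymH}, so Slutsky's theorem preserves the limiting $N(0,1)$ pivot. Second, I exponentiate both endpoints: since $\exp$ is continuous and strictly increasing, it preserves the asymptotic coverage and the ordering of the endpoints, and it maps $\mathbb{R}$ into $(0,\infty)$, so the lower endpoint is automatically strictly positive. This produces precisely \eqref{eq:haz_ci_cor}.

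The main obstacle is not the delta-method algebra but justifying the $(0,1)$ containment and the applicability at the boundary. The log transform secures a strictly positive lower endpoint, but I must separately argue the upper endpoint behaves; here I would use $\hat{f}^{0i}_{*,\tau,n}(x) \leq \hat{U}_{\tau,n}(x)$ (every observed type-$i$ event at age $x$ is also at risk at age $x$), so $\hat{\lambda}^{0i}_{\tau,n}(x) \leq 1$, with the inflation factor remaining controlled for the hazard magnitudes relevant to default and prepayment. The more delicate case is the degenerate one, $\hat{f}^{0i}_{*,\tau,n}(x) = 0$, where $\ln \hat{\lambda}^{0i}_{\tau,n}(x)$ is undefined and the approximation degrades; this is exactly the situation the paper addresses by interpolating with a constant hazard rate, and I would flag that \eqref{eq:haz_ci_cor} is intended for the recoverable range on which the at-risk and event counts are bounded away from zero.
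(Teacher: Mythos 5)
Your proposal is correct and follows essentially the same route as the paper's own proof: a delta-method application with $g(u)=\ln u$ to the marginal normality from Proposition~\ref{thm:asymH}, simplification of the variance via $\lambda^{0i}_{\tau}(x)=f^{0i}_{*,\tau}(x)/U_{\tau}(x)$, plug-in of consistent estimators, and exponentiation of the pivotal interval. Your closing remarks on the degenerate case $\hat{f}^{0i}_{*,\tau,n}(x)=0$ and on the upper endpoint (which the log transform alone does not force below~1) are reasonable additional caveats that the paper's terser proof does not spell out, but they do not change the argument.
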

\begin{proof}
See the  Online Appendix~\ref{subsec:proof2}.
\end{proof}

\section{Section~\ref{subsec:emp_res}: Additional Details}
\label{subsec:emp_add_det}

The purpose of this section is to provide additional details related to
Section~\ref{subsec:emp_res}.
We plot the full five-by-five matrix of CSH rate estimates for default in 
Figure~\ref{fig:haz_grid_default} for the sample of 58{,}118 loans issued in
2017.  It is a complete extension of the
subprime versus prime plot in Figure~\ref{fig:conv_demo}.
That is, Figure~\ref{fig:conv_demo} is a zoomed-in view of the subprime-prime
cell (row 4, column 2) in Figure~\ref{fig:haz_grid_default}.
There are a few consistent observations.  
First, we generally see that the monthly conditional default rate declines
as the credit quality of the risk band improves, as expected.  We again see
a large increase in the hazard rate for the deep subprime, subprime, and near
prime risk bands around loan age 40.
With some approximate date arithmetic from the first payment
month of the ABS bonds (March-April-May 2017), we find that a loan age of 40
months corresponds to roughly Spring 2020 (when adjusted for left-truncation).
This corresponds to the economic impact of the Coronavirus pandemic,
which effectively stopped most economic activity in Spring 2020.
It is interesting that the economic shutdown of
the Coronavirus appears to have had minimal impact on the prime risk band and 
almost no notable impact on the super-prime risk band.
By comparing the asymptotic confidence intervals within each risk band
comparison by loan age,
we may estimate the point of credit risk convergence.
Figure~\ref{fig:haz_grid_default} may be compared with the matrix in the top
row of Table~\ref{tab:risk_conv_mat}.

\begin{figure}[t!]
    \centering
    \includegraphics[width=\textwidth]{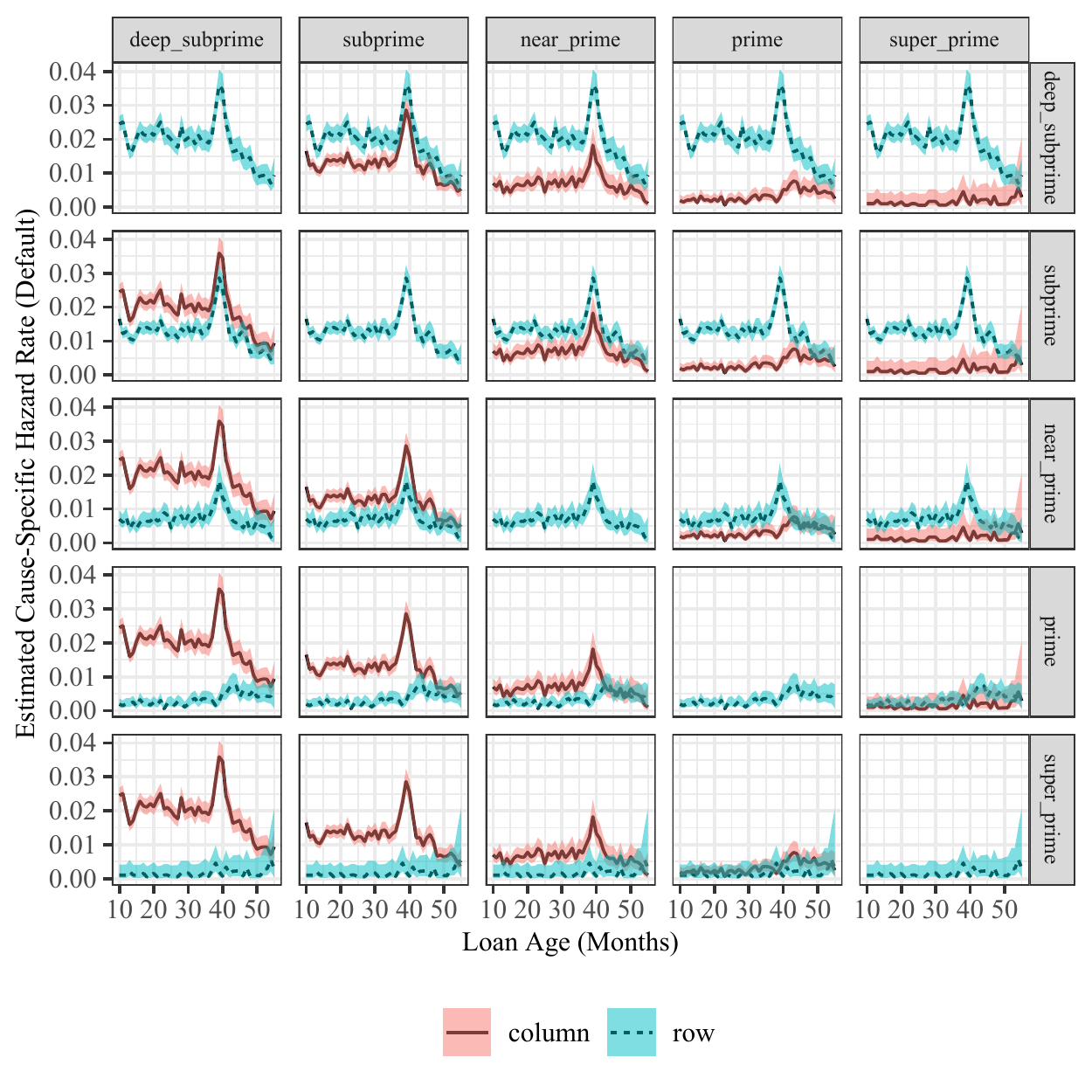}
    \caption{
\footnotesize{  
\textbf{Credit Risk Convergence: All Risk Bands (2017).}\newline   
A plot of $\hat{\lambda}_{\tau,n}^{01}$ (defaults) defined in
\eqref{eq:csh_est} by loan age for all five risk bands
within the sample of 58,118 loans
(Table~\ref{tab:bond_summary}), plus 95\% confidence intervals using
Lemma~\ref{cor:haz_ci}.  We may use the hypothesis test described in
\eqref{eq:H0} by searching for the minimum age that the confidence intervals
overlap between two disparate risk bands.  The
large upward spike in $\hat{\lambda}_{\tau,n}^{01}$ for the deep subprime,
subprime, and near-prime risk bands
around loan age 40 is related to the economic impact of COVID-19, a point
discussed more fully in Section~\ref{subsec:COVID}.
}    
}
    \label{fig:haz_grid_default}
\end{figure}

\section{Section~\ref{subsec:COVID}: Additional Details}
\label{subsec:covid_detail}

The purpose of this section is to provide additional details related to
Section~\ref{subsec:COVID}.
We plot the full five-by-five matrix of CSH rate estimates for default in
Figure~\ref{fig:haz_grid_default2019} for the sample of 65{,}802 loans issued
in 2019.  It is a complete extension of the subprime versus prime plot in
Figure~\ref{fig:COVID_demo}.  That is, Figure~\ref{fig:COVID_demo} is a
zoomed-in view of the subprime-prime cell (row 4, column 2) in
Figure~\ref{fig:haz_grid_default2019}.  The purpose of considering the 2019
issuance is for the sensitivity testing related to COVID-19 (see
Section~\ref{subsec:COVID}).
If the timing of credit risk convergence is completely driven by the Spring
2020 economic shutdown, we would expect to see it occur much earlier in the
2019 sample of bonds when subject to the same loan selection process and risk
band definitions of Section~\ref{subsec:loan_filter}.

As expected, we see the large spike in the
cause-specific hazard rate for defaults around loan age 10, which, when 
adjusted for left-truncation, corresponds to the Spring 2020 economic shutdown.
It occurs much sooner in the comparison to the 2017 issuance.  Overall,
we see some evidence of earlier convergence in superior risk bands, and so
the shock of the economic shutdown of Spring 2020 has played some role.
It is not consistent throughout the 2019 transition matrix in
Table~\ref{tab:risk_conv_mat}, however, and so the timing of
credit risk convergence is not solely a product of COVID-19.
In other words, loan age, in addition to the economic shutdown of Spring 2020,
plays a role in the point estimates of credit risk convergence.

\begin{figure}[t!]
    \centering
    \includegraphics[width=\textwidth]{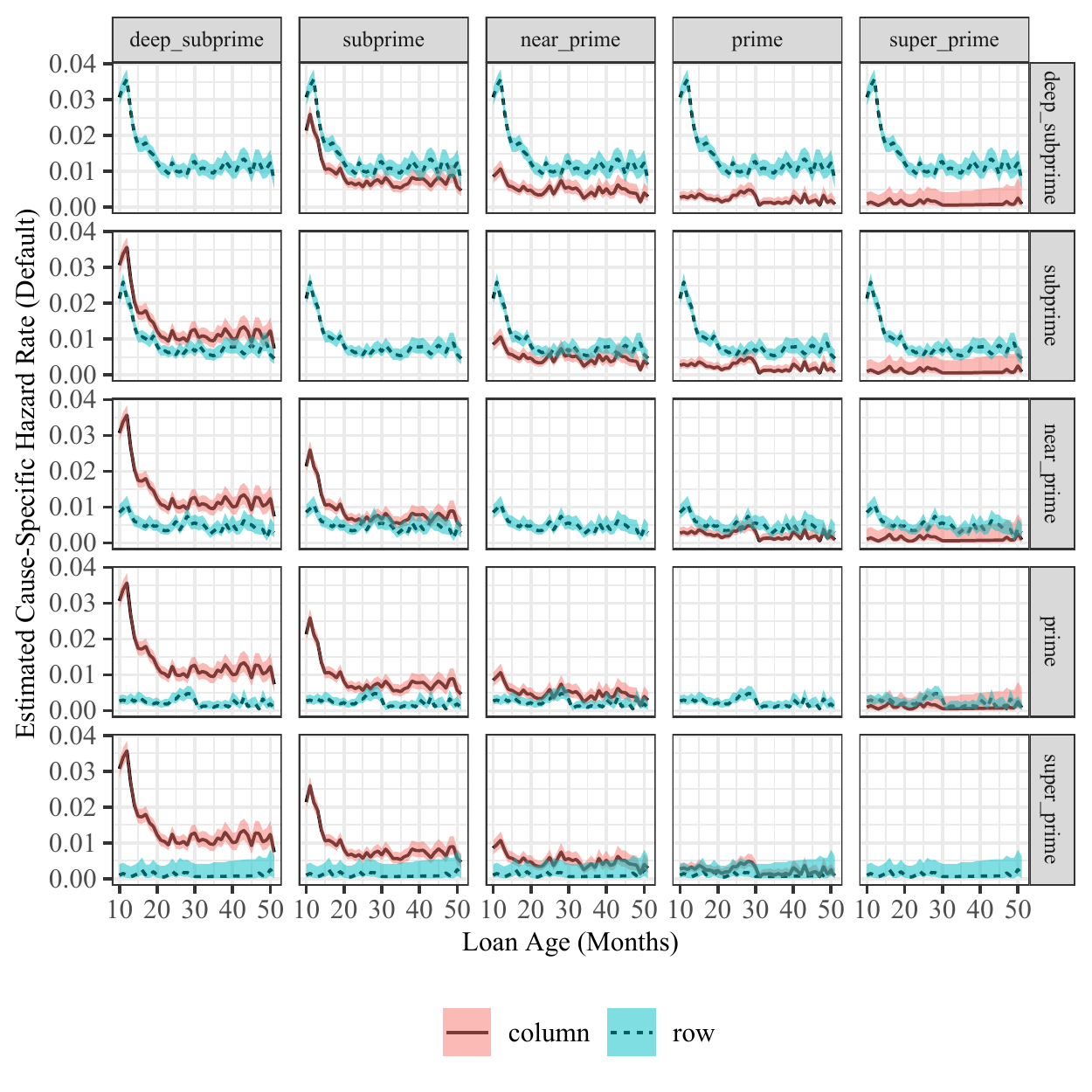}
\caption{
\footnotesize{
\textbf{Credit Risk Convergence: All Risk Bands (2019).}\newline   
A plot of $\hat{\lambda}_{\tau,n}^{01}$ (defaults) defined in
\eqref{eq:csh_est} by loan age for all five risk bands
within the sample of 65{,}802 loans
(Section~\ref{subsec:COVID}), plus 95\% confidence intervals using
Lemma~\ref{cor:haz_ci}.  It is a repeat of Figure~\ref{fig:haz_grid_default}
for the 2019 issuance as a sensitivity check that the economic shock of
COVID-19 is not the sole reason for the estimated timing of
credit risk convergence between disparate
risk bands.
}
}
\phantomsection    \label{fig:haz_grid_default2019}
\end{figure}

\section{Estimating Recovery Upon Default}
\label{subsec:recov}

Consumer auto loans are secured with the collateral of the attached automobile.
In the event of a defaulted loan, the lender has legal standing to repossess 
the vehicle to make up the outstanding balance of the loan.  In most cases, 
particularly for deep subprime and subprime borrowers, the estimated value of a
repossessed automobile in the event of default is an important component in the
initial pricing of a loan.  In this section, therefore, we briefly discuss our
process to estimate a recovery assumption by loan age, which is ultimately
defined as a percentage of the initial loan balance.  Our estimates are
used in the analysis of Section~\ref{subsec:lend_prof}, but we acknowledge the
empirical results may also be of interest to readers more generally.  We thus
present our estimated recovery curve for the 2017 issuance (see
Section~\ref{subsec:summary}) in Figure~\ref{fig:recovery_est}.

The results of Figure~\ref{fig:recovery_est} utilize the detailed reporting of
the loan level data of \citet{cfr_229} to perform the estimation for both the
filtered sample of 58{,}118 loans issued in 2017 and summarized in
Section~\ref{subsec:summary} and the filtered sample of 65{,}802 loans issued
in 2019 and summarized in Section~\ref{subsec:COVID}.
Specifically, we calculate a
sum total of the \texttt{recoveredAmount} field for all loans that ended in 
default.  The \texttt{recoveredAmount} field includes any additional loan
payments made by the borrower after defaulting, legal settlements, and 
repossession proceeds \citep{cfr_229}.  We then divide the total
\texttt{recoveredAmount} by the \texttt{originalLoanAmount} for each defaulted
loan.  Finally, we take an average of these recovery percentages by age of
default in months.  The point estimates may be found in
Figure~\ref{fig:recovery_est}.  Next, for convenient use within the
lender profitability analysis of Section~\ref{subsec:lend_prof},
we nonparametrically smooth the
point estimates using the \texttt{loess()} function in \textsf{R}
\citep{r_citation}.  See the dashed line in Figure~\ref{fig:recovery_est}.
This nonparametric \texttt{loess} curve is then fitted to a gamma-kernel via
ordinary minimization of a sum-of-squared differences, which allows for
extrapolation beyond the recoverable sample space.  See the solid line in
Figure~\ref{fig:recovery_est}.

\begin{figure}[t!]
    \centering
    \includegraphics[width=\textwidth]{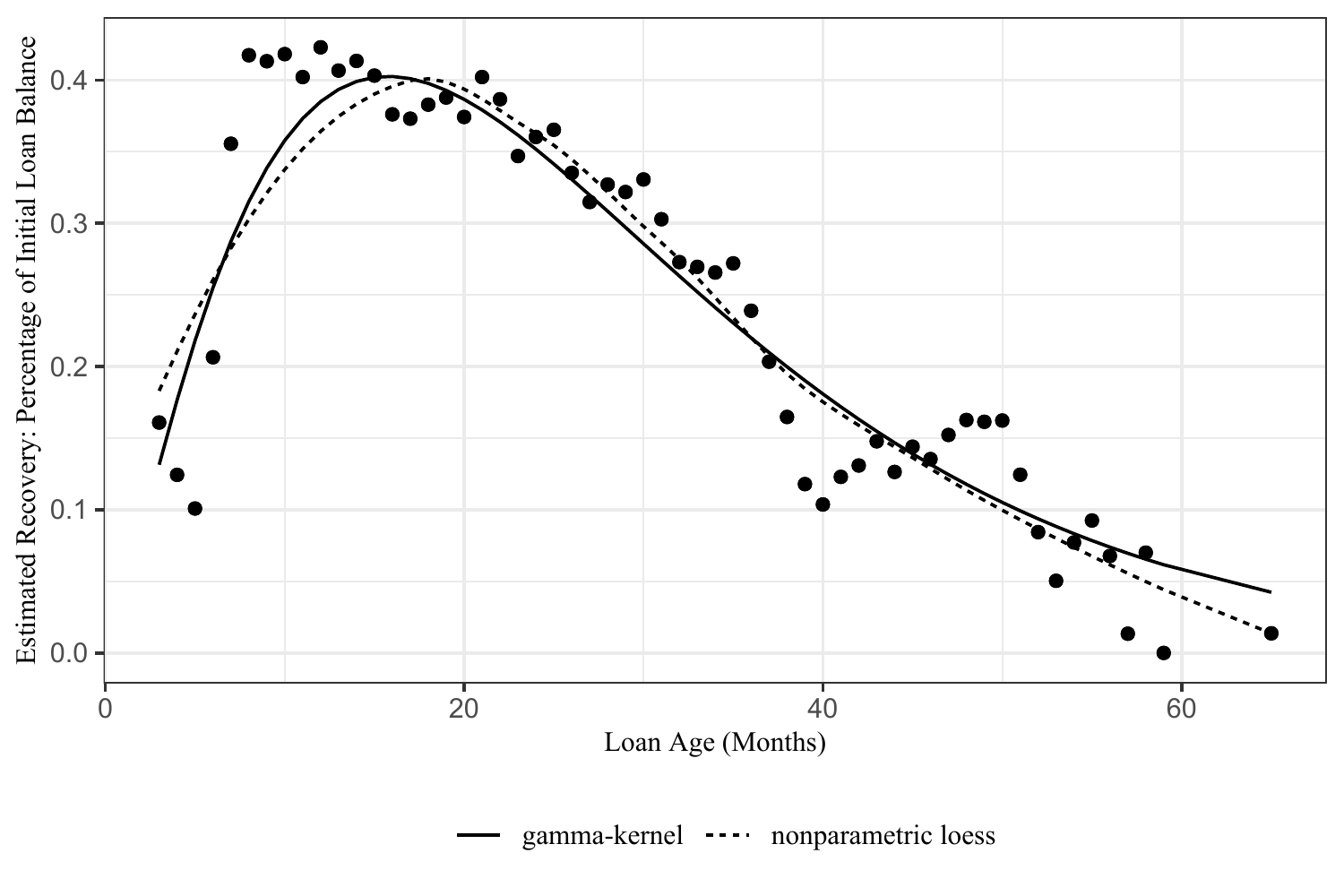}
    \caption{
\footnotesize{
\textbf{Estimation of the Recovery Upon Default Assumption.}\newline
The point estimates are formed using the asset-level data of \citet{cfr_229} 
for the 58{,}118 filtered loans summarized in Section~\ref{subsec:summary}.
Specifically, they are the monthly average of the
sum total of the \texttt{recoveredAmount} field, which
includes any additional loan payments made by the borrower after defaulting,
legal settlements, and repossession proceeds \citep{cfr_229}, divided by
the \texttt{originalLoanAmount} field for each loan that ended in default.
Smoothing techniques are also presented.  The shape of the
recovery curve is similar for the sample of 65{,}802 loans issued in 2019. 
}
}
\phantomsection    \label{fig:recovery_est}
\end{figure}

The shape of the recovery curve warrants some commentary.  Loans that default
shortly after origination generally have a low recovery amount as a percentage
of the initial loan balance, between 10-20\%.  This is likely because a loan 
that defaults so quickly after origination may be due to fraud in the 
initial loan application, extreme circumstances for the borrower (i.e., 
rapid decline in physical health), or severe damage to the 
vehicle.  In the case of damage to the vehicle, it is possible the borrower 
has also lapsed on auto insurance or removed collision insurance.
Overall, it can be difficult to recover a meaningful amount in these
circumstances.  The recovery percentage then peaks 
at month 12 at just over 42\% before declining towards zero as the loan age
approaches termination (72--73 months).  Since all vehicles in our sample are
used, the decline in recoveries reflects the typical depreciating value of the 
automobile over time \citep[e.g.,][]{storchmann_2004}.

We close this section by noting
the economic welfare of an automobile repossession has attracted the attention
of researchers.  Generally, the results are mixed.
On the one hand, \cite{pollard_2021} discuss a vicious cycle of subprime auto 
lending where the same car may be bought, sold, and repossessed 20-30 times.
This suggests repossessions may negatively impact economic welfare. A earlier
result by \cite{cohen_1998} finds that manufacturers prefer to offer 
prospective borrowers interest discounts over equivalent cash rebates because a
legal technicality finds such a discount is financially beneficial to the 
lender in the event of repossession.  In this case, the legal circumstances of
a repossession may influence market behavior.  Along the same lines and an
argument for the potential economic benefits of repossession, 
\cite{assuncao_2013} find that a 2004 credit reform in Brazil, which simplified
the sale of repossessed cars, lead to an expansion of credit for riskier, 
self-employed borrowers.  In other words, a reform designed to make recouping 
money from a repossessed automobile easier for lenders improved the ability of 
riskier borrowers to access credit.  It is noteworthy, however, that the reform
also lead to increased incidences of delinquencies and default.

{\singlespacing \small
\bibliographystyle{jf}
\bibliography{rfs}
}

\clearpage

\begin{center}
{\Large \bf Internet Appendix}
\end{center}

\renewcommand{\thesection}{\Alph{section}}%
\renewcommand{\thesubsection}{\Alph{subsection}}%
\renewcommand{\thesubsubsection}{\Alph{subsection}.\arabic{subsubsection}.}%

\counterwithin{figure}{section}
\counterwithin{table}{section}
\renewcommand\thefigure{\thesection\arabic{figure}}
\renewcommand\thetable{\thesection\arabic{table}}

\section{Proofs: Section~\ref{sec:CRC}}
\label{subsec:proof2}

\begin{proof}[Proof of Proposition \ref{thm:asymH}.]
Statement \textit{(i)} follows from \textit{(ii)}, so it is enough to show \textit{(ii)}. Let
$\Delta + 1 \leq k \leq \xi$ and observe
\begin{align*}
    &\hat{\lambda}_{\tau,n}^{0i}(k) - \lambda_{\tau}^{0i}(k)\\
    &= \frac{
    \frac{1}{n} \sum_{j=1}^{n} 
    \mathbf{1}_{X_j \leq C_j} 
    \mathbf{1}_{Z_{X_j}=i} 
    \mathbf{1}_{\min(X_j,C_j)=k}
    }
    {
    \hat{U}_{\tau,n}(k)
    }
    - \frac{ f_{*,\tau}^{0i}(k) }{ U_{\tau}(k) }\\
    &= \frac{
    \{ \sum_{j=1}^{n} 
    \mathbf{1}_{X_j \leq C_j} 
    \mathbf{1}_{Z_{X_j}=i} 
    \mathbf{1}_{\min(X_j,C_j)=k} \}
    U_{\tau}(k) - f_{*,\tau}^{0i}(k) \hat{U}_{\tau,n}(k)
    }
    {\hat{U}_{\tau,n}(k) U_{\tau}(k)}\\
    &= \bigg[ \frac{1}{\hat{U}_{\tau,n}(k) U_{\tau}(k)} \bigg]
    \frac{1}{n} \sum_{j=1}^{n} \{
    \mathbf{1}_{X_j \leq C_j} 
    \mathbf{1}_{Z_{X_j}=i} 
    \mathbf{1}_{\min(X_j,C_j)=k}U_{\tau}(k) - 
    f_{*,\tau}^{0i}(k) \mathbf{1}_{Y_j \leq k \leq \min(X_j,C_j)}
    \}.
\end{align*}
Define
\begin{equation*}
    H^{0i}_{\tau,k(j)} = 
    \mathbf{1}_{X_j \leq C_j} 
    \mathbf{1}_{Z_{X_j}=i} 
    \mathbf{1}_{\min(X_j,C_j)=k} U_{\tau}(k) - 
    f_{*,\tau}^{0i}(k) \mathbf{1}_{Y_j \leq k \leq \min(X_j,C_j)},
\end{equation*}
for $1 \leq j \leq n$ and
\begin{equation*}
    \mathbf{A}_{\tau,n} = 
    \text{diag}([\hat{U}_{\tau,n}(\Delta+1) U_{\tau}(\Delta+1)]^{-1},
    \ldots,
    [\hat{U}_{\tau,n}(\xi) U_{\tau}(\xi)]^{-1}).
\end{equation*}
Then,
\begin{equation*}
    \hat{\bm{\Lambda}}_{\tau,n}^{0i} - \bm{\Lambda}^{0i}_{\tau} =
    \mathbf{A}_{\tau,n} \frac{1}{n} \sum_{j=1}^{n}
    \begin{bmatrix}
    H^{0i}_{\tau,\Delta+1(j)}\\
    \vdots\\
    H^{0i}_{\tau,\xi(j)}
    \end{bmatrix},
\end{equation*}
or, letting $\mathbf{H}^{0i}_{\tau,(j)} = 
(H^{0i}_{\tau,\Delta+1(j)}, \ldots, H^{0i}_{\tau,\xi(j)})^\top$ denote independent
and identically distributed random vectors, we have compactly
\begin{equation*}
    \hat{\bm{\Lambda}}_{\tau,n}^{0i} - \bm{\Lambda}^{0i}_{\tau} =
    \mathbf{A}_{\tau,n} \frac{1}{n} \sum_{j=1}^{n} \mathbf{H}^{0i}_{\tau,(j)}.
\end{equation*}
It is noteworthy the components of $\mathbf{H}^{0i}_{\tau,(j)}$ are 
uncorrelated. More specifically,
\begin{equation}
    \text{Cov}[H^{0i}_{\tau,k(j)}, H^{0i}_{\tau,k'(j)}]
    = 
    \begin{cases}
    U_{\tau}(k)f^{0i}_{*,\tau}(k)[U_{\tau}(k) - f^{0i}_{*,\tau}(k)], & k = k'\\
    0, & k \neq k'.
    \end{cases}
    \label{eq:cov_H}
\end{equation}
To see this, first notice the indicator functions 
$\mathbf{1}_{X_j \leq C_j} 
\mathbf{1}_{Z_{X_j}=i} 
\mathbf{1}_{\min(X_j,C_j)=k}$ and
$\mathbf{1}_{Y_j \leq k \leq \min(X_j,C_j)}$ are Bernoulli random variables 
with probability parameters $f^{0i}_{*,\tau}(k)$ and $U_{\tau}(k)$, 
respectively.  Hence,
\begin{align*}
    \mathbf{E} H^{0i}_{\tau,k(j)} &=
    \mathbf{E} 
    \mathbf{1}_{X_j \leq C_j} 
    \mathbf{1}_{Z_{X_j}=i} 
    \mathbf{1}_{\min(X_j,C_j)=k}
    U_{\tau}(k) -
    f^{0i}_{*,\tau}(k)
    \mathbf{E}
    \mathbf{1}_{Y_j \leq k \leq \min(X_j,C_j)}\\
    &= f^{0i}_{*,\tau}(k) U_{\tau}(k) - f^{0i}_{*,\tau}(k) U_{\tau}(k)\\
    &=0.
\end{align*}
Therefore,
\begin{align*}
    &\text{Cov}[H^{0i}_{\tau,k(j)}, H^{0i}_{\tau,k'(j)}]\\
    ={}&
    \mathbf{E} H^{0i}_{\tau,k(j)} H^{0i}_{\tau,k'(j)}\\
    ={}& \mathbf{E} \{ 
    \mathbf{1}_{X_j \leq C_j} \mathbf{1}_{Z_{X_j}=i} 
    \mathbf{1}_{\min(X_j,C_j)=k} U_{\tau}(k) - 
    f_{*,\tau}^{0i}(k) \mathbf{1}_{Y_j \leq k \leq \min(X_j,C_j)}
    \}\\
    & \times
    \{
    \mathbf{1}_{X_j \leq C_j} 
    \mathbf{1}_{Z_{X_j}=i} 
    \mathbf{1}_{\min(X_j,C_j)=k'} U_{\tau}(k') - 
    f_{*,\tau}^{0i}(k') \mathbf{1}_{Y_j \leq k' \leq \min(X_j,C_j)}
    \}\\
    ={}& 
    U_{\tau}(k)U_{\tau}(k') 
    \mathbf{E} 
    \mathbf{1}_{X_j \leq C_j} \mathbf{1}_{Z_{X_j}=i} 
    \mathbf{1}_{\min(X_j,C_j)=k}
    \mathbf{1}_{X_j \leq C_j} \mathbf{1}_{Z_{X_j}=i} 
    \mathbf{1}_{\min(X_j,C_j)=k'}\\
    &- U_{\tau}(k) f^{0i}_{*,\tau}(k') \mathbf{E} 
    \mathbf{1}_{X_j \leq C_j} \mathbf{1}_{Z_{X_j}=i} 
    \mathbf{1}_{\min(X_j,C_j)=k}
    \mathbf{1}_{Y_j \leq k' \leq \min(X_j,C_j)}\\
    &- U_{\tau}(k') f^{0i}_{*,\tau}(k) \mathbf{E} 
    \mathbf{1}_{X_j \leq C_j} \mathbf{1}_{Z_{X_j}=i} 
    \mathbf{1}_{\min(X_j,C_j)=k'}
    \mathbf{1}_{Y_j \leq k \leq \min(X_j,C_j)}\\
    &+ f^{0i}_{*,\tau}(k) f^{0i}_{*,\tau}(k')
    \mathbf{E} \mathbf{1}_{Y_j \leq k \leq \min(X_j,C_j)} 
    \mathbf{1}_{Y_j \leq k' \leq \min(X_j,C_j)}.
\end{align*}
We proceed by cases.

Case 1: $k = k'$.

Working through each expectation in 
$\text{Cov}[H^{0i}_{\tau,k(j)}, H^{0i}_{\tau,k'(j)}]$,
we have
\begin{align*}
    &\mathbf{E} 
    \mathbf{1}_{X_j \leq C_j} \mathbf{1}_{Z_{X_j}=i} 
    \mathbf{1}_{\min(X_j,C_j)=k}
    \mathbf{1}_{X_j \leq C_j} \mathbf{1}_{Z_{X_j}=i} 
    \mathbf{1}_{\min(X_j,C_j)=k'}\\
    ={}&
    \mathbf{E} 
    \mathbf{1}_{X_j \leq C_j} \mathbf{1}_{Z_{X_j}=i} 
    \mathbf{1}_{\min(X_j,C_j)=k}\\
    ={}& f^{0i}_{*,\tau}(k),
\end{align*}
\begin{align*}
    &\mathbf{E} 
    \mathbf{1}_{X_j \leq C_j} \mathbf{1}_{Z_{X_j}=i} 
    \mathbf{1}_{\min(X_j,C_j)=k}
    \mathbf{1}_{Y_j \leq k' \leq \min(X_j,C_j)}\\
    ={}& \mathbf{E} 
    \mathbf{1}_{X_j \leq C_j} \mathbf{1}_{Z_{X_j}=i} 
    \mathbf{1}_{\min(X_j,C_j)=k'}
    \mathbf{1}_{Y_j \leq k \leq \min(X_j,C_j)}\\
    ={}& \mathbf{E}
    \mathbf{1}_{X_j \leq C_j} \mathbf{1}_{Z_{X_j}=i} 
    \mathbf{1}_{\min(X_j,C_j)=k}
    \mathbf{1}_{Y_j \leq k \leq \min(X_j,C_j)}\\
    ={}& \mathbf{E}
    \mathbf{1}_{X_j \leq C_j} \mathbf{1}_{Z_{X_j}=i} 
    \mathbf{1}_{\min(X_j,C_j)=k}\\
    ={}& f^{0i}_{*,\tau}(k),
\end{align*}
and
\begin{equation*}
    \mathbf{E} \mathbf{1}_{Y_j \leq k \leq \min(X_j,C_j)} 
    \mathbf{1}_{Y_j \leq k' \leq \min(X_j,C_j)}
    =
    \mathbf{E} \mathbf{1}_{Y_j \leq k \leq \min(X_j,C_j)}
    =
    U_{\tau}(k).
\end{equation*}
Thus,
\begin{equation*}
    \text{Cov}[H^{0i}_{\tau,k(j)}, H^{0i}_{\tau,k'(j)}]
    =
    U_{\tau}(k) f^{0i}_{*, \tau}(k) [
    U_{\tau}(k) - f^{0i}_{*, \tau}(k)
    ].
\end{equation*}

Case 2: $k \neq k'$.

Working through each expectation in 
$\text{Cov}[H^{0i}_{\tau,k(j)}, H^{0i}_{\tau,k'(j)}]$,
we have
\begin{equation*}
    \mathbf{E} 
    \mathbf{1}_{X_j \leq C_j} \mathbf{1}_{Z_{X_j}=i} 
    \mathbf{1}_{\min(X_j,C_j)=k}
    \mathbf{1}_{X_j \leq C_j} \mathbf{1}_{Z_{X_j}=i} 
    \mathbf{1}_{\min(X_j,C_j)=k'}
    = 0,
\end{equation*}
\begin{align*}
    \mathbf{E} &
    \mathbf{1}_{X_j \leq C_j} \mathbf{1}_{Z_{X_j}=i} 
    \mathbf{1}_{\min(X_j,C_j)=k}
    \mathbf{1}_{Y_j \leq k' \leq \min(X_j,C_j)}\\
    &=
    \begin{cases}
    \Pr(X_j \leq C_j, Z_{X_j} = i, \min(X_j,C_j) = k, Y_j \leq k'),
    & k > k'\\
    0, & k < k',
    \end{cases}
\end{align*}
\begin{align*}
    \mathbf{E} &
    \mathbf{1}_{X_j \leq C_j} \mathbf{1}_{Z_{X_j}=i} 
    \mathbf{1}_{\min(X_j,C_j)=k'}
    \mathbf{1}_{Y_j \leq k \leq \min(X_j,C_j)}\\
    &=
    \begin{cases}
    0, & k > k'\\
    \Pr(X_j \leq C_j, Z_{X_j} = i, \min(X_j,C_j) = k', Y_j \leq k),
    & k < k',
    \end{cases}
\end{align*}
and
\begin{align*}
    &\mathbf{E} \mathbf{1}_{Y_j \leq k \leq \min(X_j,C_j)} 
    \mathbf{1}_{Y_j \leq k' \leq \min(X_j,C_j)}\\
    ={}&
    \Pr(Y_j \leq k \leq \min(X_j, C_j), Y_j \leq k' \leq \min(X_j,C_j)).
\end{align*}
Thus,
\begin{align*}
    \text{Cov}&[H^{0i}_{\tau,k(j)}, H^{0i}_{\tau,k'(j)}]
    =f^{0i}_{*,\tau}(\min(k,k')) \bigg{\{}\\
    &-U_{\tau}(\max(k,k'))
    \Pr(X_j \leq C_j, Z_{X_j} = i, 
    \min(X_j, C_j) = \max(k,k'), 
    Y_j \leq \min(k,k'))\\
    &+ f^{0i}_{*,\tau}(\max(k,k'))
    \Pr(Y_j \leq k \leq \min(X_j, C_j), Y_j \leq k' \leq \min(X_j,C_j))
    \bigg{ \} }.
\end{align*}
However, because of the independence between $Y$ and $(X, Z_X)$,
\begin{align*}
    U_{\tau}(\max(k,k')) 
    &= \Pr(Y_j \leq \max(k,k') \leq \min(X_j,C_j))\\
    &= \Pr(Y \leq \max(k,k'), 
    X \geq \max(k,k'), 
    C \geq \max(k,k') \mid Y \leq X)\\
    &= 
    \{ \Pr(Y \leq \max(k,k') \leq C) \Pr(X \geq \max(k,k')) \} / \alpha,
\end{align*}
\begin{align*}
    \Pr & (X_j \leq C_j, Z_{X_j} = i, 
    \min(X_j, C_j) = \max(k,k'), 
    Y_j \leq \min(k,k'))\\
    &= \Pr(C \geq \max(k,k'), 
    Z_X = i, X = \max(k,k'), 
    Y \leq \min(k,k') \mid Y \leq X)\\
    &= \{ \Pr(X = \max(k,k'), 
    Z_X = i) \Pr(Y \leq \min(k,k'), 
    C \geq \max(k,k')) \}
    / \alpha,
\end{align*}
\begin{align*}
    f^{0i}_{*,\tau}(\max(k,k'))
    &= \Pr(X = \max(k,k'), C \geq \max(k,k'), Z_x = i \mid Y \leq X)\\
    &= \{ \Pr(X = \max(k,k'), Z_X = i) \Pr(Y \leq \max(k,k') \leq C) \} 
    / \alpha,
\end{align*}
and
\begin{align*}
    \Pr & (Y_j \leq k \leq \min(X_j, 
    C_j), Y_j \leq k' \leq \min(X_j,C_j))\\
    &= \Pr(Y \leq \min(k,k'), 
    C \geq \max(k,k'), X \geq \max(k,k') \mid Y \leq X)\\
    &= \{ \Pr(Y \leq \min(k,k'), 
    C \geq \max(k,k')) \Pr(X \geq \max(k,k')) \} / \alpha
\end{align*}
Therefore, 
\begin{align*}
    U_{\tau} & (\max(k,k'))
    \Pr(X_j \leq C_j, 
    Z_{X_j} = i, \min(X_j, C_j) = \max(k,k'), Y_j \leq \min(k,k'))\\
    &=f^{0i}_{*,\tau}(\max(k,k'))
    \Pr(Y_j \leq k \leq \min(X_j, C_j), Y_j \leq k' \leq \min(X_j,C_j)),
\end{align*}
and so $\text{Cov}[H^{0i}_{\tau,k(j)}, H^{0i}_{\tau,k'(j)}] = 0$ when 
$k \neq k'$. This confirms \eqref{eq:cov_H}.  Now define the diagonal matrix
\begin{equation*}
\mathbf{D}^{0i}_{\tau} = \text{diag}
\begin{bmatrix}
U_{\tau}(\Delta+1)f^{0i}_{*,\tau}(\Delta+1)
    [U_{\tau}(\Delta+1) - f^{0i}_{*,\tau}(\Delta+1)]\\
    \vdots\\
    U_{\tau}(\xi)f^{0i}_{*,\tau}(\xi)[U_{\tau}(\xi) - f^{0i}_{*,\tau}(\xi)]
\end{bmatrix}
\end{equation*}
and
\begin{equation*}
    \bar{\mathbf{H}}_{\tau,n}^{0i} 
    = \frac{1}{n} \sum_{j=1}^{n} \mathbf{H}^{0i}_{\tau,(j)}.
\end{equation*}
By the multivariate Central Limit Theorem 
\citep[Theorem 8.21, pg. 61]{lehmann_1998}, therefore,
\begin{equation*}
    \sqrt{n}( \bar{\mathbf{H}}_{\tau,n}^{0i} - \bm{0})
    \overset{\mathcal{L}}{\longrightarrow} N( \bm{0}, \mathbf{D}^{0i}_{\tau}),
    \text{ as } n \rightarrow \infty.
\end{equation*}
Next, define 
$\mathbf{V}_{\tau} = 
\text{diag}(U_{\tau}(\Delta+1)^{-2}, \ldots, U_{\tau}(\xi)^{-2})$.
By \citet[][Lemma 1]{lautier_2023}, 
$\mathbf{A}_{\tau,n} \overset{\mathcal{P}}{\longrightarrow} \mathbf{V}_{\tau}$,
as $n \rightarrow \infty$.  Thus, by the multivariate version of Slutsky's
Theorem \citep[Theorem 5.1.6, pg. 283]{lehmann_1998},
\begin{equation*}
    \sqrt{n}( \mathbf{A}_{\tau,n} \bar{\mathbf{H}}_{\tau,n}^{0i} )
    \overset{ \mathcal{L} }{ \longrightarrow }
    N( 
    \bm{0}, \mathbf{V}_{\tau} \mathbf{D}^{0i}_{\tau} \mathbf{V}_{\tau}^\top
    ),
    \text{ as } n \rightarrow \infty.
\end{equation*}
We may complete the proof by observing 
$\mathbf{V}_{\tau} \mathbf{D}^{0i}_{\tau} \mathbf{V}_{\tau}^\top
= \bm{\Sigma}^{0i}$ and
$\mathbf{A}_{\tau,n} \bar{\mathbf{H}}_{\tau,n}^{0i} 
= \hat{\bm{\Lambda}}^{0i}_{\tau,n} - \bm{\Lambda}_{\tau}^{0i}$.
\end{proof}

\begin{proof}[Proof of Lemma~\ref{cor:haz_ci}]
The classical method dictates first finding a $(1 - \theta)$\% confidence 
interval on a log-scale and then converting back to a standard-scale to ensure
the estimated confidence interval for the hazard rate, which is a probability,
remains in the interval $(0,1)$.  By an application of the Delta Method
\citep[Theorem 8.12, pg. 58]{lehmann_1998}, we have for
$x \in \{\Delta + 1, \ldots, \xi\}$ and $i = 1, 2$,
\begin{equation*}
\sqrt{n} \big( 
\ln \hat{\lambda}_{\tau, n}^{0i}(x) - \ln \lambda^{0i}_{\tau}(x)
\big)
\overset{\mathcal{L}}{\longrightarrow}
N \bigg( 
0,
\frac{f_{*,\tau}^{0i}(x)\{ U_{\tau}(x) - f_{*,\tau}^{0i}(x)\}}
{U_{\tau}(x)^3}
\frac{1}{\lambda^{0i}_{\tau}(x)^2}
\bigg).
\end{equation*}
The result follows from \eqref{eq:csh_fc}, the Continuous Mapping
Theorem \citep[Theorem 5.2.5, pg. 249]{nitis_2000}, the pivotal approach
\citep[\S 9.2.2]{nitis_2000}, and converting back to the standard scale.
\end{proof}

%
%

\section{Large Sample Simulation Study}
\label{sec:sim_study}

We present a simulation study in support of Proposition~\ref{thm:asymH} and
Lemma~\ref{cor:haz_ci}.  Let the true distribution for the lifetime random
variable $X$ and bivariate distribution of $(X, Z_X)$ be as in Table
\ref{tab:X_probs}.  The column $p(x)$ denotes the probability of event type 1
given an event at time $X$. This allows us to populate the joint distribution 
for $\Pr(X=x,Z_X=i)$ for $i = 1, 2$.  The cause-specific hazard rates then 
follow from \eqref{eq:csh}, and we also report the all-cause hazard rate in the
final column. Notice that, for each $x$,
\begin{equation*}
    p(x) = \frac{ \lambda^{01}(x) }{\lambda^{01}(x) + \lambda^{02}(x) }.
\end{equation*}

For the truncation random variable, we assume $Y$ is discrete uniform with
sample space $\mathcal{Y} \in \{1, 2, 3, 4, 5\}$.  This results in 
$\alpha = 0.864$.  For the purposes of the simulation, we further assume 
$\tau = 5$.  We use the simulation procedure of \citet{beyersmann_2009} but
modified for random truncation.  Specifically,
\begin{enumerate}
    \item Simulate the truncation time, $Y$.
    \item Set the censoring time to be $Y + \tau$.
    \item Simulate the event time, $X$.
    \item Simulate a Bernoulli event with probability $p(x)$ to determine if
    the event $X$ was caused by type 1 with probability $p(x)$ or type 2 with 
    probability $1-p(x)$.
\end{enumerate}

We simulated $n = 10{,}000$ lifetimes using the above algorithm.  We then 
tossed any observations that were truncated (i.e., $Y_j > X_j$, for 
$j = 1, \ldots, n$).  This left a sample of competing risk events subject to 
censoring, which would be the same incomplete data conditions as a trust of 
securitized loans.  We then used the results of Section 
\ref{subsec:stat_results} to
estimate $\hat{f}^{0i}_{*, \tau, n}(x)$, $\hat{U}_{\tau, n}(x)$, and 
$\hat{\lambda}^{0i}_{\tau,n}(x)$ for $i = 1, 2$ and $x \in \{1, \ldots, 10\}$
over $r = 1{,}000$ replicates.

To validate the asymptotic results of Proposition~\ref{thm:asymH}, we compare the
empirical covariance matrix against the derived asymptotic covariance matrix, 
$\bm{\Sigma}^{0i}$, by examining estimates of the confidence intervals using
Lemma \ref{cor:haz_ci}.  Figure \ref{fig:sim_study} presents the results 
for the cause-specific hazard rate for cause 01 and 02, respectively.  The
empirical estimates and 95\% confidence intervals are indistinguishable from
the true quantities using Proposition~\ref{thm:asymH} and estimated quantities 
using Proposition~\ref{thm:asymH} but replacing all quantities with their
respective estimates from Section \ref{subsec:stat_results}.  This agreement
further confirms Proposition~\ref{thm:asymH}.

\begin{table}[t!]
\caption{
\footnotesize{
\textbf{Simulation Study Lifetime of Interest Probabilities.}\newline
The true probabilities of the lifetime random variable, $X$, for the
simulation study results of Figure~\ref{fig:sim_study}.  The probabilities
$p(x)$ and $\Pr(X=x)$ for $x \in \{1, \ldots, 10\}$ are selected at onset,
and the remaining probabilities in this table may be derived from these
quantities.  Not summarized here is the truncation random variable, $Y$,
which was assumed to be discrete uniform over the integers
$\{1, \ldots, 5\}$.
}
}
{
\begin{adjustbox}{max width=\textwidth}
\begin{tabular}{cccccccc}
$p(x)$ & $X$ & $\Pr(X=x)$ & $\Pr(X=x,Z_x=1)$ & $\Pr(X=x,Z_x=2)$ & $\lambda^{01}(x)$
         & $\lambda^{02}(x)$ & $\lambda(x)$\\
         \toprule
            0.66&1&0.04&0.026&0.014&0.026&0.014&0.04\\
            0.20&2&0.06&0.012&0.048&0.013&0.050&0.06\\
            0.45&3&0.10&0.045&0.055&0.050&0.061&0.11\\
            0.87&4&0.14&0.122&0.018&0.152&0.023&0.18\\
            0.20&5&0.09&0.018&0.072&0.027&0.109&0.14\\
            0.81&6&0.06&0.049&0.011&0.085&0.020&0.11\\
            0.05&7&0.14&0.007&0.133&0.014&0.261&0.27\\
            0.78&8&0.18&0.140&0.040&0.379&0.107&0.49\\
            0.25&9&0.07&0.018&0.053&0.092&0.276&0.37\\
            0.42&10&0.12&0.050&0.070&0.420&0.580&1.00\\
        \bottomrule
\end{tabular}
\end{adjustbox}
}
\phantomsection    \label{tab:X_probs}
\end{table}

\begin{figure}[t!]
    \centering
    \includegraphics[width=\textwidth]{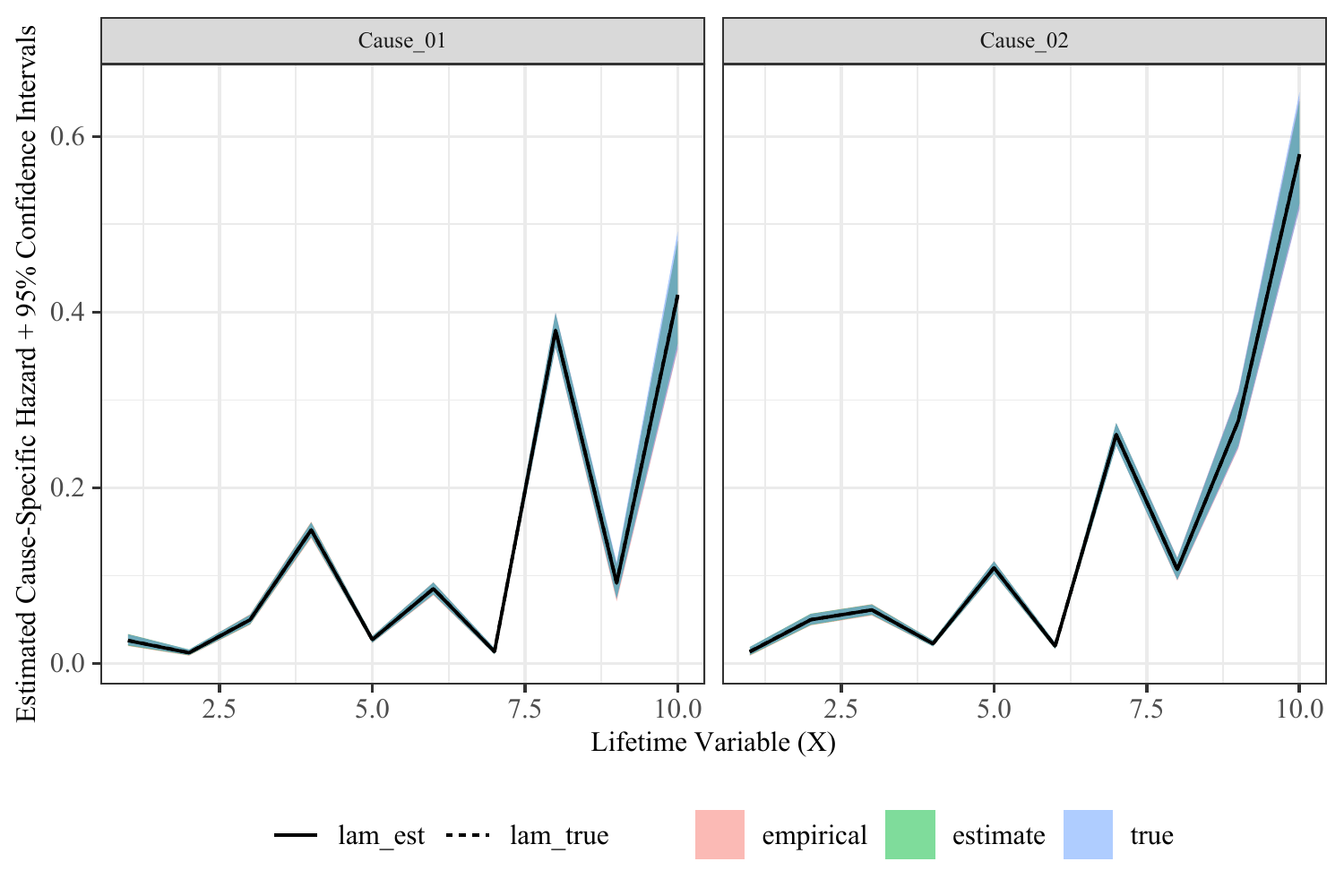}
        \caption{
\footnotesize{
    \textbf{Simulation Study Results.}\newline
	A comparison of true $\lambda^{0i}_{\tau}(x)$ (\texttt{lam\_true}) and
    estimated $\hat{\lambda}^{0i}_{\tau,n}(x)$ (\texttt{lam\_est}),
    including confidence intervals,
    for the distribution in Table \ref{tab:X_probs} and $i = 1, 2$. The ``true"
    values are from Proposition~\ref{thm:asymH} and Lemma \ref{cor:haz_ci}.
    The ``estimate" values use the formulas from Proposition~\ref{thm:asymH} and
    Lemma \ref{cor:haz_ci} but replace the true values with the estimates
    from Section \ref{subsec:stat_results} calculated from the simulated data.
    The ``empirical" values are empirical confidence interval and mean 
    calculations directly from the simulated data.  All three quantities are
    indistinguishable for $n = 10{,}000$ and 1,000 replicates, which
    indicates the asymptotic properties hold in this instance.
}
}
\phantomsection \label{fig:sim_study}
\end{figure}

\section{Determination of Loan Outcome}
\label{subsec:loan_algo}

The detail of the loan-level data is extensive, but it remains up to the data
analyst to use the provided fields to determine the outcome of an individual 
loan (see \citet{cfr_229} for detail on available field names). To do so, we
aggregate each month of active trust data into a single source file. This
allows us to review each bond's monthly outstanding principal balance, monthly 
payment received from the borrower, and the portion of each monthly payment 
applied to principal.

Our algorithm to determine a loan outcome proceeds as
follows.  For each remaining bond after the filtering of Section 
\ref{subsec:loan_filter}, we extract three vectors, each of which was the same
length as the number of months a trust was active and paying. The first vector
represents the ordered monthly balance, the second is the ordered monthly
payments, and the third is the ordered monthly amount of payment applied to 
principal. We then consider a loan to be repaid if the sum total principal 
received was greater than the outstanding loan balance as of the first month 
the trust was actively paying. In this case, the timing of a repayment is set 
to be the first month with a zero outstanding principal balance. Note that we 
do not differentiate between a prepayment or naturally scheduled loan 
amortization; i.e., all repayments have been treated as a ``non-default". If 
the sum total principal received is less than the first month's outstanding
loan balance, we then consider a loan outcome to be either right-censored or
defaulted. To make this determination, we search the monthly payments 
received vector for three consecutive zeros (i.e., three straight months of
missed payments).  If we find three consecutive missed payments, we assume 
the loan to be defaulted with a time-of-default set to be the month in which
the first of three zeros is observed.
If we do not find three consecutive months of 
missed payments, the loan is assumed to be a right-censored observation
and assigned
an event time as of the last month the trust was actively paying.  For the
pseudo-code of this algorithm, see Figure~\ref{fig:algo}.  Please contact the
corresponding author for further details.

\begin{figure}[t!]
\begin{algorithmic}[1]
\State $B \gets \texttt{bond\_data}$ 
\Comment{bond\_data is a row of the loan performance data}
\State $\texttt{bal\_vec} \gets \text{each month's sequential outstanding principal balance}$
\State $\texttt{pmt\_vec} \gets \text{each month's sequential actual payment}$
\State $\texttt{prc\_vec} \gets \text{each month's sequential payment applied to principal}$
\State $\texttt{init\_bal} \gets \text{current balance as of the first trust month}$
\State $\texttt{paid\_princ} \gets \texttt{sum}( \texttt{prc\_vec} )$
\Comment{plus \$10 pad to avoid odd tie behavior}
\If{$\texttt{paid\_princ} >= \texttt{init\_bal}$}
      \State $D = 0$
      \State $R = 1$
      \State $C = 0$
      \State $X \gets$ location of first zero in $\texttt{bal\_vec}$
      \Comment{loan repaid}
    \Else
      \State $z \gets$ starting time of three consecutive zero payments in $\texttt{pmt\_vec}$
        \If{$z$ empty}
                \State $D = 0$
                \State $R = 0$
                \State $C = 1$
                \State $X \gets$ length of $\texttt{pmt\_vec}$
                \Comment{loan censored}
            \Else
                \State $D = 1$
                \State $R = 0$
                \State $C = 0$
                \State $X \gets z$
                \Comment{loan defaults}
        \EndIf
    \EndIf
\end{algorithmic}
\caption{
\footnotesize{
    \textbf{Determination of Loan Outcome.}\newline
	We first extract three vectors, each of which is the same
	length as the number of months the trust was active and paying.
	The first vector (\texttt{bal\_vec}) represents the ordered monthly
	balance, the second (\texttt{pmt\_vec}) is the ordered monthly
	payments, and the third (\texttt{prc\_vec}) is the ordered monthly
	amount of payment applied to principal. We consider a loan to
	be repaid if the sum total principal received is greater than the
	outstanding loan balance as of the first month the trust was actively
	paying. In this case, the timing of a repayment is set to be the first
	month with a zero outstanding principal balance. If the sum total
	principal received is less than the first month's outstanding loan
	balance, we consider a loan outcome to be either right-censored
	or defaulted. To make this determination, we search the monthly payments 
	received vector for three consecutive zeros (i.e., three straight months
	of missed payments).  If we find three consecutive missed payments, we
	assume the loan to be defaulted with a time-of-default set to be the
	month in which the first of three zeros is observed.  If we do not find
	three consecutive months of missed payments, the loan is assumed to be a
	right-censored observation and assigned an event time as of the last month
	the trust was actively paying.
}
}
\phantomsection \label{fig:algo}
\end{figure}

\section{Lifetime Risk-Adjusted Return}
\label{subsec:cer}

We present an expansion of the actuarial methods in
Section~\ref{subsec:lend_prof} to consider the full remaining lifetime of a
loan rather than assuming a prepayment in the next month.
Denote the risk-adjusted rate of return for a loan in risk band $a$ as
$\rho_a$.  Given reliable estimates of borrower default and prepayment
probabilities, such as those in Section~\ref{subsec:stat_results},
we may estimate $\rho_a$ for a given loan in risk band $a$.  In particular, we 
may estimate $\rho_a$ for each month a loan is still active and paying to find
a \textit{conditional risk-adjusted rate of return} over a loan's full
remaining lifetime.\footnote{
Contrast this with Section~\ref{subsec:lend_prof}, in which we calculate a
one-month risk-adjusted return.  
}
Pleasingly, $\rho_a$ equals the loan contract effective rate of return in the
event the future loan payments will proceed as scheduled with no uncertainty,
which we state formally in Proposition \ref{prop:rho}.

\begin{proposition}[\textit{Risk-Adjusted Rate of Return, No Payment
Uncertainty}] \label{prop:rho}
Suppose a loan is originated with an initial balance, $B$, a monthly rate of
interest, $r_a$, and a term of $\psi$ months.  Let
$\rho_{a \mid x}$ denote the risk-adjusted rate of return given the loan has 
survived to month $x$.  If the probability that all payments will follow the 
amortization schedule exactly is unity (i.e., no payment uncertainty), then
$\rho_{a \mid x} = r_a$ for all $x \in \{1, \ldots, \psi\}$.
\end{proposition}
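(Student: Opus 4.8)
The plan is to show that, under the no-uncertainty hypothesis, the equation defining the lifetime rate $\rho_{a\mid x}$ collapses to a purely deterministic present-value identity that the contract rate $r_a$ already satisfies, and then to invoke strict monotonicity of the underlying annuity function to conclude that $r_a$ is the \emph{unique} such rate. First I would record the defining relation for $\rho_{a\mid x}$: exactly as in the one-month case \eqref{eq:EPV_one}, it is the rate equating the expected present value of all remaining cash flows to the current outstanding balance $B_{a\mid x}$. When the probability of following the amortization schedule exactly is unity, we have $\lambda_{\tau,(a)}^{01}(k)=\lambda_{\tau,(a)}^{02}(k)=0$ for every remaining month $k$, so the default (recovery $R$) and prepayment branches each carry zero weight and the expected present value reduces to the deterministic stream of the remaining $\psi-x$ level payments of $P_a$ (the terminal balance $B_{a\mid\psi}=0$ contributing nothing):
\[
B_{a\mid x}=\sum_{k=1}^{\psi-x}\frac{P_a}{(1+\rho_{a\mid x})^{k}}.
\]

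Next I would bring in the standard amortization identity. For a level-payment loan with initial balance $B$, monthly rate $r_a>0$, and term $\psi$, the payment satisfies $B=P_a\,(1-(1+r_a)^{-\psi})/r_a$ with $P_a>0$, and the outstanding balance after $x$ months is the present value of the remaining payments discounted at the contract rate,
\[
B_{a\mid x}=\sum_{k=1}^{\psi-x}\frac{P_a}{(1+r_a)^{k}}.
\]
Equating the two displays and cancelling $P_a>0$ reduces the claim to the single-variable identity
\[
\frac{1-(1+\rho_{a\mid x})^{-m}}{\rho_{a\mid x}}=\frac{1-(1+r_a)^{-m}}{r_a},\qquad m:=\psi-x.
\]
I would then finish by a monotonicity argument: writing $g(r)=\sum_{k=1}^{m}(1+r)^{-k}$, each summand is strictly decreasing in $r$ on $(0,\infty)$, so $g$ is strictly decreasing and hence injective. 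Since $r_a$ is visibly a solution of $g(\rho_{a\mid x})=g(r_a)$, it is the \emph{only} solution, giving $\rho_{a\mid x}=r_a$; the argument runs identically for every $x\in\{1,\dots,\psi\}$.

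The main obstacle is less the algebra than pinning down the precise lifetime EPV definition of $\rho_{a\mid x}$ so that the no-uncertainty collapse is rigorous. Concretely, I must verify that setting the cause-specific hazards to zero genuinely removes the recovery and prepayment terms at \emph{every} node of the remaining payment tree, not merely the first; the clean way to do this is a telescoping check against the one-month building block \eqref{eq:EPV_one}, whose no-uncertainty form $B_{a\mid x}(1+r_a)=B_{a\mid x+1}+P_a$ is exactly the amortization recursion and chains to reproduce the annuity display above. I would also handle the boundary case $m=\psi-x=0$ (where both sides vanish and the rate is instead fixed by the single last surviving payment, or by convention), and state $r_a>0$ and $P_a>0$ explicitly, since strict monotonicity of $g$ — and therefore uniqueness — depends on them.
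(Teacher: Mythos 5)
Your proposal is correct and follows essentially the same route as the paper: both equate the outstanding balance $B_{a\mid x}$ to the present value of the remaining level payments discounted at $\rho_{a\mid x}$ and reduce the claim to matching annuity factors at $\rho_{a\mid x}$ and $r_a$ (the paper reaches $B_{a\mid x}$ via the retrospective accumulation formula, you via the equivalent prospective one). Your explicit monotonicity-and-uniqueness step for $g(r)=\sum_{k=1}^{m}(1+r)^{-k}$ is a welcome tightening of the paper's terse ``solve for $\rho_{a\mid x}$,'' but it does not change the underlying argument.
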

\begin{proof}
See the  Online Appendix~\ref{subsec:proof3}.
\end{proof}

We now formalize the estimation of $\rho_{a \mid x}$, as defined in
Proposition~\ref{prop:rho}.  For convenience of
notation, we will drop $a$ to denote the arbitrary risk band and assume the
proceeding calculations will be performed entirely within one risk band.
Assume we consider a loan with a $\psi$-month schedule.  Denote the current
age of a loan by $x$, $1 \leq x \leq \psi$.\footnote{
Depending on the impact of left-truncation and right-censoring, the recoverable
range of $X$ may not be the entire original loan termination schedule (see
Section~\ref{subsec:stat_results} for details).  In such an instance,
assumptions about the probability distribution may be necessary.  Assuming a
geometric right-tail (i.e., a constant hazard rate that follows the last
recoverable value) is common in survival
analysis \citep[Section 12.1]{klugman_2012}.  We will proceed as though the full
distribution is recoverable and allow readers to adjust as needed.
}
Let the cause-specific hazard rate for default
at time $x$ be denoted by $\lambda^{01}(x)$ and the cause-specific hazard rate
for repayment at time $x$ be denoted by $\lambda^{02}(x)$. Assuming no other 
causes for a loan termination, the all-cause hazard rate is then 
$\lambda(x) = \lambda^{01}(x) + \lambda^{02}(x)$. Further, recall 
\eqref{eq:haz_surv} and observe for $i = 1, 2$, $x \leq j \leq \psi$,
\begin{align*}
\Pr(X = j, Z_x = i) &= \frac{ \Pr(X = j, Z_x = i) }{ \Pr(X \geq x) }
\Pr(X \geq x)\\
&= \Pr(X = j, Z_x = i \mid X \geq x) \Pr(X \geq x)\\
&= \lambda^{0i}(j) \prod_{k=x}^{j-1} \{1 - \lambda(k)\}, 
\end{align*}
again with the convention $\prod_{k=x}^{x-1} \{1 - \lambda(k)\} = 1$. For
convenience, denote $p^{0i}_x(j) = \Pr(X = j, Z_j = i \mid X \geq x)$ for
$i = 1, 2$, $x \leq j \leq \psi$.  Hence,
\begin{equation*}
p^{0i}_x(j) =
\begin{cases}
\lambda^{0i}(x), & j = x\\
\lambda^{0i}(j) \prod_{k=x}^{j-1} \{1 - \lambda(k) \}, & j > x,
\end{cases}
\quad \quad
i = 1, 2.
\end{equation*}
One may verify $\sum_{j=x}^{\psi} \sum_{i=1}^{2} p^{0i}_x(j) = 1$ for every
$x$.\footnote{
It may be of help to review the numeric example of Table~\ref{tab:X_probs} in
 Online Appendix~\ref{sec:sim_study}.
}

We estimate $\rho_x$ as follows.  Let the scheduled amortization loan balance
of a consumer auto loan at month $x$, $1 \leq x \leq \psi$ be denoted 
by $B_x$, where $B_{\psi} = 0$.  Denote the scheduled monthly payment by $P$.  
If we denote the recovery of a defaulted consumer auto loan at month $x$, 
$1 \leq x \leq \psi$, by $R_x$, then the default matrix at loan age 
$x \leq \psi-1$ for the possible future default paths is
\begin{equation*}
    \mathbf{DEF}_{(\psi - x + 1) \times (\psi - x + 1)} =
    \begin{bmatrix}
    R_x & 0 & 0 & \ldots & 0 & 0\\
    P & R_{x+1} & 0 & \ldots & 0 & 0\\
    P & P & R_{x+2} & \ldots & 0 & 0\\
    \vdots & \vdots & \vdots & \ddots & \vdots & \vdots\\
    P & P & P & \ldots & R_{\psi-1} & 0\\
    P & P & P & \ldots & P & R_{\psi}
    \end{bmatrix}.
\end{equation*}
Note that row $1$ of $\mathbf{DEF}$ would be the cash flows assuming a default
at loan age $x$, which occurs with probability $p_{x}^{01}(x)$. Similarly, row
$2$ of $\mathbf{DEF}$ would be the cash flows assuming a default at loan age
$x + 1$, which occurs with estimated probability $p_{x}^{01}(x+1)$, and so on
and so forth.  In the same way, we can define the prepayment matrix at loan age
$x \leq \psi - 1$ as
\begin{equation*}
    \mathbf{PRE}_{(\psi - x + 1) \times (\psi - x + 1)} =
    \begin{bmatrix}
    B_x + P & 0 & 0 & \ldots & 0 & 0\\
    P & B_{x+1} + P & 0 & \ldots & 0 & 0\\
    P & P & B_{x+2} + P & \ldots & 0 & 0\\
    \vdots & \vdots & \vdots & \ddots & \vdots & \vdots\\
    P & P & P & \ldots & B_{\psi-1} + P & 0\\
    P & P & P & \ldots & P & P
    \end{bmatrix}.
\end{equation*}
As with defaults, row $1$ of $\mathbf{PRE}$ would be the cash flows assuming a
prepayment at loan age $x$, which occurs with estimated probability 
$p_{x}^{02}(x)$. Similarly, row $2$ of $\mathbf{PRE}$ would be the cash flows 
assuming a prepayment at loan age $x + 1$, which occurs with estimated 
probability $p_{x}^{02}(x+1)$, and so on and so forth.  Therefore, if we denote 
the $(\psi-x+1) \times 1$ dimensional discount vector assuming the unknown
monthly rate of $\rho_x$ as
\begin{equation*}
    \big( \bm{\nu}_x \big)^{\top} = 
    \begin{pmatrix} 
    (1 + \rho_x)^{-1} & (1 + \rho_x)^{-2} & \ldots & (1 + \rho_x)^{-(\psi-x+1)} 
    \end{pmatrix}^{\top},
\end{equation*}
and the $(\psi-x+1) \times 1$ dimensional cause-specific probability vector as
\begin{equation*}
    \big( \bm{p}^{0i}_x \big)^{\top} = 
    \begin{pmatrix} 
    p^{0i}_x(x) & p^{0i}_{x}(x+1) & \ldots & p^{0i}_{x}(\psi) 
    \end{pmatrix}^\top,
\end{equation*}
then the expected present value (EPV) of a loan at age $x \leq \psi - 1$ is
\begin{equation*}
    \text{EPV}_x=
    \big( \bm{p}^{01}_x \big)^\top
	\textbf{DEF}_x \bm{\nu}_x +
	\big( \bm{p}^{02}_x \big)^\top
	\textbf{PRE}_x \bm{\nu}_x.
\end{equation*}
Therefore, $\rho_x$ is the interest rate such that $B_x = \text{EPV}_x$; 
that is,
\begin{equation}
    \{\rho_x : B_x = \text{EPV}_x \}.
    \label{eq:rho_x}
\end{equation}
In words, $\rho_x$ represents the expected return realized by lending $B_x$ 
and taking into account the original monthly payments $P$ and default and
prepayment risk over the remaining lifetime of the loan.  We have 
$\rho_x \leq r$ for a given contract, with equality only in the circumstances
of Proposition \ref{prop:rho}.
Finally, we of course do note know the true distribution of $X$.  We do have
the estimators in \eqref{eq:csh_est}, however, and Proposition~\ref{thm:asymH}. 
Thus, we may estimate $\rho_x$ by replacing the cause-specific hazard rates
$\lambda^{0i}$ with the estimate in \eqref{eq:csh_est}.  For completeness, we
close this section with the following lemma.

\begin{Lemma}[$\hat{\rho}_{n,x}$ \textit{Asymptotic Properties}]
\label{cor:rho_hat}
Replace the cause-specific hazard rates in \eqref{eq:rho_x} with the estimators
from \eqref{eq:csh_est}.  Define the estimated risk-adjusted rate of return
over the remaining lifetime given a loan has survived to month $x$ as
$\hat{\rho}_{n, x}$.  Then,
\begin{equation*}
\hat{\rho}_{n, x}
\overset{\mathcal{P}}{\longrightarrow} \rho_x,
\text{ as } n \rightarrow \infty.
\end{equation*}
\end{Lemma}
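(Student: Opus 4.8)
The plan is to characterize $\rho_x$ as a continuous function of the cause-specific hazard rates and then combine the consistency from Proposition~\ref{thm:asymH} with the Continuous Mapping Theorem. First I would collect all hazard rates relevant to a loan of current age $x$ into one vector $\bm{\lambda} = (\lambda^{01}(x), \lambda^{02}(x), \ldots, \lambda^{01}(\psi), \lambda^{02}(\psi))$ and observe that each path probability $p^{0i}_x(j)$ defined in Section~\ref{subsec:cer} is a finite product-and-difference expression in the entries of $\bm{\lambda}$, hence a continuous (indeed polynomial) function of $\bm{\lambda}$. It then follows that $\text{EPV}_x$, assembled from $\bm{p}^{01}_x$, $\bm{p}^{02}_x$, $\mathbf{DEF}_x$, $\mathbf{PRE}_x$, and the discount vector $\bm{\nu}_x$, is jointly continuous in $(\rho, \bm{\lambda})$. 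Setting $g(\rho, \bm{\lambda}) = \text{EPV}_x(\rho, \bm{\lambda}) - B_x$, the quantity $\rho_x$ is precisely the root $g(\rho_x, \bm{\lambda}) = 0$ from \eqref{eq:rho_x}.

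The key structural fact I would establish is that $\text{EPV}_x$ is strictly decreasing in $\rho$ on $(-1, \infty)$. Indeed, $\text{EPV}_x$ is a nonnegative linear combination of the discount factors $(1+\rho)^{-k}$, $1 \leq k \leq \psi - x + 1$, whose coefficients are built from the nonnegative cash flows (recoveries $R_j$, balances $B_j$, and the strictly positive payment $P$) weighted by the path probabilities. Since each $(1+\rho)^{-k}$ is strictly decreasing in $\rho$ and the total weight is strictly positive (because $P > 0$), we obtain $\partial g / \partial \rho = \partial \text{EPV}_x / \partial \rho < 0$. Together with $\text{EPV}_x \to \infty$ as $\rho \to -1^+$ and $\text{EPV}_x \to 0 < B_x$ as $\rho \to \infty$, this shows the root $\rho_x$ exists and is unique. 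As $g$ is continuously differentiable in $(\rho, \bm{\lambda})$ with $\partial g / \partial \rho \neq 0$ at the root, the implicit function theorem supplies a continuous solution map $\rho_x = \phi(\bm{\lambda})$ on a neighborhood of the true hazard vector.

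To finish, I would note that $\hat{\rho}_{n,x}$ is by construction the root of $g(\cdot, \hat{\bm{\lambda}}_n) = 0$, where $\hat{\bm{\lambda}}_n$ stacks the estimators $\hat{\lambda}^{0i}_{\tau,n}(\cdot)$ from \eqref{eq:csh_est}; by the same monotonicity this root is well-defined once $\hat{\bm{\lambda}}_n$ is close enough to $\bm{\lambda}$, so that $\hat{\rho}_{n,x} = \phi(\hat{\bm{\lambda}}_n)$ with probability tending to one. Since Proposition~\ref{thm:asymH} gives $\hat{\bm{\lambda}}_n \overset{\mathcal{P}}{\longrightarrow} \bm{\lambda}$, applying the Continuous Mapping Theorem \citep[Theorem 5.2.5, pg. 249]{nitis_2000} to $\phi$ yields $\hat{\rho}_{n,x} = \phi(\hat{\bm{\lambda}}_n) \overset{\mathcal{P}}{\longrightarrow} \phi(\bm{\lambda}) = \rho_x$. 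The main obstacle I anticipate is the rigorous justification that the implicitly defined root depends continuously on $\bm{\lambda}$: this rests entirely on verifying the non-vanishing of $\partial g / \partial \rho$ and the existence-uniqueness of the root uniformly enough for $\hat{\rho}_{n,x}$ to be well-defined along the estimator sequence. Once strict monotonicity in $\rho$ is secured, the continuous-mapping step is routine.
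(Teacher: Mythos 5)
Your proposal is correct and follows the same route as the paper, which proves the lemma in one line by invoking Proposition~\ref{thm:asymH}, part $(i)$, together with the Continuous Mapping Theorem. The added value of your write-up is that you actually verify the hypothesis the paper leaves implicit---namely that the root $\rho_x$ of $B_x = \text{EPV}_x$ is a well-defined continuous function of the hazard vector, via strict monotonicity of $\text{EPV}_x$ in $\rho$ and the implicit function theorem---so your argument is a fully justified version of the paper's sketch rather than a different proof.
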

\begin{proof}
See the  Online Appendix~\ref{subsec:proof3}.
\end{proof}

\section{Proofs: Appendix~\ref{subsec:cer}}
\label{subsec:proof3}

\begin{proof}[Proof of Proposition~\ref{prop:rho}]
For a loan with initial balance, $B$, monthly interest rate, $r_a$, and initial
term of $\xi$, the monthly payment, $P$, is
\begin{equation*}
P = B \bigg[ \frac{1 - (1 + r_a)^{-\xi}}{r_a} \bigg]^{-1}.
\end{equation*}
Assume $x \in \{1, \ldots, \xi\}$.  The balance at month $x$, $B_x$ is
\begin{align}
B_x &= B(1 + r_a)^x - P \bigg[ \frac{(1+r_a)^x - 1}{r_a} \bigg] \nonumber\\
&= B(1 + r_a)^x 
- B \bigg[ \frac{1 - (1 + r_a)^{-\xi}}{r_a} \bigg]^{-1}
\bigg[ \frac{(1+r_a)^x - 1}{r_a} \bigg].
\label{eq:B_x}
\end{align}
Thus, $\rho_{a \mid x}$ is the rate such that the expected present value of the
future monthly payments equals $B_x$.  The payment stream is constant, however,
and so
\begin{align*}
B_x &= P \bigg[ \frac{1}{(1+\rho_{a \mid x})} + \cdots + 
\frac{1}{(1+\rho_{a \mid x})^{\xi-x}} \bigg]\\
&= B \bigg[ \frac{1 - (1 + r_a)^{-\xi}}{r_a} \bigg]^{-1}
\bigg[\frac{1 - (1 + \rho_{a \mid x})^{-(\xi-x)}}{\rho_{a \mid x}} \bigg].
\end{align*}
Use \eqref{eq:B_x} and solve for $\rho_{a \mid x}$ to complete the proof.
\end{proof}

\begin{proof}[Proof of Lemma~\ref{cor:rho_hat}]
The result follows by Proposition~\ref{thm:asymH}, part $(i)$ and the Continuous
Mapping Theorem \citep[Theorem 5.2.5, pg. 249]{nitis_2000}.
\end{proof}

\end{document}